\pdfoutput=1
\documentclass{article}
\usepackage{graphicx}
\usepackage{amsmath}
\usepackage{amssymb}
\usepackage{amsthm}
\usepackage{latexsym}

\newcommand{\mynewpage}{}

\newtheorem{thm}{Theorem}
\newtheorem{lem}[thm]{Lemma}

\newtheorem{ex}{Example}

\begin{document}

\begin{center}
{\Large Nonexistence of minimal-time solutions 
for some variations of the firing squad synchronization problem 
having simple geometric configurations}
\end{center}

\begin{center}
{\large Kojiro Kobayashi}
\end{center}
\vspace*{-2.0em}
\begin{center}
\texttt{kojiro@gol.com}
\end{center}

\begin{center}
{September 25, 2019}
\end{center}

\medskip

\noindent
{\bf Abstract}

\medskip

We prove nonexistence of minimal-time solutions 
for three variations of the firing squad synchronization problem 
(FSSP, for short).
Configurations of these variations are 
paths in the two-dimensional grid space 
having simple geometric shapes.
In the first variation a configuration is an L-shaped path 
such that the ratio of the length of horizontal line 
to that of the vertical line 
is fixed.  The general may be at any position.
In the second and the third variations a configuration is 
a rectangular wall such that the ratio of the length 
of the two horizontal walls to that of the two vertical walls is 
fixed.  
The general is at the left down corner 
in the second variation and may be at any position 
in the third variation.
We use the idea used in the proof of 
Yamashita et al's recent similar result 
for variations of FSSP with sub-generals.

\medskip

\noindent
{\it Keywords:} 
firing squad synchronization problem, 
optimal-time solution, 
nonexistence proof, 
sub-general, 
cellular automata, 
distributed computing

\mynewpage

\section{Introduction}
\label{section:introduction}

\subsection{The original firing squad synchronization problem}
\label{subsection:original_fssp}

The firing squad synchronization problem (FSSP) 
is an interesting puzzle in automata theory.
It was proposed by J. Myhill in 1957 and became 
widely known to researchers 
by an article by E. F. Moore (\cite{Moore}) 
that gave a concise description of the problem.
The problem is to design a finite automaton $A$ 
that satisfies certain conditions.  

$A$ has two input terminals, one from the left and another from the right.
It also has two output terminals, one to the left and another to the right. 
The value of an output terminal of $A$ at a time is 
the state of $A$ at that time.
We connect $n$ copies of $A$ to construct a network of 
finite automata as shown in Figure \ref{figure:fig000}.

\begin{figure}[htb]
\begin{center}
\includegraphics[scale=1.0]{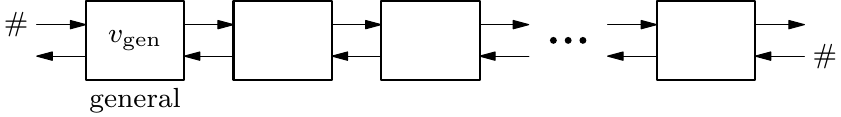}
\end{center}
\caption{A network of $n$ copies of a finite automaton $A$.}
\label{figure:fig000}
\end{figure}

Here $n$ ($\geq 1$) may be any natural number.
The symbol $\#$ is a special signal that means that 
there are no copies there.
We call the copies of $A$ in the network 
the {\it nodes}\/ of the network.
We call the leftmost node of the network its {\it general}\/.

The set of states of $A$ contains 
three different special states: 
the ``general'' state ${\rm G}$, 
the {\it quiescent}\/ state ${\rm Q}$ and 
the {\it firing}\/ state ${\rm F}$.
At time $0$ the general is in the general state ${\rm G}$ 
and all other nodes are in the quiescent state ${\rm Q}$.
Then the state of a node $v$ at a time $t$ ($\geq 0$) 
is uniquely determined by the state transition function of $A$.
We require that if the state of $A$ is ${\rm Q}$ and 
both of its two inputs are either ${\rm Q}$ or $\#$ 
at a time then the state of $A$ at the next time is ${\rm Q}$ again.
Intuitively a node in the quiescent state remains in that state 
until it is activated by an adjacent node that is in 
a nonquiescent state (that is, a state that is not ${\rm Q}$).

The FSSP is the problem to construct $A$ that satisfies the 
following condition: 
for any size $n$ of the network there is a time 
$t_{n}$ (that may depend on $n$) such that any node in the 
network enters the firing state ${\rm F}$ at the time 
$t_{n}$ for the first time.

\mynewpage

The problem is not difficult and it is a good example 
of application of the divide and solve strategy in the 
design of algorithms.
The problem has many variations and they have been 
studied mainly from two standpoints.
One is to find fast solutions, that is, solutions with 
small firing time ($t_{n}$, in the case of the original FSSP).
Another is to find small solutions, that is, solutions with 
small numbers of states.

The research of these problems is practically important 
because it gives distributed computing algorithms 
that synchronize large networks in small time and 
with small memory.
Moreover, it is important because FSSP and its variations 
are mathematical models of one of the most fundamental protocols 
in the design of distributed computing: 
to realize global synchronization using only local 
information exchanges.
By studying these problems we understand the 
theoretical limit of performance of 
algorithms that realize this fundamental protocol.

In this paper we concentrate on obtaining fast solutions for 
variations of FSSP, especially obtaining fastest solutions.
We explain basic notions and state the problems we consider.

\mynewpage

We call a network shown in Figure \ref{figure:fig000} 
a {\it configuration}\/ of FSSP.  For each $n$ ($\geq 1$), 
by $C_{n}$ we denote the configuration having $n$ nodes.
We denote the general of $C_{n}$ (that is, the leftmost node 
of $C_{n}$) by $v_{\rm gen}$.
For any node $v$ in $C_{n}$, any time $t$, and any finite automaton $A$, 
by ${\rm st}(v, t, C_{n}, A)$ we denote the state of $v$ at time $t$ 
under the assumption that all nodes in $C$ 
are copies of $A$.
Then, ${\rm st}(v, 0, C_{n}, A)$ is ${\rm G}$ or ${\rm Q}$ 
according as $v = v_{\rm gen}$ or not.
Moreover, $A$ is a solution if for any $n$ ($\geq 1$) 
there exists a time $t_{n}$ such that 
${\rm st}(v, t, C_{n}, A) \not= {\rm F}$ for any node $v$ 
and any time $t$ such that $0 \leq t < t_{n}$ 
and ${\rm st}(v, t_{n}, C_{n}, A) = {\rm F}$ for any node $v$.
We call the time $t_{n}$ mentioned in the condition 
the {\it firing time}\/ of the solution $A$ for $C_{n}$ and 
denote it by ${\rm ft}(C_{n}, A)$.

If we try to construct a solution of FSSP, we usually 
find a solution $A$ such that 
${\rm ft}(C_{n}, A) = 3n + O(\log n)$.
We are interested in finding faster solutions.
Concerning this, we know a very interesting result that 
there is a fastest solution $\tilde{A}$ 
among all solutions, that is, 
a solution $\tilde{A}$ satisfying the condition:  
\begin{equation}
(\forall A) (\forall C_{n}) [ {\rm ft}(C_{n}, \tilde{A}) 
\leq {\rm ft}(C_{n}, A) ],
\label{equation:eq000}
\end{equation}
where $A$ ranges over all solutions.
We call a solution $\tilde{A}$ that satisfies 
the condition (\ref{equation:eq000}) 
a {\it minimal-time solution}.
Minimal-time solutions were found by E. Goto \cite{Goto_1962}, 
A. Waksman \cite{Waksman_1966} and R. Balzer \cite{Balzer_1967}.
Their solutions also showed that 
the firing time ${\rm ft}(C_{n}, \tilde{A})$ 
of a minimal-time solution $\tilde{A}$ is 
$2n - 2$ for $n \geq 2$ and $1$ for $n = 1$.
It is certain that this discovery of minimal-time solutions 
triggered the long lasting research on FSSP.

\mynewpage

\subsection{Variations of FSSP and minimal-time solutions}
\label{subsection:variations}

Researchers have studied many variations of FSSP and 
tried to find fast and small solutions.
These variations include (one-way or two-way) rings, 
squares, rectangles, cuboids, 
(one-way or two-way) tori, 
Cayley graphs, 
regions in grid spaces, 
paths in grid spaces, directed graphs, and 
undirected graphs.
We refer the readers to 
\cite{
Goldstein_Kobayashi_SIAM_2005, 
Goldstein_Kobayashi_SIAM_2012,
Mazoyer_Survey, 
Napoli_Parente, 
Umeo_2005_Survey}
for surveys on these variations.

For each of these variations 
the problem to know whether it has minimal-time solutions 
or not is one of the most basic problems.
More specifically, for each variation for which we do not know 
minimal-time solutions, it is desirable to show a rigorous 
proof of their nonexistence.
For some of variations we know such proofs.
We show examples.

(1) In \cite{Mazoyer_1995} Mazoyer considered a very simple variation of 
FSSP such that there is only one configuration that consists of two nodes 
and the time for the two nodes to communicate is unbounded.
He proved that this variation has no minimal-time solutions.

(2) In \cite{Schmid_Worsch_IFIP_2004} Schmid and Worsch considered 
the variations of the original FSSP such that more than one 
general is allowed. They showed that for the variation such that 
generals are activated independently in time (the ``asynchronous'' 
case), there are no minimal-time solutions.

(3) In \cite{Yamashita_et_al_2014} Yamashita et al considered 
variations of the original FSSP such that a special state 
called ``sub-general'' is allowed.  For one variation they 
showed nonexistence of minimal-time solutions.
This work is closely related to our present paper and 
we will explain this result in more detail later.

\mynewpage

In this paper we consider the problem to prove 
nonexistence of minimal-time solutions only for variations 
that satisfy the following four conditions:
\begin{enumerate}
\item[(1)]  Each node in a configuration is either 
a general node or a non-general node.
(For example, there are no ``sub-general'' type nodes.)
\item[(2)]  Each configuration has exactly one general 
and it is activated at time $0$.
\item[(3)]  Information exchanges between nodes are synchronous 
and require exactly one unit time.
\item[(4)]  There are no restrictions on the modes of information exchanges 
between nodes. (For example, there is no restriction such as 
``at most one bit.'') 
\end{enumerate}
Intuitively, we consider variations of FSSP only with 
respect to shapes of configurations.
The above mentioned three results are excluded from 
our study.

\mynewpage

As an approach to prove nonexistence of minimal-time solutions 
of variations of FSSP, there has been one which we call 
a ``complexity-theoretical approach.'' 
We will give a brief survey on this approach and its results.
We need to define one notion.

Let $\Gamma$ be a variation of FSSP that has a solution.
For each configuration $C$ of $\Gamma$ we define the value 
\[
{\rm mft}_{\Gamma}(C) = \min \{ {\rm ft}(C, A) ~|~ 
\text{$A$ is a solution of $\Gamma$} \}
\]
and call it {\it the minimum firing time}\/ of $C$.
This value is well-defined because $\Gamma$ has at least 
one solution.
Using this notion we can define a {\it minimal-time solution}\/ 
of $\Gamma$ as a solution $\tilde{A}$ of $\Gamma$ 
such that 
\begin{equation}
(\forall C) [ {\rm ft}(C, \tilde{A}) = {\rm mft}_{\Gamma}(C) ].
\label{equation:eq012}
\end{equation}

Now we have two definitions of {\it minimal-time solutions}\/ 
of $\Gamma$, one using (\ref{equation:eq000}) and another using 
(\ref{equation:eq012}).
However we can easily prove that these two definitions 
are equivalent (\cite{Kobayashi_TCS_2014}).

\mynewpage

The key idea for the complexity-theoretical approach is 
the statement:
\begin{eqnarray*}
\lefteqn{\text{$\Gamma$ has a minimal-time solution}} \\
& \Longrightarrow & \text{${\rm mft}_{\Gamma}(C)$ can be 
computed in polynomial time.}
\end{eqnarray*}
This is true because ${\rm mft}(C)$ is the time when copies of 
a minimal-time solution $\tilde{A}$ placed on $C$ fire and 
this time can be computed in polynomial time by simulating 
behaviors of copies of $\tilde{A}$ placed on $C$.
Next we select a statement $\mathcal{A}$ that we believe 
to be false and prove:
\[
\text{${\rm mft}_{\Gamma}(C)$ can be computed in 
polynomial time} 
\Longrightarrow \mathcal{A}.
\]
If we succeed in proving this, we have proved:
\[
\text{$\Gamma$ has a minimal-time solution} 
\Longrightarrow \mathcal{A}, 
\]
\noindent
and this is a convincing evidence that 
$\Gamma$ has no minimal-time solutions 
for those who believe that $\mathcal{A}$ is false.

\mynewpage

The author and a coauthor obtained four results 
with this approach.
\begin{enumerate}
\item[$\bullet$] If 2PATH 
(the FSSP of paths in the two-dimensional 
grid space) has a minimal-time solution then 
2PEP can be solved in polynomial time 
(\cite{Kobayashi_TCS_2001}).
\item[$\bullet$] If 3PATH 
(the FSSP of paths in the three-dimensional 
grid space) has a minimal-time solution then 
${\rm P} = {\rm NP}$ 
(\cite{Goldstein_Kobayashi_SIAM_2005}).
\item[$\bullet$] If DN (the FSSP of directed networks) 
has a minimal-time solution then 
there is an algorithm that computes diameters 
of directed graphs in $\tilde{O}(eD + nD^{2})$ time 
(\cite{Goldstein_Kobayashi_SIAM_2012}).
\item[$\bullet$] If UN (the FSSP of undirected networks) 
has a minimal-time solution then 
there is an algorithm that computes diameters of 
undirected graphs in $\tilde{O}(eD^{4})$ time 
(\cite{Goldstein_Kobayashi_SIAM_2012}).
\end{enumerate}

Here $\text{\rm 2PEP}$ 
(the ``two-dimensional path extension problem'')
is a problem about paths in the two-dimensional 
grid space.
For it, at present we know only exponential time algorithms.
Although it is in NP we are unable to show its NP-completeness.
The result for 3PATH is a convincing circumstantial evidence for 
nonexistence because $\text{\rm P} = \text{\rm NP}$ 
is widely believed to be false.

As for results on DN and UN, 
$\tilde{O}$ notation is the variation of $O$ notation 
where we ignore $\log$ factors 
(for example, $O(n^{3} (\log n)^{2} \log \log n)$ 
is $\tilde{O}(n^{3})$) and 
$e$, $n$, $D$ are the number of edges, the number of nodes, 
and the diameter of the graph.
The algorithms mentioned in the two results have 
factors $D$, $D^{2}$ or $D^{4}$ in their running times 
and hence have a distinguishing feature that 
they are fast when diameters $D$ of graphs are small.
At present, all known diameter algorithms first solve 
the all-pairs shortest path problem and 
we know no diameter algorithms 
that are fast when diameters of graphs are small.

\mynewpage

\subsection{Yamashita et al's results}
\label{subsection:yamashita_et_al}

Recently a nonexistence proof of minimal-time solutions 
was obtained for one variation using a completely different 
approach.  This proof needs no assumptions from the complexity 
theory (the theory of computational complexity).
We explain this result.

As we mentioned above, 
Yamashita et al (\cite{Yamashita_et_al_2014}) 
introduced a variation of the original FSSP 
that allows a special state ${\rm G}_{\rm SUB}$ 
called a {\it sub-general state}\/.
Although ${\rm G}_{\rm SUB}$ is different from ${\rm Q}$, 
a node in ${\rm G}_{\rm SUB}$ behaves like a node in 
${\rm Q}$.  
A more precise statement of this is as follows.

Let $s$ be a state and $s_{0}$, $s_{1}$ be either states or 
the signal $\#$. 
Let $\delta(s, s_{0}, s_{1})$ denote the state of a node $v$ 
at time $t + 1$ when at time $t$, $v$ is in the state $s$ 
and $v$ receives $s_{0}$ and $s_{1}$ from its left and right 
input terminals respectively.
Then $\delta({\rm G}_{\rm SUB}, s_{0}, s_{1}) = {\rm G}_{\rm SUB}$ 
if both of $s_{0}$, $s_{1}$ are ${\rm Q}$, ${\rm G}_{\rm SUB}$ 
or $\#$.
(A node in the state ${\rm G}_{\rm SUB}$ remains in the same state 
if values of both of its input terminals are 
${\rm Q}$, ${\rm G}_{\rm SUB}$ or $\#$.)
$\delta(s, {\rm Q}, s_{1}) = \delta(s, {\rm G}_{\rm SUB}, s_{1})$ 
and 
$\delta(s, s_{0}, {\rm Q}) = \delta(s, s_{0}, {\rm G}_{\rm SUB})$ 
for any $s$, $s_{0}$, $s_{1}$.
(If $v$ and $v'$ are adjacent, $v$ cannot distinguish 
the two cases ``$v'$ is in ${\rm Q}$'' and 
``$v'$ is in ${\rm G}_{\rm SUB}$.'')

\mynewpage

Let $a$, $b$ be some fixed positive integers.
For $l \geq 1$, let $A_{a+b, b, l}$ be 
the configuration shown in Figure \ref{figure:fig053} (a).
It is a sequence of $l (a + b) + 1$ nodes 
$p_{0}, p_{1}, \ldots, p_{(a+b)l}$.
At time $0$ the leftmost node $p_{0}$ ($v_{\rm gen}$) 
is in ${\rm G}$, the node $p_{al}$ is in ${\rm G}_{\rm SUB}$, 
and all other nodes are in ${\rm Q}$.
Let ${\rm FSSP}_{\rm SUB}(a+b, b)$ be the variation of 
FSSP such that configurations are all 
$A_{a+b, b, l}$ for $l \geq 1$.
(In \cite{Yamashita_et_al_2014}, 
$\alpha$ and $m$ denote $a + b$ and $b$ 
respectively.)

Concerning this variation, Yamashita et al proved 
the following result (\cite{Yamashita_et_al_2014}) :
\begin{quote}
If $a \leq b$ then 
${\rm FSSP}_{\rm SUB}(a+b, b)$ has no minimal-time solutions.
\end{quote}

For each configuration $A_{a+b, b, l}$ of 
${\rm FSSP}_{\rm SUB}(a+b, b)$, let 
$C_{\rm L}(al, bl)$ be the configuration 
in the two-dimensional grid space shown in 
Figure \ref{figure:fig053} (b) 
and let $\text{\rm LSP}[a, b]$ be the variation of FSSP 
such that configurations are all 
$C_{\rm L}(al, bl)$ for $l \geq 1$.

\begin{figure}[htb]
\begin{center}
\includegraphics[scale=1.0]{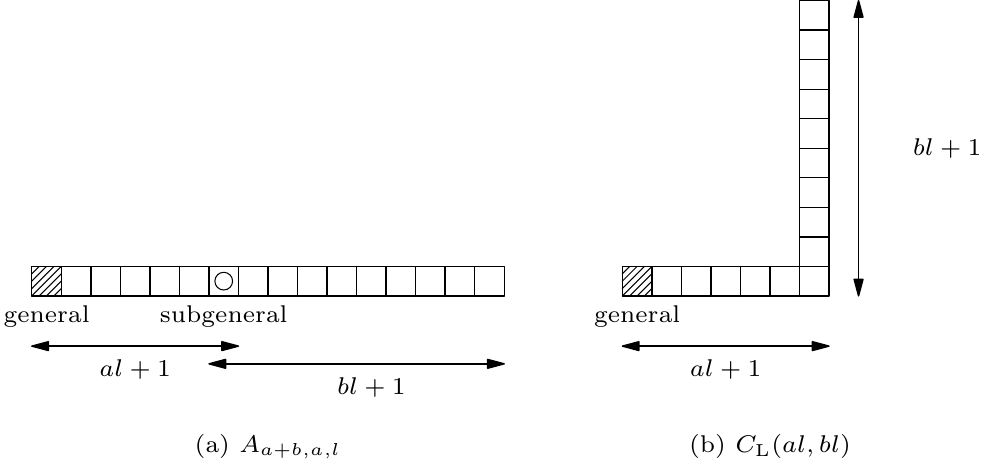}
\end{center}
\caption{(a) a configuration $A_{a+b, b, l}$ of 
${\rm FSSP}_{\rm SUB}(a+b, b)$.
(b) The corresponding configuration 
$C_{\rm L}(al, bl)$ of $\text{\rm LSP}[a, b]$.}
\label{figure:fig053}
\end{figure}

It is easy to see that any solution 
of ${\rm FSSP}_{\rm SUB}(a+b, b)$ can be 
modified to a solution of $\text{\rm LSP}[a, b]$ 
having the same firing time for the 
corresponding $A_{a+b, b, l}$ and 
$C_{\rm L}(al, bl)$, and vice versa. 
Hence the above result can be restated 
for $\text{\rm LSP}[a,b]$ as follows.
\begin{quote}
If $a \leq b$ then 
$\text{\rm LSP}[a, b]$ has no minimal-time 
solutions.
\end{quote}

As far as the author knows, this is the first variation 
(satisfying the four conditions mentioned previously) 
for which we have 
a proof of nonexistence of minimal-time solutions.

\mynewpage

\subsection{The main results}
\label{subsection:main_results}

Now we explain the results of this paper.
Let $C_{\rm L}(w, h)$, 
$C_{\rm RW}(w, h)$, 
$C_{\rm R}(w, h)$ be the configurations 
shown in 
Figure \ref{figure:fig001} (a), (b), (c) 
respectively.
``L'', ``RW'', ``R'' are for 
``L-shaped paths,'' 
``rectangular walls,'' 
and 
``rectangles'' respectively. 
In all of these configurations, 
the general $v_{\rm gen}$ is at the southwest position 
(the box with shadow).

\begin{figure}[htb]
\begin{center}
\includegraphics[scale=1.0]{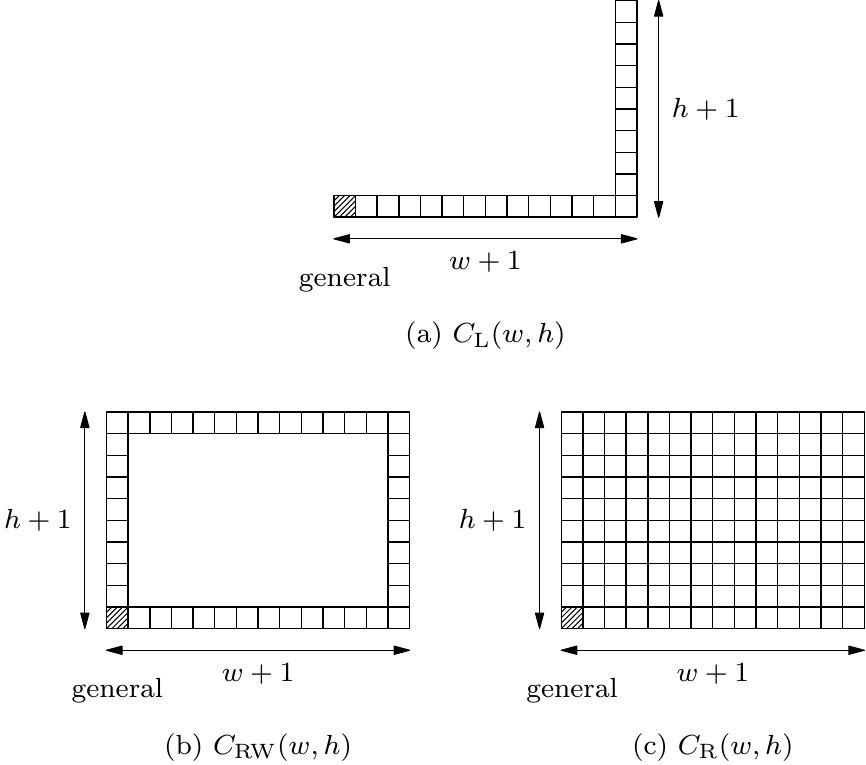}
\end{center}
\caption{Three configurations considered in this paper: 
$C_{\rm L}(w, h)$, $C_{\rm RW}(w, h)$, $C_{\rm R}(w, h)$.}
\label{figure:fig001}
\end{figure}

\noindent
Using these configurations we define $8$ variations 
of FSSP.
They are defined in Table \ref{table:fig027}.

\begin{table}[tbp]
\begin{center}
\includegraphics[scale=1.0]{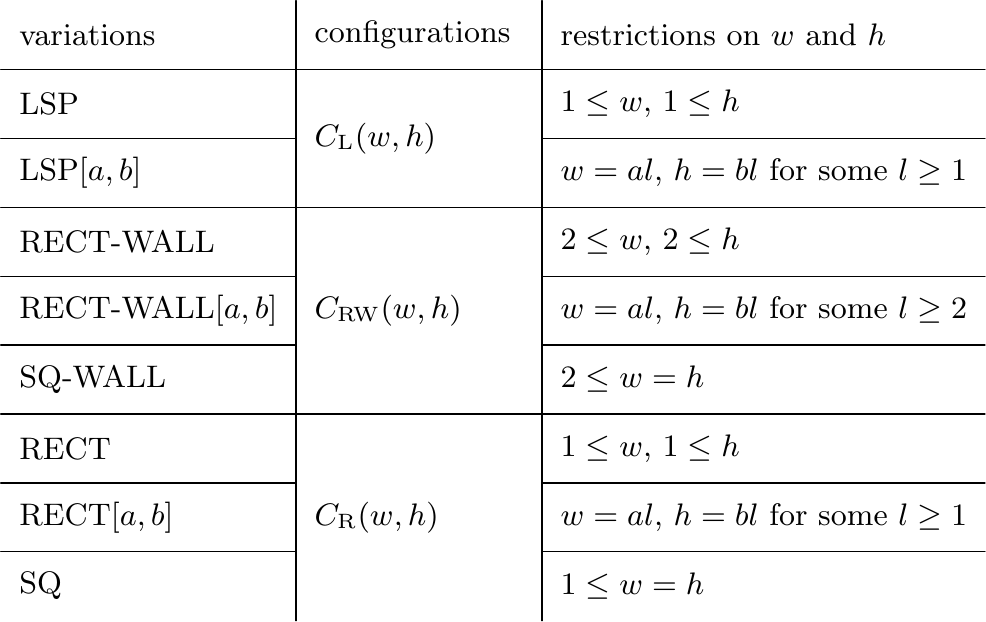}
\end{center}
\caption{Definitions of the $8$ variations of FSSP 
considered in this paper.}
\label{table:fig027}
\end{table}

For example, configurations of $\text{\rm RECT-WALL}[a, b]$ 
are all $C_{\rm RW}(w, h)$ such that 
$w = al$, $h = bl$ for some $l \geq 2$.
``LSP,'' ``RECT-WALL,'' ``SQ-WALL,'' 
``RECT,'' and ``SQ'' are for 
``L-shaped paths,'' 
``rectangular walls,'' 
``square walls,'' 
``rectangles,'' and 
``squares'' respectively.

For each variation listed in Table \ref{table:fig027} 
we also consider its ``generalized'' variation 
in which the general $v_{\rm gen}$ may be at any position 
in configurations.
We denote such a variation by the name obtained by 
adding ``g-'' to the name of the original variation 
(for example ``$\text{\rm g-RECT-WALL}[a,b]$'').

\mynewpage

Our results are summarized as follows.

\medskip

\noindent
(1) For $\Gamma$ $=$ $\text{\rm g-LSP}[a,b]$, 
$\text{\rm RECT-WALL}[a,b]$ 
(including the case $a = b$ \\
($\text{\rm SQ-WALL}$)), 
$\text{\rm g-RECT-WALL}[a,b]$ 
(including the case $a = b$ 
($\text{\rm g-SQ-WALL}$)), 
we determined the value of 
the minimum firing time ${\rm mft}_{\Gamma}(C)$ 
of $\Gamma$ and proved that 
$\Gamma$ has no minimal-time solutions using Yamashita et al's 
idea.

\medskip

\noindent
(2) For $\Gamma$ $=$ 
$\text{\rm RECT-WALL}$, 
$\text{\rm g-RECT-WALL}$, 
$\text{\rm RECT}[a,b]$ 
(excluding the case $a = b$ ($\text{\rm SQ}$)), 
we determined the value of the minimum firing time 
${\rm mft}_{\Gamma}(C)$ of $\Gamma$. 
(However whether $\Gamma$ has minimal-time solutions or not 
remains open.)

\medskip

Our main contribution is the derivation of minimum firing times 
of above mentioned variations.
For results in (1), we need the values of the 
minimum firing times in order to use Yamashita et al's idea.
Even for results in (2) the derived values of 
minimum firing times are useful because they reduce 
the problem of finding minimal-time solutions without any hints 
to the problem of finding solutions having the derived values 
as their firing times.

\mynewpage

In Table \ref{table:tab000}, we summarize known results.
For each variation, in the second column we show 
whether we know its minimum firing time or not and 
in the third column we show whether there exist 
minimal-time solutions for it or not.
The results with ``*'' are our results obtained in this paper 
and marks ``?'' denote open problems.
We will make more comments on these open problems 
in Section \ref{section:conclusion}.

\begin{table}[htb]
\begin{center}
\begin{tabular}{l|l|l}

variations & 
\begin{minipage}{5em}
minimum \\
firing time
\vspace*{0.5ex}
\end{minipage} &
\begin{minipage}{7em}
minimal-time \\
solutions
\vspace*{0.5ex}
\end{minipage}
\\ \hline

LSP & known (\cite{Goto_1962}) 
& exist (\cite{Balzer_1967, Goto_1962, Waksman_1966}) \\

g-LSP & known (\cite{Moore_Langdon_1968}) 
& exist (\cite{Moore_Langdon_1968}) \\

LSP$[a,b]$, $a > b$ & known (\cite{Yamashita_et_al_2014}) 
& exist (\cite{Yamashita_et_al_2014}) \\

LSP$[a,b]$, $a \leq b$ & known (\cite{Yamashita_et_al_2014}) 
& do not exist (\cite{Yamashita_et_al_2014}) \\

g-LSP$[a,b]$ & known (*) 
& do not exist (*) \\ \hline 

RECT-WALL 
& known (*) & ? \\ 

g-RECT-WALL & known (*) & ? \\

RECT-WALL$[a,b]$ ($a \not= b$) & known (*) 
& do not exist (*) \\

g-RECT-WALL$[a,b]$ ($a \not= b$) & known (*) 
& do not exist (*) \\

SQ-WALL & known (*) & do not exist (*) \\

g-SQ-WALL & known (*) & do not exist (*) \\ \hline

RECT & known (\cite{Shinahr_1974}) 
& exist (\cite{Shinahr_1974}) \\

g-RECT & known (\cite{Szwerinski_1982}) 
& exist (\cite{Szwerinski_1982}) \\

RECT$[a,b]$, $a \not= b$ & known (*) & ? \\

g-RECT$[a,b]$, $a \not= b$ & ? & ? \\

SQ & known (\cite{Shinahr_1974}) 
& exist (\cite{Shinahr_1974}) \\

g-SQ & known (\cite{Kobayashi_TCS_2014}) & ? 

\end{tabular}
\end{center}
\caption{Summary of known results on variations 
of FSSP considered in this paper.}
\label{table:tab000}
\end{table}

\mynewpage

Yamashita et al's idea is based on an analysis of 
${\rm st}(v, t, C, A)$ as a function of 
two variables $v$, $t$ for a fixed $C$ 
that satisfies some specific requirements.
We call such an approach an ``automata-theoretical 
approach.''
Their idea can be applied for very special variations.
However, unlike complexity-theoretical approaches, 
it gives nonexistence proofs 
without any assumption from complexity theory such 
as ${\rm P} \not= {\rm NP}$.

If we are to obtain such proofs 
with a complexity-theoretical approach 
then it is necessary to prove that ${\rm mft}(C)$ is 
a computationally difficult function 
without any assumption.
We call this type of proofs ``lower bound proofs'' and 
we know well that lower bound proofs are 
very difficult to obtain 
(and moreover we are gradually understanding why 
it is difficult (\cite{Razborov_1994})).
Therefore, it is unlikely that 
we will have nonexistence proofs 
without any assumption with 
complexity-theoretical approaches in the near future.

In this sense, to study the possibility 
of automata-theoretical approaches 
is very important.

\mynewpage

\section{Notions, notations and technical preliminaries}
\label{section:notions_and_notations}

In this section we explain some basic notions and notations.
We also explain some notions and results that are technical 
but are used throughout the paper.

We assume that in all variations of FSSP we consider in this 
paper, configurations are sets of positions in the 
two-dimensional grid space.
We identify the two-dimensional grid space with 
the set ${\mathbf Z}^{2}$, where 
${\mathbf Z}$ is the set of all integers.
Moreover, we assume that any variation we consider 
in this paper has a solution unless otherwise stated.
(Note that any variation of FSSP that can be modelled as 
a variation of FSSP of connected directed networks 
has a solution (\cite{Kobayashi_JCSS_1978}).
Hence we may say that 
variations having no solutions are pathological.)

Two positions $v=(x, y)$, $v'=(x', y')$ in $\mathbf{Z}^{2}$ 
are {\it adjacent} 
if either $x = x'$ and $|y - y'| = 1$ or 
$|x - x'| = 1$ and $y = y'$.
By a {\it path} we mean a nonempty sequence 
$X = v_{0} v_{1} \ldots v_{n}$ of positions such that 
$v_{i}$, $v_{i+1}$ are adjacent for each $i$ ($0 \leq i \leq n-1$).
We say that $X$ is a path {\it from}\/ $v_{0}$ {\it to}\/ $v_{n}$ 
(or {\it between}\/ $v_{0}$ {\it and}\/ $v_{n}$).
We call the value $n$ (the number of ``steps'' in $X$) 
the {\it length}\/ of $X$ and denote it by $|X|$.

Suppose that $C$ is a connected finite nonempty subset of 
$\mathbf{Z}^{2}$, typically a configuration of a variation 
of FSSP. ($C$ is {\it connected}\/  if there is a path 
from $v$ to $v'$ in $C$ for any $v, v' \in C$.)
For $v, v' \in C$, by the {\it distance}\/ between $v$, $v'$ {\it in} $C$ 
we mean the smallest 
value of $|X|$ for all paths $X$ from $v$ to $v'$ in $C$ 
and denote it by ${\rm d}_{C}(v, v')$.
For $v, v', v''$ in $C$, 
${\rm d}_{C}(v, v'; v'')$ denotes 
${\rm d}_{C}(v, v'') + {\rm d}_{C}(v'', v')$.
When one position in $C$ is specified 
as the position of the general $v_{\rm gen}$, 
by the {\it radius}\/ of $C$ we mean 
$\max \{ {\rm d}_{C}(v_{\rm gen}, v) | v \in C\}$ 
and denote it by ${\rm rad}(C)$.

By $\epsilon_{0}$, $\epsilon_{1}$, $\epsilon_{2}$, $\epsilon_{3}$ 
we denote pairs $(1, 0)$, $(0, 1)$, $(-1, 0)$, $(0, -1)$ 
respectively (corresponding to the directions 
the east, the north, the west, and the south respectively).
By the {\it boundary condition of}\/ $v$ {\it in}\/  $C$ we 
mean the element $(b_{0}, b_{1}, b_{2}, b_{3})$ of $\{0, 1\}^{4}$ 
such that $b_{i}$ is $1$ if the position $v + \epsilon_{i}$ 
is in $C$ and $0$ otherwise.
We denote it by ${\rm bc}_{C}(v)$.

\mynewpage

In Section \ref{section:introduction} we defined 
the notion that 
$\tilde{A}$ is a minimal-time solution of a variation 
$\Gamma$ in two equivalent ways.
The first is by 
\[
(\forall A) (\forall C) [ {\rm ft}(C, \tilde{A}) \leq 
{\rm ft}(C, A) ] 
\]
and the second by 
\[
(\forall C) [ {\rm ft}(C, \tilde{A}) = {\rm mft}_{\Gamma}(C) ], 
\]
where ${\rm mft}_{\Gamma}(C)$ is defined by 
\[
{\rm mft}_{\Gamma}(C) = \min \{ {\rm ft}(C, A) ~|~ 
\text{$A$ is a solution of $\Gamma$} \}.
\]

The first definition has a clear intuitive meaning, that is, 
$\tilde{A}$ is a fastest solution among all solutions.
The second definition is rather technical.
However the function ${\rm mft}_{\Gamma}(C)$ used in 
the definition is essential for our study in this paper.
We use the second definition from now on.

One important point to note is that 
${\rm mft}_{\Gamma}(C)$ is defined whenever  $\Gamma$ has 
a solution irrespective of 
whether $\Gamma$ has minimal-time solutions or not.
Historically, for the original FSSP a minimal-time solution 
of the first definition was found and from this we knew that 
${\rm mft}(C_{n}) = 2n - 2$.
However the proof of ${\rm mft}(C_{n}) = 2n - 2$ itself 
is very easy once we know that there is at least 
one solution.

\mynewpage

We call a finite automaton $A$ a {\it partial solution}
\footnote{The term ``a partial solution of a variation $\Gamma$'' 
is also used for a different meaning 
(for example, in 
\cite{Umeo_Kamikawa_Yunes_partial_2009}).} 
of a variation $\Gamma$ if it satisfies exactly one of the 
following two conditions for each configuration $C$ 
(the selection of the condition may depend on $C$):
\begin{enumerate}
\item[(1)] each $v$ in $C$ never enters the firing state ${\rm F}$,
\item[(2)] there exists a time $t_{C}$ (that may depend on $C$) such 
that ${\rm st}(v, t, C, A) \not= {\rm F}$ for any $v$ in $C$ and 
any time $t$ such that $0 \leq t < t_{C}$ and 
${\rm st}(v, t_{C}, C, A)$ \\
$ = {\rm F}$ for any $v$ in $C$.
\end{enumerate}
By the {\it domain}\/ of a partial solution $A$ we mean the 
set of configurations $C$ for which the case (2) is true, and 
for a configuration $C$ in the domain we denote the time $t_{C}$ 
mentioned in (2) by ${\rm ft}(C, A)$.
We can easily show ${\rm mft}_{\Gamma}(C) \leq {\rm ft}(C, A)$ for any $C$ 
in the domain of $A$.
(Use the finite automaton that simulates both of a solution 
and $A$ and fires when at least one of them fires.)
This result is useful for deriving upper bounds of ${\rm mft}_{\Gamma}(C)$.

\mynewpage

For the definition of FSSP itself we need another 
modified definition which we call 
the {\it boundary sensitive model}\/.
We call the definition of FSSP explained in 
Section \ref{section:introduction} 
the {\it traditional model}\/.

Compared with the traditional model, 
there are two modifications in the boundary sensitive model.
First, instead of one general state ${\rm G}$, 
there may be more than one general state 
${\rm G}_{0}, {\rm G}_{1}, \ldots, {\rm G}_{s-1}$ 
and the state of the general $v_{\rm gen}$ in 
a configuration $C$ at time $0$ is determined by the 
boundary condition $\mathbf{b} = {\rm bc}_{C}(v_{\rm gen})$ of the 
general $v_{\rm gen}$ in $C$.
More precisely, a function $\eta$ from 
$\{0, 1\}^{4}$ to $\{0, 1, \ldots, s-1\}$ is 
specified and 
the state of the general $v_{\rm gen}$ in $C$ at time $0$ 
is ${\rm G}_{\eta(\mathbf{b})}$.
Second, instead of one firing state ${\rm F}$ we specify 
a set ${\mathcal F}$ of firing states.
Any state in ${\mathcal F}$ may be used as a firing state.
The condition which a solution $A$ must satisfy is 
changed to the following: 
for any $n$ ($\geq 1$) there exists a time $t_{n}$ 
such that 
${\rm st}(v, t, C_{n}, A) \not\in {\mathcal F}$ for any node $v$ 
and any time $t$ such that $0 \leq t < t_{n}$ 
and ${\rm st}(v, t_{n}, C_{n}, A) \in {\mathcal F}$ 
for any node $v$.
The state ${\rm Q}$ must not be in $\mathcal{F}$.
However, some general states may be in $\mathcal{F}$ and hence 
the firing time can be $0$.

\mynewpage

The description of FSSP in the traditional model is simple 
and the model is suitable for presenting FSSP 
as an interesting puzzle in automata theory.
However, the boundary sensitive model is 
suitable for design and analysis of solutions.
Concerning this, the essential difference between the two 
models is the following fact: for any nodes $v$, $v'$ in $C$, 
the minimum time for $v'$ to know the boundary condition 
${\rm bc}_{C}(v)$ of $v$ in $C$ is $d = {\rm d}_{C}(v_{\rm gen}, v) + 
{\rm d}_{C}(v, v')$ in the boundary sensitive model 
but $d + 1$ for $v' = v_{\rm gen}$ and $d$ for $v' \not= v_{\rm gen}$ 
in the traditional model.
By the above irregularity in the traditional model, 
design and analysis of solutions are complicated in 
inessential ways in the model.

\mynewpage

We summarize basic results on the relation between the two models 
as a lemma. 
Their proofs are easy and we omit them 
(\cite{Kobayashi_TCS_2014}). 
Let ${\rm mft}_{\Gamma, {\rm bs}}(C)$ and 
${\rm mft}_{\Gamma, {\rm tr}}(C)$ denote the 
minimum firing time ${\rm mft}_{\Gamma}(C)$ 
in the boundary sensitive model and in the traditional model 
respectively.
We call a configuration $C$ a {\it singleton}\/ if 
$C = \{ v_{\rm gen} \}$.
A configuration $C$ is a singleton 
if and only if ${\rm bc}_{C}(v_{\rm gen}) = (0, 0, 0, 0)$.
(This is true by our assumption that $C$ is 
a subset of ${\mathbf Z}^{2}$.
This is not true, for example, for the FSSP of 
directed or undirected networks.)

\begin{lem}
\label{lemma:relations_between_bs_and_tr}

\noindent
\begin{enumerate}
\item[{\rm (1)}] For any configuration $C$, 
${\rm mft}_{\Gamma, {\rm bs}}(C) \leq 
{\rm mft}_{\Gamma, {\rm tr}}(C) \leq 
{\rm mft}_{\Gamma, {\rm bs}}(C) + 1$.
\item[{\rm (2)}] For a singleton  $C$, 
${\rm mft}_{\Gamma, {\rm bs}}(C) = 0$ and 
${\rm mft}_{\Gamma, {\rm tr}}(C) = 1$.
\item[{\rm (3)}] If ${\rm rad}(C) + 1 \leq {\rm mft}_{\Gamma, {\rm bs}}(C)$ 
for any nonsingleton $C$, then 
${\rm mft}_{\Gamma, {\rm tr}}(C) =
{\rm mft}_{\Gamma, {\rm bs}}(C)$ for any nonsingleton $C$.
In this case, $\Gamma$ has a minimal-time solution in the 
boundary sensitive model if and only if it has one in the 
traditional model.
\item[{\rm (4)}] If ${\rm bc}_{C}(v_{\rm gen})$ is the same 
for all nonsingleton $C$, then 
${\rm mft}_{\Gamma, {\rm tr}}(C)$ $=$ 
${\rm mft}_{\Gamma, {\rm bs}}(C)$ for all nonsingleton $C$.
In this case, $\Gamma$ has a minimal-time solution in the 
boundary sensitive model if and only if it has one in the 
traditional model.
\end{enumerate}
\end{lem}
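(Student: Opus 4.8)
Every part is proved by explicit constructions passing between automata of the two models (with a couple of minor modifications within a model). Two translations carry almost all of the load. (i) The traditional model is literally the special case of the boundary sensitive model with one general state, a constant $\eta$, and $\mathcal{F}=\{{\rm F}\}$; hence any traditional solution is also a boundary sensitive solution with the same firing times, which already gives ${\rm mft}_{\Gamma,{\rm bs}}(C)\le{\rm mft}_{\Gamma,{\rm tr}}(C)$ for every $C$. (ii) From a boundary sensitive solution $\tilde{A}$ one builds a traditional automaton $A$ that at real time $t$ stores, at each node $v$, the $16$-tuple recording, for each possible value $\mathbf{b}'\in\{0,1\}^{4}$ of ${\rm bc}_{C}(v_{\rm gen})$, the state that $\tilde{A}$'s copy at $v$ would be in at time $t$ \emph{assuming} the general's boundary condition were $\mathbf{b}'$. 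This tuple is updated entrywise by $\tilde{A}$'s transition function, so the simulation loses no time, and a node ``commits'' to the true entry as soon as the true boundary condition of $v_{\rm gen}$ reaches it, i.e. at real time ${\rm d}_{C}(v_{\rm gen},v)+1$ — because in the traditional model the general starts in ${\rm G}$ and only learns ${\rm bc}_{C}(v_{\rm gen})$ at time $1$, after which that information travels at speed one. Collapsing the set $\mathcal{F}$ to the single state ${\rm F}$ without adding a step is folded into the transition function by a one-step look-ahead.

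\textbf{Parts (1) and (2).}
For (1), the left inequality is (i); for the right inequality, apply the crudest form of (ii) — delay the whole run of $\tilde{A}$ by one step so $v_{\rm gen}$ can first read ${\rm bc}_{C}(v_{\rm gen})$ — which fires $C$ at ${\rm ft}(C,\tilde{A})+1$, and take $\tilde{A}$ optimal on $C$. For (2), a configuration is a singleton exactly when ${\rm bc}_{C}(v_{\rm gen})=(0,0,0,0)$; so to a boundary sensitive solution one may add a new general state that is also a firing state and route $\eta(0,0,0,0)$ to it, leaving every nonsingleton configuration untouched, giving ${\rm mft}_{\Gamma,{\rm bs}}(\{v_{\rm gen}\})=0$; and in a traditional solution one may redefine $\delta({\rm G},\#,\#,\#,\#)={\rm F}$, a transition invoked only by the general of the singleton, giving ${\rm mft}_{\Gamma,{\rm tr}}(\{v_{\rm gen}\})=1$, which cannot be $0$ since ${\rm G}\ne{\rm F}$.

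\textbf{Parts (4) and (3).}
In (4) the general ``knows'' the fixed boundary condition $\mathbf{b}^{\ast}$ for free, so (ii) degenerates: just declare $A$'s state ${\rm G}$ to be $\tilde{A}$'s state ${\rm G}_{\eta(\mathbf{b}^{\ast})}$ and copy $\tilde{A}$ elsewhere (collapsing $\mathcal{F}$ with one look-ahead). Then $A$ reproduces $\tilde{A}$ on every nonsingleton $C$, so ${\rm ft}(C,A)={\rm ft}(C,\tilde{A})$; minimising over $\tilde{A}$ and combining with (i) yields ${\rm mft}_{\Gamma,{\rm tr}}(C)={\rm mft}_{\Gamma,{\rm bs}}(C)$ for nonsingleton $C$, and since (i) turns traditional solutions into boundary sensitive ones and the above turns a minimal-time boundary sensitive solution into a minimal-time traditional one (singletons, which differ by $1$, repaired separately by rerouting $\eta$), existence of a minimal-time solution transfers both ways. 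In (3) one runs the full construction (ii): the interval during which $v$ has not yet committed to its true entry ends at real time ${\rm d}_{C}(v_{\rm gen},v)+1$, and the hypothesis ${\rm mft}_{\Gamma,{\rm bs}}(C)\ge{\rm rad}(C)+1\ge{\rm d}_{C}(v_{\rm gen},v)+1$ guarantees that every node has committed no later than the firing time ${\rm ft}(C,\tilde{A})={\rm mft}_{\Gamma,{\rm bs}}(C)$ of an optimal $\tilde{A}$; hence all nodes observe ``$\tilde{A}$ has fired'' simultaneously at that real time, so ${\rm ft}(C,A)={\rm mft}_{\Gamma,{\rm bs}}(C)$, and with (i) this gives ${\rm mft}_{\Gamma,{\rm tr}}(C)={\rm mft}_{\Gamma,{\rm bs}}(C)$; the transfer of minimal-time solutions follows exactly as in (4).

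\textbf{Main obstacle.}
The only delicate point is the off-by-one bookkeeping in construction (ii) for part (3): one must check that at the transition producing real time $t$ a node can already read the true boundary condition of $v_{\rm gen}$ — from its own state if it has committed, otherwise from a neighbour one step closer to $v_{\rm gen}$, which committed one step earlier — and hence can output the single firing state ${\rm F}$ at exactly time $t$ with no extra delay. This is precisely where the slack ${\rm rad}(C)+1\le{\rm mft}_{\Gamma,{\rm bs}}(C)$ is consumed; handling it, together with the degenerate singleton configurations, is the whole of the work, and everything else is routine.
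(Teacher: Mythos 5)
Your proof is correct. The paper itself omits the proof of this lemma (deferring to \cite{Kobayashi_TCS_2014}), but your two translations are exactly the ones the paper relies on elsewhere: the embedding of the traditional model into the boundary sensitive one, and the ``simulate all $A_{\mathbf{b}}$ for $\mathbf{b}\in\{0,1\}^{4}$ while broadcasting the true ${\rm bc}_{C}(v_{\rm gen})$ from time $1$'' construction, which is precisely the paper's own Lemma \ref{lemma:from_bs_to_tr}; your treatment of the two genuine subtleties (the one-step look-ahead that collapses $\mathcal{F}$ to ${\rm F}$ without losing a step, and the separate repair of singleton configurations) is also right.
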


\mynewpage

By (3), (4) of the lemma and the fact that 
any configuration of the variations considered in this paper 
is nonsingleton, we know that the difference of the two models 
appears only for the case where a variation is 
a generalized variation (having ``g-'' in its name) 
and ${\rm mft}(C) = {\rm rad}(C)$ 
for some of its configurations $C$.
We have this situation for the three variations 
$\text{\rm g-LSP}[a,b]$, $\text{\rm g-RECT-WALL}[a,b]$, 
$\text{\rm g-RECT}[a,b]$.
For these variations we cannot escape from dealing with 
the inessential irregularity in the traditional model 
mentioned above.
(See also \cite{Kobayashi_Goldstein_UC_2005} for the comparison 
of the two models.)

\mynewpage

Let $\Gamma$ be a variation.
For each configuration $C$ of $\Gamma$, 
each node $v = (i, j)$ in $C$ and 
each time $t$, 
we define 
the {\it available information}\/ of $v$ in $C$ at time $t$ 
as follows.
If ${\rm d}_{C}(v_{\rm gen}, v) > t$ then the available information 
is the letter ``Q''.
Otherwise the available information is the pair 
$(t, ((i - r, j - s), X))$.
Here $(r, s)$ is the position of $v_{\rm gen}$ 
and $X$ is the set of all elements 
$((i' - r, j' - s), {\rm bc}_{C}((i', j')))$ 
such that 
\begin{enumerate}
\item[(1)] $(i', j') \in C$.
\item[(2)] ${\rm d}_{C}(v_{\rm gen}, v; (i', j')) \leq t$.
\end{enumerate}
By ${\rm ai}(v, t, C)$ we denote the available information 
of $v$ in $C$ at $t$.

Intuitively, if ${\rm d}_{C}(v_{\rm gen}, v) \leq t$ 
then ${\rm ai}(v, t, C)$ is the set of all information 
the node $v$ can get at time $t$ concerning 
the current time and 
the structure of the configuration $C$ in which it is. 
In the design of distributed algorithms, 
the second component $((i - r, j - s), X)$ of 
${\rm ai}(v, t, C)$ is usually called 
the {\it local map}\/ of $v$ at $t$ concerning $C$.
Hence we call it the {\it local map component}\/ 
of ${\rm ai}(v, t, C)$.
We call the first component $t$ of ${\rm ai}(v, t, C)$ 
its {\it time component}.

\mynewpage

In Figure \ref{figure:fig048} (a) we show a configuration $C$. 
The origin $(0, 0)$ is at the southwest corner and 
$v_{\rm gen}$ is at $(r, s) = (2, 3)$. 
As an example let us consider 
${\rm ai}((4, 0), 9, C)$.
$v = (4, 0)$ is the node marked with a bullet in 
the figure (a).
${\rm ai}((4, 0), 9, C)$ is of the form 
$(9, ((2, -3), X))$.
$X$ contains $14$ elements of the form 
$((i' - r, j' - s), {\rm bc}_{C}((i', j')))$ 
for $(i', j')$ in 
$\{ (0, 0)$, $(1, 0)$, $(2, 0)$, \ldots, 
$(2, 5)$, $(3, 0)$, $(4, 0)$, \ldots, 
$(4, 2)$, $(5, 0)$, $(6, 0) \}$.
Hence 
$X = \{ ((-2, -3), (1, 1, 0, 0))$, 
$((-1, -3), (1, 0, 1, 0))$, \ldots, 
$((4, -3), (0, 1, 1, 0)) \}$. 

\begin{figure}[htb]
\begin{center}
\includegraphics[scale=1.0]{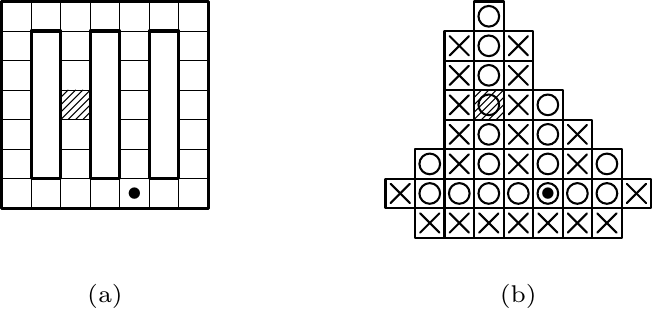}
\end{center}
\caption{(a) A configuration $C$. 
(b) The figure that represents  
the local map component of ${\rm ai}((4, 0), 9, C)$.}
\label{figure:fig048}
\end{figure}

The definition of available information 
${\rm ai}(v, t, C)$ is conceptually simple.
However its local map component $((i - r, j - s), X)$ 
has many redundancies in it. 
We can represent it with the following figure 
more succinctly. 
We write a square with shadow at $(0, 0)$.
We write a square with a bullet at $(i - r, j - s)$. 
For each element $((i' - r, j' - s), (b_{0}, b_{1}, b_{2}, b_{3}))$ 
of $X$ we write a square with a circle at $(i' - r, j' - s)$. 
Moreover, for each $d$ ($0 \leq d \leq 3$) we 
write a square at $(i' - r, j' - s) + \epsilon_{d}$, 
with a circle if $b_{d} = 1$ and with a cross if 
$b_{d} = 0$.
(Remember that 
$\epsilon_{0}, \epsilon_{1}, \epsilon_{2}, \epsilon_{3}$ are 
$(1, 0), (0, 1), (-1, 0), (0, -1)$.)

The figure has a box at a position $w$ if and only if 
either $w + v_{\rm gen} \in C$ and 
${\rm d}_{C}(v_{\rm gen}, v; w + v_{\rm gen}) 
\leq t$ or there is a position $w'$ such that 
$w' + v_{\rm gen} \in C$, 
${\rm d}_{C}(v_{\rm gen}, v; w' + v_{\rm gen}) \leq t$ and 
$w + v_{\rm gen}$, $w' + v_{\rm gen}$ are adjacent.

In Figure \ref{figure:fig048} (b) we show this figure 
for ${\rm ai}((4, 0), 9, C)$ for the configuration $C$ 
in Figure \ref{figure:fig048} (a).
From the time component $t = 9$ and 
this figure representing the local map component 
we can reconstruct 
the original ${\rm ai}(v, t, C)$ completely.

\mynewpage

In \cite{Kobayashi_TCS_2014} several results on 
${\rm ai}(v, t, C)$ were shown and proved as 
Fact 1 -- Fact 6.
(In \cite{Kobayashi_TCS_2014}, 
${\rm ai}(v, t, C)$ was 
the triple $(t, (i - r, j - s), X)$ instead of 
the pair $(t, ((i - r, j - s), X))$.)
Here we use Fact 3.
\medskip

\noindent
{\bf Fact 3:} (For both of the boundary sensitive 
and the traditional models.) If ${\rm ai}(v, t, C) = {\rm ai}(v', t, C')$ 
then ${\rm st}(v, t, C, A) = {\rm st}(v', t, C', A)$ 
for any solution $A$.

\medskip

\noindent
The following lemma is one simple case of Fact 4.

\begin{lem} {\rm (}For both of the boundary sensitive 
and the traditional models.{\rm )}
Let $C$, $C'$ be configurations of $\Gamma$, $v \in C$, 
$v' \in C'$ be nodes, and $t$ be a time.
If ${\rm ai}(v, t, C) = {\rm ai}(v', t, C')$ and 
${\rm rad}(C') > t$, then 
${\rm mft}_{\Gamma}(C) > t$.
\label{lemma:fact_4}
\end{lem}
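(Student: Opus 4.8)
The plan is to derive a contradiction from the assumption that $\mathrm{mft}_{\Gamma}(C) \leq t$, using the hypothesis that $C$ and $C'$ look locally identical at time $t$ to the nodes $v$ and $v'$. First I would invoke Fact 3 together with the definition of a solution. Suppose $A$ is any solution of $\Gamma$ with $\mathrm{ft}(C, A) \leq t$; such an $A$ exists if $\mathrm{mft}_{\Gamma}(C) \leq t$. Since $\mathrm{ai}(v, t, C) = \mathrm{ai}(v', t, C')$, Fact 3 gives $\mathrm{st}(v, t, C, A) = \mathrm{st}(v', t, C', A)$. The idea is that firing is a global event for $A$ on $C$ at a time $\leq t$, so by monotonicity of the firing-time definition all of $C$ is firing at time $t$; in particular $v$ is in the firing state at time $t$, hence so is $v'$ in $C'$.

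Next I would exploit $\mathrm{rad}(C') > t$ to show $v'$ cannot legitimately be firing at time $t$ in $C'$. Pick a node $u \in C'$ with $\mathrm{d}_{C'}(v'_{\mathrm{gen}}, u) > t$ — this is possible precisely because $\mathrm{rad}(C') > t$. Then the available information $\mathrm{ai}(u, t, C')$ is the letter ``Q'' by definition, so $u$ has received no signal distinguishing its situation from the all-quiescent background; formally, by Fact 3 (or directly by the quiescence axiom propagated along a shortest path from $v'_{\mathrm{gen}}$), $\mathrm{st}(u, t, C', A) = \mathrm{Q}$ for every solution $A$. Thus at time $t$ the node $u$ is quiescent while $v'$ is firing, so $C'$ is not uniformly in the firing state at time $t$. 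But a solution must fire all nodes simultaneously, and the firing time of $A$ on $C'$ — call it $\mathrm{ft}(C', A)$ — is then necessarily $> t$: if it were $\leq t$, then $u$ would be firing by time $t$, contradicting that $u$ is quiescent for all times up to and including $t$. Since $A$ was an arbitrary solution with $\mathrm{ft}(C, A) \leq t$, we would need $\mathrm{ft}(C, A) \leq t < \mathrm{ft}(C', A)$ — but the existence of such an $A$ is exactly what $\mathrm{mft}_{\Gamma}(C) \leq t$ asserts, while we must separately check there is no solution firing $C$ at time $\leq t$. The cleanest way to close this: assume for contradiction $\mathrm{mft}_{\Gamma}(C) \leq t$, take a solution $A$ realizing it, and the argument above shows $v$ fires in $C$ at some time $\leq t$ yet $v'$ fires in $C'$ at the same time $< \mathrm{rad}(C') \leq \mathrm{ft}(C', A)$, which is impossible since a node of $C'$ cannot fire before time $\mathrm{rad}(C')$ (the general's activation needs at least $\mathrm{rad}(C')$ steps to even reach the farthest node). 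Hence $\mathrm{mft}_{\Gamma}(C) > t$.

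The main obstacle I anticipate is the bookkeeping around the relationship between $\mathrm{st}(v, t, C, A)$ being a firing state and the global firing time $\mathrm{ft}(C, A)$ being $\leq t$: one must argue carefully that if $v$ is in the firing state at time $t$ then in fact $\mathrm{ft}(C, A) \leq t$ (a solution fires everywhere simultaneously, and no node fires strictly earlier, so the first firing time of $v$ equals $\mathrm{ft}(C, A)$ and is $\leq t$). The other delicate point is the two-model caveat: in the traditional model the general needs one extra time step to learn its own boundary condition, but this does not affect the lower-bound direction — a node at distance $\mathrm{rad}(C')$ still cannot fire before time $\mathrm{rad}(C')$ — so the statement holds uniformly, and I would simply remark that the argument is insensitive to which model is used, consistent with the parenthetical in the lemma. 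Everything else is a direct application of Fact 3 and the definitions, so the proof should be short.
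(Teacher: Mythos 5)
Your core idea --- combine Fact 3 with the existence of a node of $C'$ at distance greater than $t$ from its general, which must still be quiescent at time $t$ --- is exactly the paper's, so in substance this is the same proof. The difference is the direction in which you run it, and the step you yourself flag as the ``main obstacle'' does fail as written: from ${\rm ft}(C, A) \le t$ you cannot conclude that $v$ is in the firing state at time $t$ itself, because the definition of a solution constrains states only \emph{before} and \emph{at} the firing time, not afterwards; there is no ``monotonicity'' keeping nodes in ${\rm F}$ once they have fired. Your proposed patch --- transfer the firing of $v$ at the actual firing time $t'' \le t$ over to $v'$ --- quietly applies Fact 3 at time $t''$, whereas the hypothesis ${\rm ai}(v, t, C) = {\rm ai}(v', t, C')$ is given only at time $t$; to close this one must also observe that equality of available information at time $t$ forces equality at every earlier time (every position within round-trip distance at most $t''$ of $v$ already appears in the time-$t$ local map), which you do not do.

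The paper runs the argument in the opposite direction and thereby uses only the sound half of the definition: since $C'$ contains a node $v''$ with ${\rm d}_{C'}(v_{\rm gen}, v'') > t$, that node is still in ${\rm Q}$ at time $t$, so the firing time of any solution $A$ on $C'$ exceeds $t$, and hence \emph{no} node of $C'$ --- in particular $v'$ --- is in a firing state at time $t$ (nodes never enter ${\rm F}$ strictly before the common firing time; that direction \emph{is} part of the definition). Fact 3 then transfers ``not in a firing state at $t$'' to $v$ in $C$, so no solution fires $C$ at time $t$. (The paper is itself terse about ruling out firing times strictly below $t$, which again rests on the propagation of ai-equality to earlier times.) So your proposal is the right approach with a repairable bookkeeping gap rather than a wrong one; the repair is simply to argue, as the paper does, from non-firing in $C'$ to non-firing in $C$.
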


\begin{proof}
We show the proof for the boundary sensitive model.
Let $A$ be an arbitrary solution of $\Gamma$ and let $v''$ be 
a node in $C'$ such that ${\rm d}_{C'}(v_{\rm gen}, v'') > t$.
Then ${\rm st}(v'', t, C', A) = {\rm Q}$ 
and hence 
${\rm st}(v', t, C', A) \not\in \mathcal{F}$ 
because $A$ is a solution.
By Fact 3, ${\rm st}(v, t, C, A) \not\in \mathcal{F}$.
Therefore, $A$ cannot fire $C$ at $t$.
$A$ was an arbitrary solution.
Hence ${\rm mft}_{\Gamma}(C) > t$.
\end{proof}

\mynewpage

\section{Variations of FSSP for L-shaped paths}
\label{section:l_shaped_paths}

In this section we consider the four variations of FSSP for 
L-shaped paths, that is, 
LSP, LSP$[a,b]$ (see Tables \ref{table:fig027}) and their 
generalized variations g-LSP, g-LSP$[a,b]$.
All configurations of these variations are L-shaped paths 
$C_{\rm L}(w,h)$ 
(see Figure \ref{figure:fig001} (a)).
We assume that 
$p_{0}, p_{1}, \ldots, p_{w-1}$, 
$p_{w}, \ldots, p_{w+h-1}, p_{w+h}$ 
are nodes in $C_{\rm L}(w,h)$, 
$p_{0}$ being the west terminal, 
$p_{w}$ being the corner, and 
$p_{w+h}$ being the north terminal.
When we consider $C_{\rm L}(w, h)$ as a configuration 
of generalized variations $\text{\rm g-LSP}$, 
$\text{\rm g-LSP}[a, b]$ 
such that $p_{i}$ is the 
position of the general, we denote it by 
$C_{\rm L}(w, h, i)$.
We may simply write $C(w, h)$, $C(w, h, i)$ 
instead of $C_{\rm L}(w, h)$, $C_{\rm L}(w, h, i)$.

\mynewpage

\subsection{The variations {\rm LSP} and {\rm g-LSP}}
\label{subsection:l_shaped_no_restrictions}

The following two theorems (Theorems 
\ref{theorem:lsp}, 
\ref{theorem:g_lsp}) are obvious.
However we show and prove them as examples of 
how to use Lemma \ref{lemma:fact_4} 
to obtain lower bounds of ${\rm mft}(C)$.

\mynewpage

\begin{thm}
\label{theorem:lsp}

\noindent
\begin{enumerate}
\item[{\rm (1)}] ${\rm mft}_{\rm LSP}(C_{\rm L}(w,h)) = 
2(w + h)$.
\item[{\rm (2)}] ${\rm LSP}$ has a minimal-time solution.
\end{enumerate}
\end{thm}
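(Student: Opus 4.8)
The plan is to prove the two inequalities ${\rm mft}_{\rm LSP}(C_{\rm L}(w,h)) \le 2(w+h)$ and ${\rm mft}_{\rm LSP}(C_{\rm L}(w,h)) \ge 2(w+h)$ separately, to conclude (1), and then to observe that the solution used for the upper bound is already a minimal-time solution, which gives (2). I note first that ${\rm bc}_{C}(v_{\rm gen}) = (1,0,0,0)$ for every configuration of LSP, so by Lemma~\ref{lemma:relations_between_bs_and_tr}~(4) the boundary sensitive and traditional models coincide here and the choice of model is immaterial.

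For the upper bound I would use that, as a graph with a distinguished general, $C_{\rm L}(w,h)$ is just a path on $w+h+1$ nodes with the general at one endpoint, i.e.\ a configuration of the original FSSP. A minimal-time solution of the original FSSP (\cite{Balzer_1967, Goto_1962, Waksman_1966}) converts in the obvious way into a solution $A$ of LSP: a node reads its boundary condition to locate its at most two neighbours, and on being first activated records which side faces the general, then simulates the classical solution with that side playing the role of ``left''. This gives ${\rm ft}(C_{\rm L}(w,h), A) = 2(w+h+1) - 2 = 2(w+h)$, hence ${\rm mft}_{\rm LSP}(C_{\rm L}(w,h)) \le 2(w+h)$.

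For the lower bound (the part for which Lemma~\ref{lemma:fact_4} is being illustrated) I would apply that lemma with $t = 2(w+h) - 1$, with $C = C_{\rm L}(w,h)$ and $v$ its general $p_{0}$, and with $C' = C_{\rm L}(w, h')$ for $h' = w + 2h$ and $v'$ its general $p_{0}$. Then ${\rm rad}(C') = w + h' = 2(w+h) > t$, so it remains to verify ${\rm ai}(p_{0}, t, C) = {\rm ai}(p_{0}, t, C')$. Since the observing node is the general, a node $u$ appears in the local map component exactly when ${\rm d}_{C}(v_{\rm gen}, v_{\rm gen}; u) = 2\,{\rm d}_{C}(v_{\rm gen}, u) \le t$, that is exactly for $u \in \{p_{0}, \ldots, p_{w+h-1}\}$, each tagged with its boundary condition; moreover the time components are both $t$ and the relative position of the observing node is $(0,0)$ in both cases. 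The configurations $C_{\rm L}(w,h)$ and $C_{\rm L}(w,h')$ carry the nodes $p_{0}, \ldots, p_{w+h}$ in the same positions, and the first node at which their boundary conditions differ is $p_{w+h}$ (interior of the vertical arm in $C'$, the north terminal in $C$) --- which is precisely the node \emph{not} recorded at time $t$. Hence the two available informations coincide, and Lemma~\ref{lemma:fact_4} gives ${\rm mft}_{\rm LSP}(C_{\rm L}(w,h)) > t = 2(w+h) - 1$, i.e.\ $\ge 2(w+h)$.

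Putting the bounds together gives ${\rm mft}_{\rm LSP}(C_{\rm L}(w,h)) = 2(w+h)$, which is (1); and since the solution $A$ from the upper bound satisfies ${\rm ft}(C_{\rm L}(w,h), A) = {\rm mft}_{\rm LSP}(C_{\rm L}(w,h))$ for every configuration, $A$ is a minimal-time solution, which is (2). I expect no real difficulty: the only things requiring care are the routine transfer of a classical line-FSSP solution to L-shaped paths, and the bookkeeping that makes ${\rm ai}(p_{0}, t, C)$ and ${\rm ai}(p_{0}, t, C')$ literally equal --- in particular the observation that $p_{w+h}$, the one node whose boundary condition tells the two configurations apart, lies just beyond the reach of $v_{\rm gen}$ at time $t = 2(w+h)-1$.
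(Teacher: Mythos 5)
Your proposal is correct and follows essentially the same route as the paper: the upper bound by transplanting a classical minimal-time solution of the original FSSP onto the $(w+h+1)$-node L-shaped path, and the lower bound by applying Lemma~\ref{lemma:fact_4} at $v=p_{0}$, $t=2(w+h)-1$ with a taller L-shaped path $C'=C_{\rm L}(w,h')$ whose radius exceeds $t$. The only difference is cosmetic --- you carry out the available-information bookkeeping in general (with $h'=w+2h$), whereas the paper illustrates it on the concrete instance $C(3,1)$ versus $C(3,5)$ --- and your check that $p_{w+h}$ is exactly the first node whose boundary condition distinguishes $C$ from $C'$ and lies just outside the reach $2j\leq t$ is the right one.
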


\begin{proof} 
First we prove the lower bound 
$2(w + h) \leq {\rm mft}(C(w,h))$ using 
Lemma \ref{lemma:fact_4}.
Let $C$ denote $C(w, h)$.

Intuitively this lower bound is explained as follows.
At time $2(w + h) - 1$ the node $p_{0}$ in $C$ can know 
the position of the corner $p_{w}$ 
but cannot know the position of the north terminal 
$p_{w+h}$.
Hence $p_{0}$ cannot exclude the possibility that 
it is in a configuration $C' = C(w, h')$ such that 
$h < h'$ and ${\rm rad}(C(w, h')) > 2(w + h) - 1$.
In that case $p_{0}$ should not fire because 
some node in $C'$ (for example, the north terminal $p_{w+h'}$) 
is in ${\rm Q}$.
Based on this inference $p_{0}$ cannot fire at time 
$2(w + h) - 1$.

This intuitive explanation can be rewritten as a formal 
proof using Lemma \ref{lemma:fact_4}.
Consider the case $C = C(3, 1)$ (ant hence 
$2(w + h) = 8$).
Select $C(3, 5)$ as $C'$.
Then we have ${\rm ai}(p_{0}, 7, C) = {\rm ai}(p_{0}, 7, C')$ 
and ${\rm rad}(C') = 8 > 7$.
Therefore we have ${\rm mft}(C) > 7$ by the lemma.

In Figure \ref{figure:fig049} (a), (b), (c) 
we show $C = C(3, 1)$, $C' = C(3, 5)$, and 
the local map component of the common available information 
respectively.
In $C'$ in the figure (b) we mark a node $v$ 
such that ${\rm d}(v_{\rm gen}, v) > 7$ with ``*''.

\begin{figure}[htb]
\begin{center}
\includegraphics[scale=1.0]{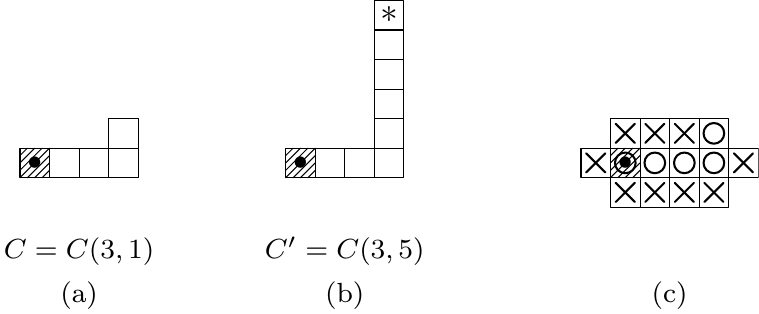}
\end{center}
\caption{(a) shows the configuration $C$ for which 
we prove the lower bound and (b) shows 
another configuration $C'$ that is used 
in applying Lemma \ref{lemma:fact_4}.
(c) shows the local map component of 
${\rm ai}(p_{0}, 7, C) = {\rm ai}(p_{0}, 7, C')$.
}
\label{figure:fig049}
\end{figure}

As for the upper bound, 
we can modify a minimal-time solution of the original FSSP 
(the FSSP of straight lines with $v_{\rm gen}$ at the left ends) 
to a solution of LSP without changing the 
firing time.
Therefore we have a solution that fires $C(w,h)$ at time 
$2(w + h)$ and hence ${\rm mft}(C(w, h)) \leq 2(w + h)$.
This and the above lower bound prove (1) and (2) of the theorem.
\end{proof}

\mynewpage

Next we consider g-LSP.
For this variation we must distinguish the two models of FSSP 
because the condition (4) of Lemma 
\ref{lemma:relations_between_bs_and_tr} is not satisfied.

\begin{thm}
\label{theorem:g_lsp}

\noindent
\begin{enumerate}
\item[{\rm (1)}] 
${\rm mft}_\text{\rm g-LSP}(C_{\rm L}(w, h, i)) 
= \max \{ w + h + i, 2w + 2h - i \}$.
\item[{\rm (2)}] {\rm g-LSP} has a minimal-time solution.
\end{enumerate}
\end{thm}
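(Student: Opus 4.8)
The plan is to prove the two inequalities "${\rm mft}_{\text{\rm g-LSP}}(C_{\rm L}(w,h,i)) \geq \max\{w+h+i,\, 2w+2h-i\}$" and "${\rm mft}_{\text{\rm g-LSP}}(C_{\rm L}(w,h,i)) \leq \max\{w+h+i,\, 2w+2h-i\}$" separately, following exactly the template set up in Theorem \ref{theorem:lsp}. For the configuration $C = C_{\rm L}(w,h,i)$ with general $v_{\rm gen} = p_i$, the radius is ${\rm rad}(C) = \max\{{\rm d}_C(p_i, p_0),\, {\rm d}_C(p_i, p_{w+h})\} = \max\{i,\, w+h-i\}$. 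For the lower bound I would look at whichever terminal is "far" along the path from $p_i$, and produce a witness configuration $C'$ of the same family, obtained by lengthening the $h$-leg (or the $w$-leg, if that is the one being extended by the $\text{\rm g-LSP}$ family — since $\text{\rm g-LSP}$ has no ratio restriction, any $C_{\rm L}(w,h',i)$ with $h' > h$ is a legitimate configuration), so that $p_i$ in $C$ and $p_i$ in $C'$ have the same available information at time $t = \max\{w+h+i,\,2w+2h-i\} - 1$ while ${\rm rad}(C') > t$. Then Lemma \ref{lemma:fact_4} gives ${\rm mft}(C) > t$, i.e. ${\rm mft}(C) \geq \max\{w+h+i,\,2w+2h-i\}$.

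The arithmetic heart of the lower bound is the following. At time $t$ the general $p_i$ learns a boundary condition ${\rm bc}_C(p_{i'})$ only if ${\rm d}_C(v_{\rm gen}, p_i; p_{i'}) = {\rm d}_C(p_i, p_{i'}) \leq t$. Extending the vertical leg past length $h$ changes the configuration only near the north terminal; the closest affected position to $p_i$ (going first from $p_i$ to the corner $p_w$, then up to the old terminal $p_{w+h}$) is at distance $|i - w| + h$ from $p_i$ (if $i \le w$ this is $w - i + h$; the case $i > w$ does not arise for this family since $i \le w+h$ and the general is on the L, but I should still check both legs). So the available information is insensitive to this lengthening as long as $t \le w + h - i + h - 1$... — here I need to be careful: the first component $w+h+i$ comes from the round trip that a firing signal needs (general $\to$ far end $\to$ back is what forces lateness in one direction), and $2w+2h-i$ from the other. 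Concretely I expect: $w+h+i$ is $2\,{\rm d}_C(p_i,p_0) + $ (something), and $2w+2h-i = 2\,{\rm d}_C(p_i, p_{w+h}) + i = 2(w+h-i) + i$. Reconciling the "$p_i$ cannot know the position of the north terminal before time $2(w+h-i) + $ (distance to the short end)" estimate with the stated formula is the step I would do most carefully; it is the place where an off-by-one between the two models (Lemma \ref{lemma:relations_between_bs_and_tr}, part (1)) can creep in, and since $\text{\rm g-LSP}$ is precisely one of the variations flagged in the excerpt where the traditional and boundary-sensitive models differ, I would state the theorem in the traditional model (as written) and track that "$+1$" explicitly.

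For the upper bound I would exhibit a solution, or invoke a known one: a minimal-time solution of g-LSP can be built from a minimal-time solution of the one-dimensional FSSP with the general at an arbitrary interior position (Moore–Langdon \cite{Moore_Langdon_1968}), which fires a segment of length $\ell$ with general at distance $p$ from the nearer end in time $\ell + p$ — wait, the standard figure is $2\ell - p - $ (correction terms); rather than rely on memory I would instead use the partial-solution device from the excerpt: design an explicit partial solution of g-LSP whose firing time on $C_{\rm L}(w,h,i)$ is exactly $\max\{w+h+i,\,2w+2h-i\}$, by sending signals from $p_i$ to both ends, bouncing, and firing when the "late" end's echo has propagated back across the whole L; then ${\rm mft}_{\text{\rm g-LSP}}(C) \le {\rm ft}(C, A)$ for that partial solution. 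Matching the two bounds then gives part (1), and since the partial-solution construction works uniformly in $(w,h,i)$ it upgrades to an actual solution, giving part (2). I expect the main obstacle to be the lower-bound bookkeeping — choosing the correct leg to extend for each of the two maximands and verifying the available-information equality down to the exact critical time $t$, including the traditional-model $+1$.
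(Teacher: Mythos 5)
Your upper-bound plan is essentially the paper's: modify the Moore--Langdon minimal-time solution of the generalized one-dimensional FSSP, whose firing time for a segment of length $n-1$ with the general at distances $p,q$ from the two ends is $(n-1)+\max\{p,q\}$, which here is $(w+h)+\max\{i,\,w+h-i\}=\max\{w+h+i,\,2w+2h-i\}$. One caution: a per-configuration ``check and broadcast'' partial solution does \emph{not} ``upgrade uniformly'' to a solution, since its state count grows with $w,h,i$ (cf.\ Theorem \ref{theorem:mns_lsp_a_b}); for part (2) you genuinely need the uniform Moore--Langdon-type automaton. The model bookkeeping you worry about is resolved by proving the lower bound in the boundary-sensitive model, the upper bound in the traditional model, and sandwiching via ${\rm mft}_{\rm bs}(C)\leq{\rm mft}_{\rm tr}(C)$.

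The genuine gap is in the lower bound: you apply Lemma \ref{lemma:fact_4} at the general, comparing ${\rm ai}(p_i,t,C)$ with ${\rm ai}(p_i,t,C')$ at $t=\max\{w+h+i,\,2w+2h-i\}-1$. This cannot work. First, your identity ${\rm d}_C(v_{\rm gen},p_i;p_{i'})={\rm d}_C(p_i,p_{i'})$ drops a factor of two: since the observer is $v_{\rm gen}=p_i$ itself, the definition gives ${\rm d}_C(v_{\rm gen},p_{i'})+{\rm d}_C(p_{i'},p_i)=2\,{\rm d}_C(p_i,p_{i'})$. Consequently the general's available information contains the boundary condition of \emph{every} node of $C$ by time $2\,{\rm rad}(C)=2\max\{i,\,w+h-i\}$, and since $(w+h)+\max\{i,\,w+h-i\}-1\geq 2\max\{i,\,w+h-i\}$ whenever $1\leq i\leq w+h-1$, at your critical time the general already determines $C$ completely; any $C'$ with matching available information is a translate of $C$, so ${\rm rad}(C')={\rm rad}(C)\leq t$ and the hypothesis ${\rm rad}(C')>t$ of Lemma \ref{lemma:fact_4} fails. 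The observer must instead be the terminal \emph{opposite} to the leg being lengthened: $p_{w+h}$ cannot know ${\rm bc}_C(p_0)$ before time ${\rm d}(p_i,p_0)+{\rm d}(p_0,p_{w+h})=i+(w+h)$, giving the bound $w+h+i$ with a witness obtained by lengthening the west leg, and $p_0$ cannot know ${\rm bc}_C(p_{w+h})$ before time $(w+h-i)+(w+h)=2w+2h-i$, giving the other bound with a lengthened north leg. That is exactly the argument the paper uses.
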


\begin{proof}
Intuitively the two lower bounds 
$w + h + i \leq {\rm mft}_{\rm bs}(C(w, h, i))$, 
$2w + 2h - i \leq {\rm mft}_{\rm bs}(C(w, h, i))$ 
for the boundary sensitive model 
are obvious because 
at time $(w + h + i) - 1$ the north terminal $p_{w+h}$ 
cannot know the position of the west terminal $p_{0}$ 
and 
at time $(2w + 2h - i) - 1$ the west terminal $p_{0}$ 
cannot know the position of the north terminal $p_{w+h}$.
We can translate these intuitive explanations into formal 
proofs using Lemma \ref{lemma:fact_4}.

As for the upper bound 
${\rm mft}_{\rm tr}(C(w, h, i)) \leq 
\max \{ w + h + i, 2w + 2h - i \}$ for 
the traditional model, 
we can construct a solution of g-LSP 
of the traditional model 
that fires 
$C(w, h, i)$ at 
$\max \{ w + h + i, 2w + 2h - i \}$ 
by modifying the minimal-time solution of 
the original g-FSSP by Moore and Langdon 
(\cite{Moore_Langdon_1968}).
Hence we have this upper bound.

Combining these results and the relation 
${\rm mft}_{\rm bs}(C(w, h, i)) \leq {\rm mft}_{\rm tr}(C(w,$ \\
$ h, i))$
we have (1), (2) of the theorem.
\end{proof}

\mynewpage

\subsection{The variation ${\rm LSP}[a,b]$}
\label{subsection:l_a_b}

In this subsection we consider the variation 
${\rm LSP}[a,b]$ and show three results 
by Yamashita et al (\cite{Yamashita_et_al_2014}) 
reformulated as results on this variation.
For two of them (Theorems \ref{theorem:mft_lsp_a_b}, 
\ref{theorem:yamashita_result_2})
we also show proofs.

First we determine the value 
${\rm mft}_{{\rm LSP}[a,b]}(C_{\rm L}(w,h))$.

\begin{thm}[Theorem 1 in \cite{Yamashita_et_al_2014}]
\label{theorem:mft_lsp_a_b}
${\rm mft}_{\text{\rm LSP}[a,b]}(C_{\rm L}(w,h))$ is 
$2w$ for  $a > b$ and $w + h$ for $a \leq b$.
\end{thm}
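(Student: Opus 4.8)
The plan is to establish the two cases of the formula by proving matching upper and lower bounds for $\mathrm{mft}_{\mathrm{LSP}[a,b]}(C_{\mathrm L}(w,h))$, where $w=al$, $h=bl$ for some $l\ge 1$ (say $l\ge 2$ as in the definition). For the lower bound I would invoke Lemma~\ref{lemma:fact_4} in the same style as in the proofs of Theorems~\ref{theorem:lsp} and~\ref{theorem:g_lsp}: one finds, for the configuration $C=C_{\mathrm L}(al,bl)$ and a candidate firing time $t$, another configuration $C'=C_{\mathrm L}(al',bl')$ of the same variation together with a node $v'\in C'$ such that $\mathrm{ai}(p_0,t,C)=\mathrm{ai}(v',t,C')$ while $\mathrm{rad}(C')>t$; the lemma then gives $\mathrm{mft}(C)>t$. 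The key point is that inside $\mathrm{LSP}[a,b]$ the only available configurations are the scaled L's $C_{\mathrm L}(al',bl')$, so the node at the west terminal $p_0$, having seen only a prefix of the horizontal arm (and, once $t$ is large enough, the corner and part of the vertical arm), cannot rule out a larger value of $l$.

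Concretely, in the case $a>b$ the relevant ignorance is about the length of the \emph{horizontal} arm: at time $2w-1=2al-1$ the terminal $p_0$ has learned the boundary conditions of all nodes $p_j$ with $\mathrm{d}_C(v_{\mathrm{gen}},p_0;p_j)=2j\le 2al-1$, i.e.\ all of $p_0,\dots,p_{al-1}$ but not the corner $p_{al}$; hence $p_0$ cannot distinguish $C$ from a much longer $C_{\mathrm L}(al',bl')$ with $l'\gg l$, whose radius exceeds $2al-1$. This yields $2w\le\mathrm{mft}(C)$. In the case $a\le b$ the horizontal arm is no longer the bottleneck (it is short relative to the total), and the relevant ignorance is about the length of the \emph{vertical} arm: at time $w+h-1=(a+b)l-1$ the west terminal $p_0$ has just enough time to learn the corner position and a prefix of the vertical arm but not the north terminal $p_{w+h}$ (the distance $p_0\!\to\!p_{w+h}$ is $w+h=(a+b)l$), so again $p_0$ cannot exclude a scaled-up L with larger radius, giving $w+h\le\mathrm{mft}(C)$. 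One must check in each case that the witnessing configuration $C'$ can be taken of the form $C_{\mathrm L}(al',bl')$ and that the common available information genuinely coincides — this is a routine but slightly fiddly comparison of local maps, exactly analogous to Figure~\ref{figure:fig049}.

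For the matching upper bounds I would exhibit partial solutions (or full solutions) of $\mathrm{LSP}[a,b]$ with the claimed firing times and appeal to the inequality $\mathrm{mft}_\Gamma(C)\le\mathrm{ft}(C,A)$ for $A$ in the domain. When $a>b$, a solution of the original FSSP, suitably adapted to an L-shaped track, synchronizes $C_{\mathrm L}(w,h)$ at time $2\cdot(\text{length of the path})=2(w+h)$; but since $h=(b/a)w<w$, one can do better by exploiting the fixed ratio: the corner acts as a predictable ``sub-general'' position at distance $w$ from $v_{\mathrm{gen}}$, and Yamashita et al's construction (via the correspondence with $\mathrm{FSSP}_{\mathrm{SUB}}(a+b,b)$) fires at $2w$. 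When $a\le b$, the path length is $w+h$ and a straightforward adaptation already fires at $2(w+h)$; here the improvement to $w+h$ comes because $w+h$ is itself at least the radius plus enough slack — indeed $\mathrm{rad}(C)=w+h$ and one can synchronize an L of the special shape in time equal to its diameter-like quantity $w+h$ using the known-ratio trick. I would simply cite \cite{Yamashita_et_al_2014} for the existence of these solutions rather than rebuild them. The main obstacle is the lower-bound bookkeeping: correctly identifying, in the two regimes $a>b$ and $a\le b$, precisely which terminal is the ``blind'' one at the critical time and producing an explicit larger configuration in the \emph{same} variation whose available information at that node and time agrees with that of $C$; getting the threshold exactly at $2w-1$ (resp.\ $w+h-1$) rather than off by one requires care about the distance-sum condition $\mathrm{d}_C(v_{\mathrm{gen}},v;v'')\le t$ in the definition of $\mathrm{ai}$.
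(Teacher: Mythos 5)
Your overall framework (Lemma~\ref{lemma:fact_4} for lower bounds, partial solutions plus ${\rm mft}_\Gamma(C)\le{\rm ft}(C,A)$ for upper bounds) is the paper's, and your treatment of the case $a>b$ is essentially correct. But there is a genuine error in your lower bound for $a\le b$. You claim that at time $w+h-1$ the west terminal $p_{0}$ ``has just enough time to learn the corner position \ldots\ so again $p_{0}$ cannot exclude a scaled-up L.'' These two halves contradict each other: when $a<b$ we have $2w\le w+h-1$, so the round-trip condition ${\rm d}_{C}(v_{\rm gen},p_{0};p_{w})=2w\le t$ is satisfied and the corner's boundary condition is already in ${\rm ai}(p_{0},w+h-1,C)$. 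Since the ratio $w/h=a/b$ is fixed inside ${\rm LSP}[a,b]$, knowing the corner position determines the entire configuration, so \emph{no} larger $C_{\rm L}(al',bl')$ has matching available information at $p_{0}$, and the witnessing $C'$ you promise does not exist. The paper does not need any of this: for $a\le b$ the lower bound is simply $w+h={\rm rad}(C)\le{\rm mft}(C)$ (equivalently, apply Lemma~\ref{lemma:fact_4} with $C'=C$ and the still-quiescent north terminal). Your mechanism only works in the boundary case $a=b$, where $2w>w+h-1$.

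The upper bounds also need repair. For $a\le b$ there cannot exist a single solution of ${\rm LSP}[a,b]$ firing every configuration at $w+h$ --- that would be a minimal-time solution, contradicting Theorem~\ref{theorem:yamashita_result_2} --- so ``synchronizing in time $w+h$'' must mean one \emph{partial} solution per configuration, and deferring to \cite{Yamashita_et_al_2014} is circular here since the statement being proved \emph{is} Theorem~1 of that paper. The missing construction is short: for a fixed $C(w_{0},h_{0})$, a single speed-one signal travels east from $v_{\rm gen}$, verifies $w=w_{0}$ upon reaching the corner at time $w_{0}$ (checking its boundary condition), and broadcasts a message; since every node of $C(w_{0},h_{0})$ is within distance $\max\{w_{0},h_{0}\}$ of the corner and the fixed ratio guarantees that generation of the message forces $C=C(w_{0},h_{0})$, all nodes fire at $w_{0}+\max\{w_{0},h_{0}\}$, which equals $2w$ when $a>b$ and $w+h$ when $a\le b$. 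This single construction yields both upper bounds at once.
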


\begin{proof}

(1) Proof for $a > b$.
Intuitively the lower bound 
$2w \leq {\rm mft}(C(w, h))$ is obvious 
because at time $2w - 1$ the node $p_{0}$ 
cannot know the position of the corner $p_{w}$.
We can translate this informal explanation into 
a formal proof using Lemma \ref{lemma:fact_4}.

\mynewpage

As for the proof of the upper bound, 
we consider the case $a = 3$, $b = 2$, 
$w = 6$, $h = 4$ as an example and 
prove ${\rm mft}_{\text{\rm LSP}[3,2]}(C(6, 4)) \leq 12$.
For this we construct 
a partial solution $A$ with the domain 
$\{C(6,4)\}$ that fires $C(6,4)$ at time $12$.
First we explain $A$ intuitively.

Suppose that copies of $A$ are placed on a configuration 
$C = C(w, h)$ of $\text{\rm LSP}[3, 2]$.
$A$ uses a signal ${\rm R}$ to check the statement $w = 6$.
If the statement is true then the signal broadcasts a message 
``$w = 6$'' to all nodes and vanishes.
If the statement is not true then the signal simply vanishes.
A more detailed explanation of the behavior of the signal ${\rm R}$ 
is as follows.

The signal ${\rm R}$ starts at the general node $p_{0}$ at time $0$ 
and proceeds to the east $6$ steps to arrive at $p_{6}$.
If it fails to arrive at $p_{6}$ then $w < 6$ and the signal vanishes.
If it arrives at $p_{6}$ but the boundary condition of $p_{6}$ is 
not $(0, 1, 1, 0)$ (the boundary condition of the southeast corner) then 
$6 < w$ and the signal vanishes.
Otherwise the statement $w = 6$ is true 
and hence the signal ${\rm R}$ 
broadcasts the message ``$w = 6$'' to all nodes and vanishes.

We design $A$ so that a node fires at time $12$ if 
it has received the message ``$w = 6$'' before or at time $12$.
Suppose that $C(w, h) = C(6, 4)$.
Then the message is generated at $p_{6}$ at time $6$ and hence 
all nodes receive the message before or at time $12$ ($= 6 + 6$) 
because ${\rm d}_{C(6, 4)}(p_{6}, v) \leq 6$ for any node $v$ 
in $C(6, 4)$. Hence all nodes fire at time $12$.
Conversely suppose that a node in $C(w, h)$ fires at some time.
This means that the message ``$w = 6$'' was generated, 
the statement $w = 6$ is true, and hence $C(w, h) = C(6, 4)$.
Therefore, $A$ is a partial solution that has $\{C(6, 4)\}$ 
as its domain and that fires $C(6, 4)$ at time $12$.
This shows the desired upper bound ${\rm mft}(C(6, 4)) \leq 12$.

A formal definition of $A$ is as follows.
$A$ uses six states ${\rm R}_{0}$, ${\rm R}_{1}$, \ldots, 
${\rm R}_{5}$ to simulate the travel of the signal ${\rm R}$ and 
seven states ${\rm S}_{6}$, ${\rm S}_{7}$, \ldots, 
${\rm S}_{12}$ to simulate the generation and the 
propagation of the message.
${\rm R}_{0}$ plays the role of the general state ${\rm G}$ 
and ${\rm S}_{12}$ plays the role of the firing state ${\rm F}$.
$A$ has also the state ${\rm Q}$ and hence $A$ has 
$6 + 7 + 1 = 14$ states.

Let $\delta(s, s_{0}, \ldots, s_{3})$ denote the state of 
a node $v$ at time $t + 1$ when the state of $v$ is $s$ 
and the values of the inputs to $v$ from the east, the north, the west, 
and the south are $s_{0}, \ldots, s_{3}$ respectively at time $t$.
The function $\delta$ is defined as follows:
\begin{enumerate}
\item[(1)] $\delta({\rm Q}, {\rm Q}, \#, {\rm R}_{i}, \#) 
= {\rm R}_{i+1}$ ($0 \leq i \leq 4$),
\item[(2)] $\delta({\rm Q}, \#, {\rm Q}, {\rm R}_{5}, \#) = {\rm S}_{6}$,
\item[(3)] $\delta(s, s_{0}, \ldots, s_{3}) = {\rm S}_{i+1}$, 
($6 \leq i \leq 11$, 
all of $s, s_{0}, \ldots, s_{3}$ are in 
$\{{\rm S}_{i}, {\rm Q}, \#\}$, 
and at least one of them is ${\rm S}_{i}$),
\item[(4)] $\delta(s, s_{0}, \ldots, s_{3}) = {\rm Q}$ for all other 
combinations of $s, s_{0}, \ldots, s_{3}$.
\end{enumerate}
The rule (1) simulates the travel of the signal ${\rm R}$, 
the rule (2) simulates the generation of the message, 
and the rule (3) simulates the propagation of the message.
The index $i$ of ${\rm R}_{i}$, ${\rm S}_{i}$ denotes the current time.

\mynewpage

\noindent
(2) Proof for $a \leq b$. 
The lower bound $w + h \leq {\rm mft}(C(w,h))$ 
is obvious because 
$w + h = {\rm rad}(C(w,h)) \leq {\rm mft}(C(w,h))$.
For the upper bound 
${\rm mft}(C(w, h)) \leq w + h$ 
we use the case 
$a = 3$, $b = 4$, $w = 6$, $h = 8$ 
as an example and prove ${\rm mft}(C(6, 8)) \leq 14$.
We modify the construction of the partial solution $A$ in (1) 
as follows.

The signal ${\rm R}$ checks the statement $w = 6$ and 
broadcasts a message ``$w = 6$'' to all nodes if the statement is true.
A node fires at time $14$ if it has received the message before or at 
time $14$.
(Formally, $A$ uses $16$ states 
${\rm R}_{0}$, ${\rm R}_{1}$, \ldots, ${\rm R}_{5}$, 
${\rm S}_{6}$, \ldots, ${\rm S}_{14}$, ${\rm Q}$.)
In this case, if the given configuration $C(w, h)$ is 
$C(6, 8)$ then the message ``$w = 6$'' 
is generated at $p_{6}$ at time $6$ 
and any node $v$ in $C(6, 8)$ receives the message before or at 
time $14$ ($= 6 + 8$) because ${\rm d}_{C(6, 8)}(p_{6}, v) \leq 8$ 
for any node $v$ in $C(6, 8)$.
Thus we obtain a partial solution that has the domain $\{C(6, 8)\}$ 
and that fires $C(6, 8)$ at time $14$.
\end{proof}

The above idea to prove upper bounds of ${\rm mft}(C)$ 
is used repeatedly in this paper.
In the proof we construct a partial solution $A$ that has 
$C$ in its domain.
In $A$, several signals start at $v_{\rm gen}$ at time $0$,  
proceed with speed $1$ and check some statements concerning 
the structure of the configuration.
If a signal checks a statement and the statement is true, 
the signal broadcasts a message to all nodes. 
Each node fires at a predetermined time if 
a certain condition on messages that have been received by the node 
before or at the time is satisfied.
From now on, we call a partial solution constructed using this idea 
a ``{\it check and broadcast partial solution}.''

\mynewpage

Of the following two theorems, the latter is 
relevant to this paper.

\begin{thm}[Algorithm 2 in \cite{Yamashita_et_al_2014}]
\label{theorem:yamashita_result_1}
If $a > b$ then ${\rm LSP}[a,b]$ has 
a minimal-time solution.
\end{thm}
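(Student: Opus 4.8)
The plan is to exhibit an explicit minimal-time solution of $\text{\rm LSP}[a,b]$ for the case $a > b$. By Theorem \ref{theorem:mft_lsp_a_b}, we must construct a solution $A$ such that ${\rm ft}(C_{\rm L}(w,h), A) = 2w$ for every configuration $C_{\rm L}(w,h)$ of $\text{\rm LSP}[a,b]$ (so $w = al$, $h = bl$ for some $l \geq 1$, and in particular $h = (b/a)w < w$). The key observation making this easy is that on such a configuration the node $p_{w}$ (the corner) is the \emph{unique} node at distance exactly $w$ from $v_{\rm gen}$ that is at maximum distance, and, more importantly, $2w$ is precisely twice the distance from $v_{\rm gen}$ to the corner while being strictly larger than the radius $w + h$ (since $w > h$). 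Thus the corner can be used as a ``secondary origin'' from which a firing wave is launched.

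First I would describe $A$ informally as a check-and-broadcast partial solution, upgraded to a genuine solution. At time $0$ a ``detector'' signal ${\rm R}$ leaves $v_{\rm gen} = p_{0}$ and travels east along the horizontal arm at speed $1$. It reads boundary conditions as it goes. When it reaches a node whose boundary condition is $(0,1,1,0)$ — the southeast corner of the L — it has located the corner $p_{w}$ at time $w$; it then emits a firing wave that propagates at speed $1$ in all directions through the configuration. Every node fires when this wave reaches it. If $C_{\rm L}(w,h) = C_{\rm L}(al, bl)$, the wave is emitted at $p_{w}$ at time $w$, and since ${\rm d}_{C}(p_{w}, v) \leq w$ for every node $v$ (the horizontal arm has length $w$ from $p_{w}$, the vertical arm has length $h = bl < al = w$, so every node is within $w$ of the corner), the last node fires at time exactly $2w$. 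So every node fires simultaneously at time $2w = {\rm mft}(C_{\rm L}(w,h))$ — provided the wave reaches all nodes at the \emph{same} time $2w$, which is generally false: a node $v$ with ${\rm d}_{C}(p_{w},v) = d < w$ would be reached at time $w+d < 2w$. To fix this, I would have the corner instead emit, at time $w$, a signal that bounces off the two far ends (the west terminal $p_{0}$ and the north terminal $p_{w+h}$) and the returning echoes trigger the synchronized firing — i.e.\ run an ordinary Waksman/Balzer-type minimal-time FSSP \emph{from the corner} on each of the two arms, which synchronizes each arm of length $w$ (resp.\ $h \le w$) at time $2w$ by the standard construction, scaled so both arms fire at the common time $2w$. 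The standard minimal-time solution for a line of length $w$ with general at one end fires at time $2w$; for the shorter arm of length $h$, one delays the general's effective start or uses the known general-at-interior-point construction so that it also fires at $2w$ rather than at $2h$. This is exactly the kind of modification alluded to in the proof of Theorem \ref{theorem:lsp}(2) and Theorem \ref{theorem:g_lsp}(2).

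Next I would verify that $A$ is a genuine \emph{solution} of $\text{\rm LSP}[a,b]$, not merely a partial solution: on every configuration $C_{\rm L}(al,bl)$ the detector ${\rm R}$ does find the corner (the boundary condition $(0,1,1,0)$ occurs exactly once, at $p_{w}$, along the path ${\rm R}$ traverses), the firing wave is launched, and firing occurs at $2w$ as computed. Combined with the lower bound $2w \le {\rm mft}(C_{\rm L}(w,h))$ already established in Theorem \ref{theorem:mft_lsp_a_b}(1), this gives ${\rm ft}(C_{\rm L}(w,h), A) = {\rm mft}_{\text{\rm LSP}[a,b]}(C_{\rm L}(w,h))$ for all configurations, i.e.\ $A$ is a minimal-time solution.

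The main obstacle is the synchronization bookkeeping: getting \emph{both} arms (one of length $w$, one of length $h = (b/a)w \le w$) to fire at the single common time $2w$, using only finite state and the fact that the ratio $h/w = b/a$ is the fixed rational $b/a$. This is a matter of running the classical minimal-time construction with a built-in delay of $2(w-h)$ on the short arm, realized by sending the ``start'' signal for the short-arm synchronization out of the corner only after a detour of length $w - h$ — and $w - h$ can be measured because the configuration's shape forces $w - h = ((a-b)/a)\cdot w$, a fixed fraction of a length the corner has just traversed. I expect this to be entirely routine given the standard FSSP toolkit (signals at speeds $1$ and $1/3$, the recursive halving of Waksman/Balzer), so the theorem follows; the reference to Algorithm 2 in \cite{Yamashita_et_al_2014} supplies precisely such a construction, and one may simply invoke it.
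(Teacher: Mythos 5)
The paper itself offers no proof of this theorem --- it is imported as Algorithm~2 of \cite{Yamashita_et_al_2014} --- so the only question is whether your construction stands on its own. It does not: there is a genuine gap in the treatment of the vertical arm, and it is exactly the case $b < a < 2b$ that breaks.

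Your plan is to have the corner $p_{w}$, once activated at time $w$, launch a classical minimal-time synchronization of each arm, delaying the short arm so that both fire at $2w$. Two problems. First, the arithmetic: a classical FSSP on a line of length $n$ whose general starts at time $\tau$ fires at $\tau + 2n$, so launched \emph{from the corner} at time $w$ the long arm fires at $3w$, not $2w$ (the long arm must instead be synchronized from $p_{0}$ starting at time $0$, with the corner serving only as the detectable right end); likewise your proposed delay of $2(w-h)$ on the short arm gives $w + 2(w-h) + 2h = 3w$, and the delay that actually yields $2w$ is $w - 2h$. Second, and fatally, $w - 2h = (a-2b)l$ is negative whenever $b < a < 2b$ (e.g.\ $a=3$, $b=2$): the vertical arm has length $h$, it is entirely quiescent until the activation wave crosses the corner at time $w$, and you are asking a single general at one end of an otherwise untouched segment of length $h$ to synchronize it within $w < 2h$ further steps --- which no delay, detour, or rescaling of the Waksman/Balzer construction can achieve, since $2h$ is the minimal synchronization time of a length-$h$ segment driven from one end with no prior information. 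The only escape is to import timing information across the corner from the horizontal arm (for instance, signals of fixed rational speeds determined by $a$ and $b$ that pass through $p_{w}$ and plant suitably timed secondary generals on the vertical arm, exploiting the fact that the corner's activation phase already encodes $h = (b/a)w$); that is the actual content of Yamashita et al's Algorithm~2, and it is precisely the part your sketch dismisses as ``entirely routine.'' As written, your argument covers only $a \geq 2b$, and even there only after correcting the delay; for $b < a < 2b$ it reduces to the citation you append at the end.
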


\begin{thm}[Theorem 2 in \cite{Yamashita_et_al_2014}]
\label{theorem:yamashita_result_2}
If $a \leq b$ then ${\rm LSP}[a,b]$ has no 
minimal-time solutions.
\end{thm}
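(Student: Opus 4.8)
The plan is to argue by contradiction, following the automata-theoretical method of Yamashita et al.\ (\cite{Yamashita_et_al_2014}). Suppose $A$ is a minimal-time solution of $\text{\rm LSP}[a,b]$ with $a\le b$. By Theorem~\ref{theorem:mft_lsp_a_b}, ${\rm ft}(C_{\rm L}(al,bl),A)=(a+b)l$ for every $l\ge 1$; since ${\rm rad}(C_{\rm L}(al,bl))=al+bl$, the automaton $A$ fires the whole family ``exactly at the radius,'' i.e.\ no node of $C_{\rm L}(al,bl)$ fires before time $(a+b)l$ and all nodes fire at time $(a+b)l$. All the difficulty lies in showing that one fixed finite automaton cannot do this for arbitrarily large $l$. (It is equivalent, via the correspondence noted after Figure~\ref{figure:fig053}, to argue in the sub-general model, where the configuration is the line $A_{a+b,b,l}$ with the general at the left end and the corner at cell $al$; the bookkeeping is the same and I describe the argument for the L-shaped configurations.)

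First I would determine what the cells of the vertical arm can know. Fix a large $l$, put $C=C_{\rm L}(al,bl)$, and let $p_{al},p_{al+1},\dots,p_{al+bl}=p_{(a+b)l}$ be the cells of the vertical arm. From the definition of ${\rm ai}(v,t,C)$ and Fact~3 one checks that an interior vertical cell $p_{al+k}$ with $0<k<bl$ already ``sees'' the corner, hence the value $al$ and hence $l$, the instant it is first activated at time $al+k$; but the north terminal $p_{(a+b)l}$ enters ${\rm ai}(p_{al+k},t,C)$ only at time ${\rm d}_C(v_{\rm gen},p_{(a+b)l})+{\rm d}_C(p_{(a+b)l},p_{al+k})=al+2bl-k$, which exceeds the firing time $(a+b)l$ whenever $k<bl$. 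Thus the behavior of the interior of the vertical arm, up to and including the firing instant, is controlled by available information that never records the position of the north end.

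The heart of the argument is then a pigeonhole over the vertical arm. Because $A$ has finitely many states but $bl\to\infty$, for $l$ large enough there are interior positions $p_{al+k}$, $p_{al+k'}$ with $k<k'$ whose space-time patterns of states over the pertinent window coincide once time is translated by $k'-k$; propagating this coincidence by the local transition rule of $A$ (and using the previous paragraph to see that the ``missing'' information about the north end cannot break it) shows that $A$ acts on $p_{al+k}$ and $p_{al+k'}$ in the same way relative to the moment each is first activated. These two cells are activated $k'-k$ steps apart, yet both must fire at the same absolute time $(a+b)l$. The way to turn this into a contradiction is to use the coincidence as a splice: delete (or duplicate) the length-$(k'-k)$ block of the vertical arm together with the matching block of the horizontal arm, obtaining a \emph{legal} configuration $C_{\rm L}(al'',bl'')$ of $\text{\rm LSP}[a,b]$ on which the inherited behavior of $A$ fires at a time differing from $(a+b)l''$. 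Here the hypothesis $a\le b$ is what creates room: it guarantees that the vertical arm is long enough, relative to the distance information from the general and the corner can travel before the deadline $(a+b)l$, that such a period-length block can be removed or inserted while keeping the two arms in the ratio $a:b$.

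The step I expect to be the main obstacle is precisely this splicing step. Unlike in Theorems~\ref{theorem:lsp} and~\ref{theorem:g_lsp}, one cannot simply match available informations across two configurations and invoke Fact~3 or Lemma~\ref{lemma:fact_4}: the available information of a vertical cell is inherently configuration-dependent, since every activated vertical cell already encodes the scale $l$ through the corner's position. One must instead match the coarser data carried by the \emph{states}, prove that the local update keeps the match alive along an entire strip of space-time, and choose the inserted or deleted blocks so that the resulting legal configuration of $\text{\rm LSP}[a,b]$ demonstrably fires too early or too late. Making this combinatorial surgery respect the exact ratio $a:b$ is the delicate ingredient that has to be imported from the argument of Yamashita et al.\ (\cite{Yamashita_et_al_2014}).
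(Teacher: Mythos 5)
Your setup (pigeonhole over the long vertical arm, using $a \le b$, and the observation that information about the north terminal cannot reach interior vertical cells before the deadline $w+h$) points in the right direction, but the contradiction mechanism you propose --- splicing out a block to obtain a different legal configuration $C_{\rm L}(al'',bl'')$ --- is both left unexecuted and, as described, would not go through. To keep the ratio $a:b$ you must also delete a proportional block from the horizontal arm; but that moves the corner, so $C_{\rm L}(al,bl)$ and $C_{\rm L}(al'',bl'')$ already present different available information to the cells near the general by time $2al''$, and the space-time diagram of the shorter configuration is \emph{not} a restriction or translation of the original one. There is no strip of space-time along which the match can be ``kept alive'': cells behind the activation frontier depend on the whole history, which differs between the two configurations from early times on. You defer exactly this step to be ``imported from the argument of Yamashita et al.,'' but that step is the entire proof, and their actual argument performs no surgery on configurations at all.

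The paper's proof stays inside the single configuration $C = C_{\rm L}(w,h)$. Writing $s_i^t = {\rm st}(p_i,t,C,\tilde{A})$, one notes that at any time $t \le w+h$ every node $p_i$ with $i>t$ is still quiescent, so the pair of states at the activation frontier, $s_{t-1}^{t}\, s_{t}^{t}$, evolves autonomously: it determines $s_{t}^{t+1}\, s_{t+1}^{t+1}$ whenever $t+2 \le w+h$. Pigeonholing this pair over the $h-1$ times $t$ with $w+1 \le t \le w+h-1$ (this is where $a \le b$ enters, making $h$ large compared with the square of the number of states) yields $t_0 < t_1$ with equal frontier pairs; propagating the equality forward until the $t_1$-copy reaches the north terminal gives $s_{t_2}^{t_2+1} = s_{w+h-1}^{w+h} = {\rm F}$ with $t_2+1 = w+h-(t_1-t_0) < w+h$, i.e.\ a node of the \emph{same} configuration fires prematurely --- a direct contradiction with minimality, requiring no second configuration and no matching of available information. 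I would rework your argument along these lines; in particular, replace the vaguely specified ``space-time patterns over the pertinent window'' (which, if the window is unbounded, defeats the pigeonhole) by the precise finite object, the frontier pair, to which the counting is applied.
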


\begin{proof}
We assume that $\tilde{A}$ is a solution of $\text{\rm LSP}[a,b]$ 
that fires $C(w,h)$ 
at time $w + h$ ($= {\rm mft}(C(w,h))$) 
for any configuration $C(w,h)$ of ${\rm LSP}[a,b]$, 
and derive a contradiction.
For any $i$ ($0 \leq i \leq w+h$) and 
any $t$, by $s_{i}^{t}$ we denote the state 
${\rm st}(p_{i}, t, C(w,h), \tilde{A})$.
($s_{i}^{t}$ depends also on implicitly understood 
$w$, $h$.)
Note that for any $t$ ($\leq w + h$), only nodes $p_{i}$ 
with $0 \leq i \leq t$ may be in nonquiescent states.

$\tilde{A}$ has only a finite number of states.
Therefore if $w$, $h$ are sufficiently large 
there are two times $t_{0}$, $t_{1}$ such that 
$w+1 \leq t_{0} < t_{1} \leq w+h-1$ and 
$s_{t_{0}-1}^{t_{0}} s_{t_{0}}^{t_{0}} 
= s_{t_{1}-1}^{t_{1}} s_{t_{1}}^{t_{1}}$.
Let $w,h,t_{0},t_{1}$ be such values.

If $t_{1} \leq w + h - 2$, we have 
$s_{t_{0}+1}^{t_{0}} = s_{t_{0}+2}^{t_{0}} = {\rm Q}$ and 
$s_{t_{1}+1}^{t_{1}} = s_{t_{1}+2}^{t_{1}} = {\rm Q}$, 
and hence we have 
$s_{t_{0}}^{t_{0}+1} s_{t_{0}+1}^{t_{0}+1} = 
s_{t_{1}}^{t_{1}+1} s_{t_{1}+1}^{t_{1}+1}$.
In Figure \ref{figure:fig005} (a) we explain this inference with a figure 
that depicts two parts of a configuration in two consecutive times 
side by side.
Two diamonds in two corresponding positions 
$p_{i}$ and $p_{i+t_{1}-t_{0}}$ in the figure ($p_{t_{0}}$ and $p_{t_{1}}$, for example) 
represent two states that must be identical.
A letter ``${\rm Q}$'' at a node means that the node 
really exists in the configuration and that 
its state is ${\rm Q}$.
At these positions in the configuration, 
paths $\ldots p_{t_{0}-1} p_{t_{0}} p_{t_{0}+1} \ldots $ 
and $\ldots p_{t_{1}-1} p_{t_{1}} p_{t_{1}+1} \ldots $ 
proceed vertically (from the south to the north).
However, in Figure \ref{figure:fig005} we draw them horizontally 
to save the space.

\begin{figure}[htb]
\begin{center}
\includegraphics[scale=1.0]{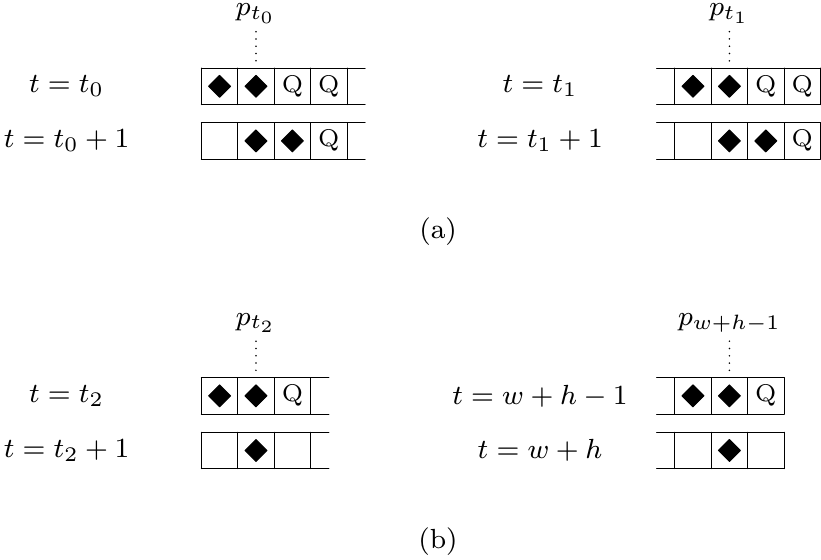}
\end{center}
\caption{Intuitive explanations of the two inferences in 
Theorem \ref{theorem:yamashita_result_2}.}
\label{figure:fig005}
\end{figure}

Repeating this inference, 
we can show 
$s_{t_{2}-1}^{t_{2}} s_{t_{2}}^{t_{2}} = 
s_{w+h-2}^{w+h-1} s_{w+h-1}^{w+h-1}$, 
where $t_{2}$ is the time 
$t_{2} = w + h - 1 - t_{1} + t_{0}$.
Moreover we have 
$s_{t_{2}+1}^{t_{2}} = {\rm Q}$, 
$s_{w+h}^{w+h-1} = {\rm Q}$.
Hence we have 
$s_{t_{2}}^{t_{2}+1} = 
s_{w+h-1}^{w+h}$. 
Again, in Figure \ref{figure:fig005} (b) we 
explain this inference with a figure.
But $s_{w+h-1}^{w+h} = {\rm F}$ 
because $\tilde{A}$ fires $C(w,h)$ at 
time $w+h$.
This means that 
$s_{t_{2}}^{t_{2}+1} = {\rm F}$ 
and hence 
the node $p_{t_{2}}$ is in ${\rm F}$ at 
the time $t_{2} + 1 = w + h - (t_{1} - t_{0}) < 
w + h$.
This is a contradiction.
\end{proof}

\mynewpage

Theorem \ref{theorem:yamashita_result_2} gives 
an interesting result.
For a configuration $C$ of a variation $\Gamma$ we define 
the {\it smallest solution size of $C$ in $\Gamma$}\/ 
(denoted by ${\rm sss}_{\Gamma}(C)$) by
\begin{eqnarray*}
{\rm sss}_{\Gamma}(C) & = & 
\min \{ \text{the number of states of $A$} ~|~ \\
& & \quad\quad \text{$A$ is a solution of $\Gamma$ that fires $C$ 
at ${\rm mft}_{\Gamma}(C)$} \}.
\end{eqnarray*}
This value is well-defined because we assume that $\Gamma$ has 
a solution.

\begin{thm}
\label{theorem:existence_and_mns}
If a variation $\Gamma$ of FSSP has a solution the following 
are equivalent.
\begin{enumerate}
\item[{\rm (1)}] $\Gamma$ has a minimal-time solution.
\item[{\rm (2)}] There is a constant $c$ such that 
${\rm sss}_{\Gamma}(C) \leq c$ for any configuration $C$ of 
$\Gamma$.
\end{enumerate}
\end{thm}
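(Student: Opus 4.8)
The plan is to prove the two implications separately. The direction $(1) \Rightarrow (2)$ is the easy one: if $\tilde A$ is a minimal-time solution of $\Gamma$, then $\tilde A$ itself is, for every configuration $C$, a solution that fires $C$ at ${\rm mft}_\Gamma(C) = {\rm ft}(C,\tilde A)$. Hence ${\rm sss}_\Gamma(C) \le (\text{number of states of }\tilde A)$ for every $C$, and this bound is a single constant $c$ independent of $C$. So $(1)$ immediately gives $(2)$.

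The substance is in $(2) \Rightarrow (1)$. Assume there is a constant $c$ with ${\rm sss}_\Gamma(C) \le c$ for all $C$. The idea is a diagonal/union construction: for each $C$ there is a solution $A_C$ with at most $c$ states firing $C$ at ${\rm mft}_\Gamma(C)$; there are only finitely many finite automata with at most $c$ states (over the fixed input/output conventions of the model), so the collection $\{A_C : C\}$ draws from a \emph{finite} pool $A^{(1)}, \dots, A^{(k)}$ of distinct automata. I would then take $\tilde A$ to be the automaton that simulates all of $A^{(1)}, \dots, A^{(k)}$ in parallel (a product construction: each node carries a $k$-tuple of states) and enters a firing state as soon as \emph{at least one} of the simulated components is in its firing state. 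This is the same ``run several solutions simultaneously and fire when the first one fires'' trick already invoked in the excerpt when proving ${\rm mft}_\Gamma(C) \le {\rm ft}(C,A)$ for a partial solution $A$. One checks that $\tilde A$ is a solution of $\Gamma$: on any configuration $C$, every component $A^{(j)}$ eventually fires $C$ synchronously, and in particular $A_C$ (which is one of the $A^{(j)}$) fires $C$ exactly at time ${\rm mft}_\Gamma(C)$; since no component can fire $C$ earlier than ${\rm mft}_\Gamma(C)$ (by definition of ${\rm mft}$), the combined automaton $\tilde A$ fires $C$ at exactly ${\rm mft}_\Gamma(C)$, and not before. Thus $\tilde A$ satisfies ${\rm ft}(C,\tilde A) = {\rm mft}_\Gamma(C)$ for all $C$, i.e. $\tilde A$ is a minimal-time solution, giving $(1)$.

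The main obstacle — really the only point requiring care — is making the ``finitely many automata with at most $c$ states'' statement precise and ensuring the product construction respects the structural conventions of whichever model (boundary sensitive or traditional) is in force. One must fix, as part of the data of a ``finite automaton'' here, the state set up to renaming, the set of firing states, and (in the boundary sensitive model) the initialization function $\eta : \{0,1\}^4 \to \{0,1,\dots,s-1\}$; with these fixed there are only finitely many such objects with $\le c$ states, so the pool is genuinely finite. For the product automaton, the firing-state set is $\{\mathbf{s} = (s_1,\dots,s_k) : s_j \in \mathcal{F}_j \text{ for some } j\}$ and the initialization is the tuple of the component initializations, which is consistent because all components share the same boundary-condition-to-general-state dependence that the model allows; one must also verify the product automaton inherits the quiescence-stability requirement (a $k$-tuple of quiescent states with quiescent/\# neighbours maps to the $k$-tuple of quiescent states), which is immediate since each coordinate does. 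A subtlety worth a sentence: one needs that a component $A^{(j)}$ that is \emph{not} a solution of $\Gamma$ (it may only be a partial solution, or may misbehave on configurations other than those it was chosen for) does not prevent $\tilde A$ from being a solution — but this is fine, because $\tilde A$ fires $C$ as soon as the \emph{correct} component $A_C$ fires, and every other component, whatever it does, cannot make $\tilde A$ fire strictly before ${\rm mft}_\Gamma(C)$ since firing at time $t < {\rm mft}_\Gamma(C)$ would contradict the very definition of ${\rm mft}_\Gamma(C)$ via the "run $\tilde A$ together with a genuine solution" argument. (In fact it is cleanest to include in the pool at least one genuine solution of $\Gamma$, which exists by our standing assumption, so that $\tilde A$ is manifestly a solution; every configuration $C$ is then fired by $\tilde A$ at the minimum of the firing times of the components over $C$, which equals ${\rm mft}_\Gamma(C)$.)
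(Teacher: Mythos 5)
Your proof is correct; the paper in fact omits the proof of this theorem (``The proof of this theorem is easy and we omit it''), and your argument --- the trivial direction $(1)\Rightarrow(2)$ plus, for $(2)\Rightarrow(1)$, the observation that the solutions with at most $c$ states form a finite pool whose parallel product fires each $C$ at $\min_{j}{\rm ft}(C,A^{(j)})={\rm mft}_{\Gamma}(C)$ --- is exactly the intended one, matching the ``simulate several automata and fire when the first fires'' trick the paper already invokes for partial solutions. Note only that by the definition of ${\rm sss}_{\Gamma}(C)$ each $A_{C}$ is already a genuine solution of $\Gamma$, so the closing worry about non-solution components in the pool does not actually arise.
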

\noindent
The proof of this theorem is easy and we omit it.
By Theorems \ref{theorem:yamashita_result_2}, 
\ref{theorem:existence_and_mns} 
we know that 
if $a \leq b$ then 
${\rm sss}_{\text{\rm LSP}[a,b]}(C)$ is unbounded.
This is the first example of variations for which 
we know this.

\begin{thm}
\label{theorem:mns_lsp_a_b}
If $a \leq b$ then 
\[
\sqrt{h - 1} \leq {\rm sss}_{\text{\rm LSP}[a,b]}(C_{\rm L}(w, h)) 
\leq 6(w + h + 2).
\]
\end{thm}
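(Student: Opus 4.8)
\emph{Overall plan.} The two inequalities are proved independently: the upper bound by a direct construction, and the lower bound by a quantitative (pigeonhole) sharpening of the argument in the proof of Theorem~\ref{theorem:yamashita_result_2}.

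\emph{The upper bound.} By Theorem~\ref{theorem:mft_lsp_a_b}, since $a\leq b$ we have ${\rm mft}_{\text{\rm LSP}[a,b]}(C_{\rm L}(w,h))=w+h$, so it suffices to exhibit a solution of $\text{\rm LSP}[a,b]$ with at most $6(w+h+2)$ states that fires $C_{\rm L}(w,h)$ at time $w+h$. I would combine two automata. The first is a six-state minimal-time solution of the original FSSP (Mazoyer's), reinterpreted as a solution $M$ of $\text{\rm LSP}[a,b]$ with no extra states: from its boundary condition alone each node of an L-shaped path designates one present neighbour its ``predecessor'' (its west neighbour if present, otherwise its south neighbour) and one its ``successor'' (its east neighbour if present, otherwise its north neighbour); feeding the predecessor and successor states into the line solution's transition function makes $M$ act on $C_{\rm L}(w,h)$ exactly as that solution acts on a straight line of $w+h+1$ cells, so $M$ synchronises every configuration of $\text{\rm LSP}[a,b]$ and fires $C_{\rm L}(w,h)$ at $2(w+h)$. (Here the two models coincide by Lemma~\ref{lemma:relations_between_bs_and_tr}(4), because every configuration of $\text{\rm LSP}[a,b]$ has the general at boundary condition $(1,0,0,0)$.) The second automaton is the check-and-broadcast partial solution constructed in the proof of Theorem~\ref{theorem:mft_lsp_a_b} (case $a\leq b$), specialised to $C_{\rm L}(w,h)$: it uses the $w+h+2$ states ${\rm R}_0,\ldots,{\rm R}_{w-1},{\rm S}_w,\ldots,{\rm S}_{w+h},{\rm Q}$, has domain $\{C_{\rm L}(w,h)\}$, and fires $C_{\rm L}(w,h)$ at $w+h$. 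Now let $A$ simulate both automata in parallel and fire as soon as either of them fires (the construction used for partial solutions in Section~\ref{section:notions_and_notations}). Then $A$ has at most $6(w+h+2)$ states, is a solution of $\text{\rm LSP}[a,b]$ because $M$ is, and fires $C_{\rm L}(w,h)$ at $\min\{2(w+h),\,w+h\}=w+h={\rm mft}_{\text{\rm LSP}[a,b]}(C_{\rm L}(w,h))$. This gives ${\rm sss}_{\text{\rm LSP}[a,b]}(C_{\rm L}(w,h))\leq 6(w+h+2)$.

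\emph{The lower bound.} Let $A$ be any solution of $\text{\rm LSP}[a,b]$ with $k$ states that fires $C=C_{\rm L}(w,h)$ at $w+h$; it suffices to show $h-1\leq k^2$. Write $s_i^t={\rm st}(p_i,t,C,A)$, so $s_i^t={\rm Q}$ whenever $i>t$, and suppose for contradiction that $h-1>k^2$. For each of the $h-1$ times $t$ with $w+1\leq t\leq w+h-1$ the ``tip pair'' $(s_{t-1}^t,s_t^t)$ takes one of at most $k^2$ values, so two such times produce the same pair: $(s_{t_0-1}^{t_0},s_{t_0}^{t_0})=(s_{t_1-1}^{t_1},s_{t_1}^{t_1})$ for some $w+1\leq t_0<t_1\leq w+h-1$. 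From here the argument of Theorem~\ref{theorem:yamashita_result_2} applies essentially verbatim; note that that argument used only the behaviour of the solution on a single configuration, so it does not matter that $A$ is assumed to fire only $C$, and not all configurations, at the minimum time. In detail, for $j\geq0$ the tip pair at time $t+j$ is determined by the tip pair at time $t$ together with the quiescence of all nodes to the right of the tip, and the transition function acts identically on every interior node of the vertical segment (all having boundary condition $(0,1,0,1)$); the corner node $p_w$ can appear in this computation only as an input value, which is fixed once the two tip pairs agree, so no translation invariance is lost. By induction, $(s_{t_0-1+j}^{t_0+j},s_{t_0+j}^{t_0+j})=(s_{t_1-1+j}^{t_1+j},s_{t_1+j}^{t_1+j})$ for $j=0,1,\ldots,w+h-1-t_1$, and one more step gives $s_{t_2}^{t_2+1}=s_{w+h-1}^{w+h}$ with $t_2=w+h-1-(t_1-t_0)$. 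Since $A$ fires $C$ at $w+h$ we have $s_{w+h-1}^{w+h}={\rm F}$, hence $p_{t_2}$ is firing at time $t_2+1=w+h-(t_1-t_0)<w+h$, contradicting that $A$ does not fire $C$ before $w+h$. Therefore $h-1\leq k^2$, i.e.\ $k\geq\sqrt{h-1}$, so ${\rm sss}_{\text{\rm LSP}[a,b]}(C_{\rm L}(w,h))\geq\sqrt{h-1}$.

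\emph{The main obstacle.} Neither half is deep; both reuse tools already available in the paper. The delicate point is the bookkeeping in the lower bound: the window of times must be chosen to have exactly $h-1$ elements (this is what produces $\sqrt{h-1}$ rather than $\sqrt{h-c}$ for a larger $c$) while keeping the tip strictly past the corner, and one must confirm that the corner node enters the tip dynamics only through its state and hence cannot break the translation invariance on which the inductive copying relies. On the upper-bound side, the only thing to verify is that a line solution transfers to L-shaped paths without any increase in the number of states, which is immediate from the boundary-condition-driven choice of predecessor and successor.
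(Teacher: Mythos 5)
Your proposal is correct and follows essentially the same route as the paper: the upper bound by running Mazoyer's six-state solution (transferred to L-shaped paths) in parallel with the $(w+h+2)$-state check-and-broadcast partial solution from Theorem~\ref{theorem:mft_lsp_a_b} and firing when either fires, and the lower bound by a pigeonhole count of the $h-1$ tip pairs $s_{t-1}^{t}s_{t}^{t}$ in the argument of Theorem~\ref{theorem:yamashita_result_2}. Your explicit remark that the contradiction only uses the behaviour of $A$ on the single configuration $C_{\rm L}(w,h)$ (so minimal-time firing on $C$ alone suffices, as required by the definition of ${\rm sss}$) is a point the paper leaves implicit.
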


\begin{proof}
Lower bound. In the proof of 
Theorem \ref{theorem:yamashita_result_2}, we considered 
the sequence $s_{t-1}^{t} s_{t}^{t}$ for 
$t = w + 1, w + 2, \ldots, w + h - 1$ and the number of 
possible values of $t$ is $h - 1$.
Therefore, if a minimal-time solution $A$ has $q$ states 
and $q^{2} < h - 1$ then we can derive a contradiction 
by the proof of that theorem.
Therefore we have $h - 1 \leq q^{2}$ for any minimal-time 
solution $A$.

\medskip

\noindent
Upper bound. In the proof of Theorem \ref{theorem:mft_lsp_a_b} 
we constructed a partial solution $A$ with $w_{0} + h_{0} + 2$ 
states that fires $C(w_{0}, h_{0})$ at time 
$w_{0} + h_{0} = {\rm mft}(C(w_{0}, h_{0}))$.
Moreover, we can modify the six-state solution $A'$ of 
the original FSSP by Mazoyer (\cite{Mazoyer_1987}) 
to a six-state solution of $\text{\rm LSP}[a,b]$.
By simulating both of $A$, $A'$ simultaneously we 
can construct a solution of $\text{\rm LSP}[a,b]$ that has 
$6(w_{0} + h_{0} + 2)$ states and fires $C(w_{0}, h_{0})$ 
at time ${\rm mft}(C(w_{0}, h_{0}))$.
\end{proof}

\mynewpage

\subsection{The variation $\text{\rm g-LSP}[a,b]$}
\label{subsection:g_l_a_b}

In this subsection we consider 
the variation $\text{g-LSP}[a,b]$.
In this variation a configuration is $C_{\rm L}(w, h, i)$ 
such that $w = al$, $h = bl$ for some $l \geq 1$ and 
$0 \leq i \leq w + h$.

First we determine the value 
${\rm mft}_{\text{\rm g-LSP}[a,b]}(C)$.
Then using this result and 
Theorem \ref{theorem:yamashita_result_2} we show that 
$\text{\rm g-LSP}[a,b]$ has no minimal-time solutions 
for any $a, b$ (not necessarily $a \leq b$).

\mynewpage

\begin{lem}
\label{lemma:from_bs_to_tr}
Suppose that $A$ is a partial solution of 
a variation $\Gamma$ of $\text{\rm FSSP}$ 
of the boundary sensitive model 
and that 
${\rm rad}(C) + 1 \leq {\rm ft}(C, A)$ 
for any configuration $C$ in the domain of $A$.
Then there is a partial solution $A'$ of $\Gamma$ of the 
traditional model that has the same domain and the same firing times 
as $A$.
\end{lem}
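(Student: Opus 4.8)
The plan is to define $A'$ so that every node runs, in parallel, one faithful copy of $A$ for each possible initial general state, and then fires as soon as it has learned which of these copies is the real one; the hypothesis ${\rm rad}(C)+1 \leq {\rm ft}(C,A)$ is exactly what is needed to do this without losing one unit of time.

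First I would set up the parallel copies. Say $A$ uses general states ${\rm G}_0,\dots,{\rm G}_{s-1}$ with selector function $\eta$, and for each $i$ let ${\rm st}_i(v,t,C)$ denote the state of $v$ at time $t$ obtained by iterating the transition function of $A$ on $C$ with the general started in ${\rm G}_i$ and every other node in ${\rm Q}$. A node of $A'$ carries the $s$-tuple $({\rm st}_0(v,t,C),\dots,{\rm st}_{s-1}(v,t,C))$ together with a bounded amount of bookkeeping, and $A'$ applies the transition function of $A$ componentwise, so each component is an independent correct run of $A$. The unique traditional general state ${\rm G}^{\rm tr}$ of $A'$ encodes the tuple $({\rm G}_0,\dots,{\rm G}_{s-1})$, the quiescent state of $A'$ is the all-${\rm Q}$ tuple, and since each component of $A$ respects the quiescent condition so does $A'$. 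Note that the component with index $j_0 = \eta({\rm bc}_C(v_{\rm gen}))$ is precisely the real run of $A$ on $C$, so its value at $(v,t)$ equals ${\rm st}(v,t,C,A)$.

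Next I would add a ``check and broadcast'' mechanism for the single message ``the real index is $j_0$''. At time $0$ the general is in ${\rm G}^{\rm tr}$; at time $1$ its inputs tell it which of its four potential neighbours lie in $C$, so it learns $\mathbf{b} = {\rm bc}_C(v_{\rm gen})$ and computes $j_0 = \eta(\mathbf{b})$, and from time $1$ on it broadcasts $j_0$ along the configuration by a speed-$1$ signal, each node storing $j_0$ permanently and forwarding it. Thus a node $v$ receives $j_0$ at time ${\rm d}_C(v_{\rm gen},v)+1$, i.e. one step after the earliest moment at which $v$ can become non-quiescent. The firing rule of $A'$ is: a node enters the single firing state ${\rm F}^{\rm tr}$ at the first time $t$ at which it already holds the value $j_0$ \emph{and} its $j_0$-th component is a firing state of $A$; before it holds $j_0$ it never fires.

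Finally I would check correctness. Let $C$ be in the domain of $A$ and put $T = {\rm ft}(C,A)$; by hypothesis $T \geq {\rm rad}(C)+1 \geq {\rm d}_C(v_{\rm gen},v)+1$ for every node $v$. For $t < {\rm d}_C(v_{\rm gen},v)$ the node $v$ is the all-${\rm Q}$ tuple and does not fire; at the single time $t = {\rm d}_C(v_{\rm gen},v)$ it has not yet received $j_0$ and so does not fire; for $t \geq {\rm d}_C(v_{\rm gen},v)+1$ it holds $j_0$ and its $j_0$-th component equals ${\rm st}(v,t,C,A)$, which is a firing state of $A$ precisely when $t = T$. Hence every node of $A'$ fires at time $T$ and at no earlier time, so $C$ lies in the domain of $A'$ with ${\rm ft}(C,A') = {\rm ft}(C,A)$; and if $C$ is not in the domain of $A$, then no run ${\rm st}_i(\cdot)$ — in particular the $j_0$-th one — ever reaches a firing state of $A$, so no node of $A'$ ever fires. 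Therefore $A'$ has the same domain and the same firing times as $A$. The delicate point, and the only place the hypothesis is used, is the ``blind'' step $t = {\rm d}_C(v_{\rm gen},v)$, when $v$ is already awake but does not yet know which component is real: we keep it silent, and ${\rm rad}(C)+1 \leq {\rm ft}(C,A)$ is exactly what guarantees that this silence is harmless, i.e. that $A$ itself has not yet fired at that step. Without this slack the construction would only yield a traditional partial solution with firing time ${\rm ft}(C,A)+1$.
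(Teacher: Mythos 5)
Your proposal is correct and follows essentially the same route as the paper: run all possible initializations of $A$ in parallel (you index them by the general state ${\rm G}_i$, the paper by the boundary condition $\mathbf{b}$, which is immaterial), broadcast the true index from $v_{\rm gen}$ starting at time $1$, and fire a node only when it both knows the true index and sees its true component fire, with the hypothesis ${\rm rad}(C)+1 \leq {\rm ft}(C,A)$ guaranteeing every node has received the index by the firing time. Your verification is in fact more detailed than the paper's, which leaves the final check to the reader.
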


\begin{proof}

For each element $\mathbf{b} \in \{0, 1\}^{4}$, 
let $A_\mathbf{b}$ be the finite automaton of 
the traditional model that is obtained from $A$ 
by specifying the state ${\rm G}_{\eta(\mathbf{b})}$ 
as its general state.
(Recall that $A$ has general states ${\rm G}_{0}, 
{\rm G}_{1}, \ldots, {\rm G}_{s-1}$ and 
when the boundary condition 
${\rm bc}_{C}(v_{\rm gen})$ of $v_{\rm gen}$ 
in a configuration $C$ is $\mathbf{b}$ then 
${\rm G}_{\eta(\mathbf{b})}$ is used as the state of 
$v_{\rm gen}$ in $C$ at time $0$.)

Moreover, let $A''$ be the finite automaton 
of the traditional model 
that sees the boundary condition $\mathbf{b} = {\rm bc}_{C}(v_{\rm gen})$ 
from time $0$ to time $1$ 
and at time $1$ broadcasts $\mathbf{b}$ from $v_{\rm gen}$ 
to all nodes as a message.

Finally let $A'$ be the finite automaton of the traditional model 
that simulates all of $A_{\mathbf{b}}$ ($\mathbf{b} \in \{0, 1\}^{4}$) 
and $A''$ and fires at a time $t$ if there is $\mathbf{b}$ 
such that $A_{\mathbf{b}}$ fires at the time $t$ and 
the node has received the message $\mathbf{b}$ before or at time $t$.

We can easily show that $A'$ is a desired partial solution of $\Gamma$ 
of the traditional model.
\end{proof}

\mynewpage

The two variations $\text{\rm g-LSP}[a,b]$ and 
$\text{\rm g-LSP}[b,a]$ are essentially the same problem.
Therefore we determine the value 
${\rm mft}_{\text{\rm g-LSP}[a,b]}(C)$ only for 
$a \leq b$.

\begin{thm}
\label{theorem:mft_g_lsp_a_b}
Suppose that $a \leq b$.
Then 
\[
{\rm mft}_{\text{\rm g-LSP}[a,b]}(C_{\rm L}(w, h, i)) = 
\begin{cases}
w + h & \text{if $0 \leq i < 2w$},\\
i - w + h & \text{if $2w \leq i \leq w + h$}.
\end{cases}
\]
\end{thm}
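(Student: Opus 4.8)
The plan is to establish the claimed value of ${\rm mft}_{\text{\rm g-LSP}[a,b]}(C_{\rm L}(w,h,i))$ by proving matching lower and upper bounds, following the template already used for Theorems \ref{theorem:lsp}, \ref{theorem:g_lsp}, and \ref{theorem:mft_lsp_a_b}. Throughout I would write $C = C_{\rm L}(w,h,i)$ and keep in mind that the allowed configurations are exactly those with $w = al$, $h = bl$ for some $l \geq 1$; the key geometric fact is that once a node learns the position of the corner $p_w$, it already knows $w$, hence $l = w/a$, hence $h = bl$, hence the entire shape of $C$ (and thus the position of both terminals $p_0$ and $p_{w+h}$).

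For the lower bound I would split on the location of the general. When $0 \leq i < 2w$, I claim ${\rm mft}_{\rm bs}(C) \geq w+h = {\rm rad}(C)$ trivially (the radius bound), and this is already tight in the claim; the subtlety is only in showing one cannot do better, which for $i < 2w$ is nothing beyond the radius bound since $\max\{{\rm d}_C(p_i,p_0),{\rm d}_C(p_i,p_{w+h})\}$ is itself at most $w+h$ exactly when $i < 2w$ — I should double-check the arithmetic here, but $w+h$ is forced to be the answer. When $2w \leq i \leq w+h$ the general sits on the vertical arm beyond height $w$ (measuring from the corner, that is $p_i$ is at height $i-w \geq w$), and now the relevant obstruction is that the west terminal $p_0$ is far: ${\rm d}_C(p_i, p_0) = i$, so ${\rm rad}(C) = i$, but the claimed answer $i - w + h$ can exceed $i$ when $h > w$. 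The genuine lower bound here must come from Lemma \ref{lemma:fact_4}: at time $(i-w+h)-1$ the west terminal $p_0$ cannot yet know the position of the north terminal $p_{w+h}$, so I would exhibit a taller companion configuration $C' = C_{\rm L}(w, h', i')$ in the variation (take $l' > l$, keeping the general at the corresponding position so that the available information of $p_0$ agrees up to that time) with ${\rm rad}(C') > (i-w+h)-1$, and apply the lemma. I would present this concretely with a small numerical example and a figure, mirroring the style of Figure \ref{figure:fig049}.

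For the upper bound I would build a \emph{check and broadcast partial solution} in the boundary sensitive model for each target configuration $C_{\rm L}(w_0,h_0,i_0)$, exactly as in the proof of Theorem \ref{theorem:mft_lsp_a_b}. A signal leaves $v_{\rm gen}$, travels to locate the corner $p_{w_0}$ (recognizable by its boundary condition $(0,1,1,0)$), thereby verifying $w = w_0$; since $a\mid w$ this determines $l$ and hence $h$; a second signal verifies $h = h_0$ by running to the north terminal. If both checks succeed the shape is pinned down, a message is broadcast, and each node fires at the predetermined global time equal to the claimed ${\rm mft}$ value — which works because that value is, by the lower-bound analysis, at least the time needed for the message to reach the farthest node from the point where verification completes. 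Finally, because the claimed firing time satisfies ${\rm rad}(C) + 1 \leq {\rm ft}(C,A)$ in the range where the two models could differ (this is precisely the remark after Lemma \ref{lemma:relations_between_bs_and_tr}: the gap only matters when ${\rm mft}(C) = {\rm rad}(C)$, and one checks $w+h > {\rm rad}(C)$ when $i<2w$ fails only at the boundary case, while $i-w+h > i = {\rm rad}(C)$ when $h>w$), I would invoke Lemma \ref{lemma:from_bs_to_tr} to lift the partial solution to the traditional model with the same firing times. Combining the two bounds with ${\rm mft}_{\rm bs}(C) \leq {\rm mft}_{\rm tr}(C)$ gives the theorem.

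I expect the main obstacle to be the bookkeeping around the case split at $i = 2w$ and the edge cases where ${\rm mft}(C)$ equals ${\rm rad}(C)$: one must verify carefully that the companion configuration $C'$ used in Lemma \ref{lemma:fact_4} really lies in $\text{\rm g-LSP}[a,b]$ (the $al$, $bl$ constraint restricts which $C'$ are available, so the available-information match has to be arranged with $l' = l + a + b$ or a similar safe choice rather than an arbitrary larger height), and that the partial solution's hard-coded firing time is simultaneously $\geq$ the broadcast completion time (for correctness) and $\leq$ the claimed ${\rm mft}$ (for optimality) — i.e., that these two quantities coincide, which is exactly what the lower-bound computation is forcing. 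Getting the small worked examples and their figures consistent with the general formula is where most of the care will go.
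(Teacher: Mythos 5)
Your overall template (lower bound via Lemma \ref{lemma:fact_4}, upper bound via a check-and-broadcast partial solution, then a lift to the traditional model) is the right one, but two of your central steps fail. First, the lower bound for $0 \leq i < 2w$: you assert that the radius bound already gives $w+h$. It does not: ${\rm rad}(C_{\rm L}(w,h,i)) = \max\{i,\; w+h-i\}$, which is strictly less than $w+h$ for every $0 < i < w+h$ (for example $i=w$ gives radius $h$). So for all interior positions of the general the inequality $w+h \leq {\rm mft}$ genuinely requires Lemma \ref{lemma:fact_4}; the paper applies it at the node $p_{y_0+h_0}$ (with $y_0 = w_0 - i_0$), which at time $w_0+h_0-1$ knows $y_0$ but neither $x_0$ nor $h_0$, and takes as companion a configuration $C(w,h,i)$ with $w-i=y_0$ but $x$ and $h$ large.

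Second, and more seriously, your upper-bound scheme cannot reach firing time $w+h$. You verify only $w=w_0$ and $h=h_0$ and broadcast once both checks succeed. But (a) in $\text{\rm g-LSP}[a,b]$ knowing $w$ and $h$ does not determine the configuration --- the position $i$ of the general must also be verified, and without that the automaton would fire some but not all nodes of $C(w_0,h_0,i')$ for $i'\neq i_0$, so it is not even a partial solution; and (b) if a node must wait for confirmation of a check completed at the north terminal, that confirmation reaches the west terminal only around time $2w+2h-i$, far too late. The paper's key device, which your proposal is missing, is redundancy: it checks \emph{three} statements (namely $x=x_0$ and $y=y_0$, which locate the general from the two ends, plus $h=h_0$ or $w=w_0$), any \emph{two} of which jointly pin down the configuration; each is broadcast from a different location, and a node fires when it has received at least two of the three messages by the target time. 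The diagrams (Figures \ref{figure:fig010} and \ref{figure:fig011}) verify that every node gets two messages in time, and the case split at $i=2w$ falls out of computing which node receives its second message last. Finally, the lift to the traditional model via Lemma \ref{lemma:from_bs_to_tr} really does break down at $i_0=0$ (and at $i_0=2w$ when $a=b$), where ${\rm mft}={\rm rad}$; the paper handles these exceptional cases with a separate three-signal construction, which you acknowledge in passing but do not supply.
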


\begin{proof}
Let $C(w_{0}, h_{0}, i_{0})$ be some fixed configuration 
of $\Gamma = \text{\rm g-LSP}[a, b]$. 
We determine the value ${\rm mft}(C(w_{0}, h_{0}, i_{0}))$.
First we consider the case of the boundary sensitive model 
and then consider that of the traditional model.

\medskip

\noindent
(I) The case of the boundary sensitive model and 
$0 \leq i_{0} < w_{0}$.

\medskip

We construct a check and broadcast partial solution $A$ 
to prove the upper bound 
${\rm mft}(C(w_{0}, h_{0}, i_{0})) \leq w_{0} + h_{0}$.
Let $x_{0}, y_{0}$ be the values 
$x_{0} = i_{0}$, $y_{0} = w_{0} - x_{0}$.
Suppose that copies of $A$ are placed in $C(w, h, i)$.
$A$ uses two signals ${\rm R}_{0}$, ${\rm R}_{1}$.

At time $0$, both of ${\rm R}_{0}$, ${\rm R}_{1}$ 
check the boundary condition of $v_{\rm gen}$ 
(that is, the statement ${\rm bc}_{C(w, h, i)}(p_{i}) 
= {\rm bc}_{C(w_{0}, h_{0}, i_{0})}(p_{i_{0}})$).
If the boundary condition is correct then we have 
$0 \leq i < w$ and let $x$, $y$ be the values 
$x = i$, $y = w - i$.
If the boundary condition of $v_{\rm gen}$ is correct, 
${\rm R}_{0}$ proceeds to the west and checks 
the statement $x = x_{0}$.
If the statement is true, the signal broadcasts 
the message ``$x = x_{0}$'' and then vanishes.
If the boundary condition of $v_{\rm gen}$ is correct, 
${\rm R}_{1}$ proceeds to the east and checks 
the statement $y = y_{0}$.
If the statement is true, 
the signal broadcasts the message ``$y = y_{0}$'', 
proceeds to the north and checks the statement $h = h_{0}$.
If the statement is true, the signal broadcasts 
the message ``$h = h_{0}$'' and then vanishes.

A node fires at time $w_{0} + h_{0}$ if it has 
received at least two of the three messages 
``$x = x_{0}$'', ``$y = y_{0}$'', ``$h = h_{0}$'' 
before or at time $w_{0} + h_{0}$.

Suppose that $C(w, h, i) = C(w_{0}, h_{0}, i_{0})$.
Then the boundary condition of $v_{\rm gen}$ is correct and 
all of the three messages 
``$x = x_{0}$'', ``$y = y_{0}$'', ``$h = h_{0}$'' 
are generated.
In Figure \ref{figure:fig010} we show a diagram 
that explains the travels of the two signals ${\rm R}_{0}$, 
${\rm R}_{1}$ (lines labeled with ``${\rm R}_{0}$'', 
``${\rm R}_{1}$'') and 
the propagations of the three messages 
(lines labeled with the names of messages 
such as ``$x = x_{0}$'').
The generations of messages are marked with black circles.
The messages ``$x = x_{0}$'', ``$y = y_{0}$'', ``$h = h_{0}$'' 
are generated at $p_{0}$, $p_{w_{0}}$, $p_{w_{0} + h_{0}}$ 
at times $x_{0}$, $y_{0}$, $y_{0} + h_{0}$ respectively.
The thick line denotes the first time when a node receives 
at least two of the three messages.
The diagram shows that any node in $C(w, h, i)$ receives 
two messages before or at time $w_{0} + h_{0}$.
Hence all nodes in $C(w, h, i)$ fire at $w_{0} + h_{0}$.

\begin{figure}[htb]
\begin{center}
\includegraphics[scale=1.0]{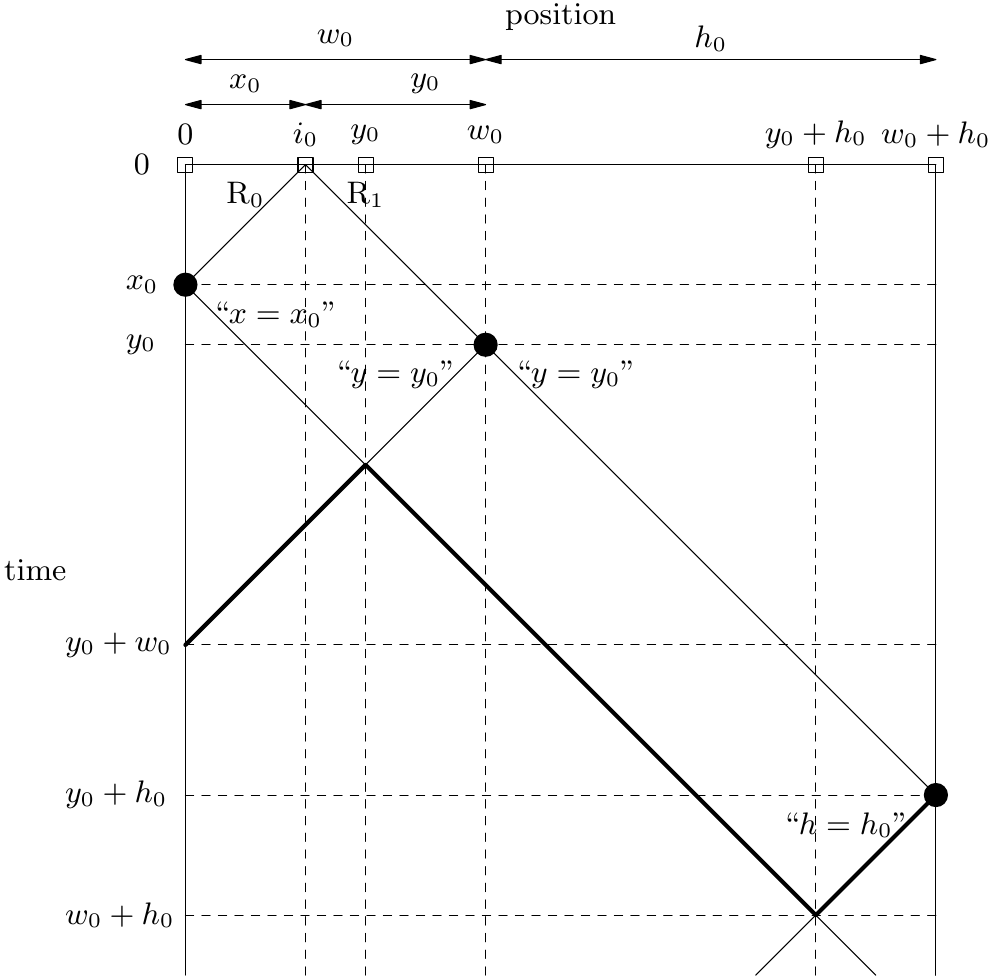}
\end{center}
\caption{The diagram used in the proof 
of Theorem \ref{theorem:mft_g_lsp_a_b} for the case 
$0 \leq i_{0} < w_{0}$.}
\label{figure:fig010}
\end{figure}

Conversely suppose that a node in $C(w, h, i)$ fires at some time.
This means that the node receives at least two of the three messages 
and hence at least two of the three statements 
$x = x_{0}$, $y = y_{0}$, $h = h_{0}$ are true.
But from any two of these three statements follows that 
$C(w, h, i) = C(w_{0}, h_{0}, i_{0})$ because 
$x + y = w$, $x_{0} + y_{0} = w_{0}$, $w / h = w_{0} / h_{0} = a / b$, 
$i = x$, $i_{0} = x_{0}$.

This shows that $A$ is a partial solution that has 
the domain $\{C(w_{0}, h_{0}, i_{0})\}$ and that fires 
$C(w_{0}, h_{0}, i_{0})$ at time $w_{0} + h_{0}$.
From this we have the upper bound 
${\rm mft}(C(w_{0}, h_{0}, i_{0})) \leq w_{0} + h_{0}$.

\mynewpage

Next we prove the lower bound 
$w_{0} + h_{0} \leq {\rm mft}(C(w_{0}, h_{0}, i_{0}))$ 
using Lemma \ref{lemma:fact_4}.
Let $C$ be $C(w_{0}, h_{0}, i_{0})$.
By Figure \ref{figure:fig010} we know 
that at time $(w_{0} + h_{0}) - 1$ the node 
$p_{y_{0} + h_{0}}$ knows the value of $y$ ($= y_{0}$) but 
none of the values of $x, h$.
Hence if we select a configuration $C(w, h, i)$ such 
that $y = w - i = y_{0}$ but $x, h$ are sufficiently large 
as $C'$ 
we can prove the lower bound by 
Lemma \ref{lemma:fact_4}.

We show this using the configuration $C = C(4, 6, 2)$ 
of $\text{\rm g-LSP}[2, 3]$ as an example.
For this case $w_{0} = 4$, $h_{0} = 6$, $i_{0} = 2$, 
$x_{0} = 2$, $y_{0} = 2$, $y_{0} + h_{0} = 8$, 
and $w_{0} + h_{0} = 10$.
We select the configuration $C(6, 9, 4)$ as $C'$.
In Figure \ref{figure:fig051} (a) and (b) 
we show $C$ and $C'$.
For these configurations we have 
${\rm ai}(p_{8}, 9, C) = {\rm ai}(p_{10}, 9, C')$.
In (c) we show their local map components.
$C'$ contains two nodes $p_{14}$, $p_{15}$ 
(marked with ``*'' in (b)) 
that can be used to prove 
${\rm rad}(C') > 9$.
Therefore we have 
$9 < {\rm mft}(C) = {\rm mft}(C(4, 6, 2))$.

\begin{figure}[htb]
\begin{center}
\includegraphics[scale=1.0]{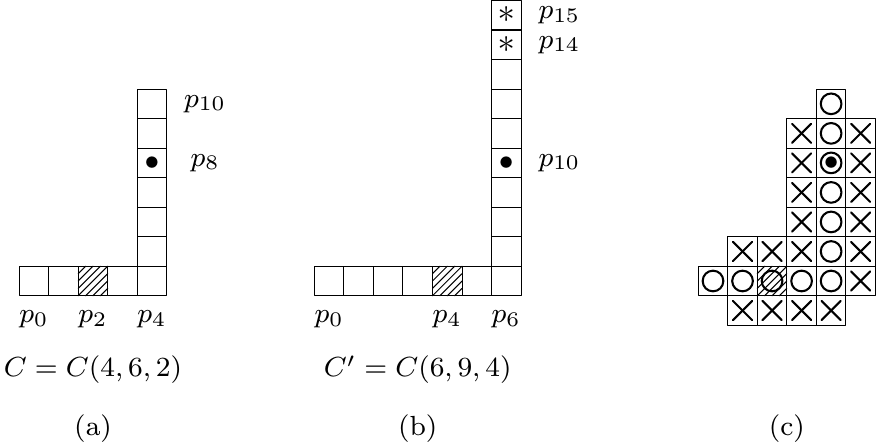}
\end{center}
\caption{(a) and (b) show 
$C = C(4, 6, 2)$ and $C' = C(6, 9, 4)$ 
respectively.
(c) shows the local map component of 
${\rm ai}(p_{8}, 9, C) = {\rm ai}(p_{10}, 9, C')$.}
\label{figure:fig051}
\end{figure}

\mynewpage

\medskip

\noindent
(II) The case of the boundary sensitive model and 
$w_{0} \leq i_{0} \leq w_{0} + h_{0}$.

\medskip

We explain only how to modify the proof for the previous case.
Let $x_{0}$, $y_{0}$ be the values 
$x_{0} = i_{0} - w_{0}$, $y_{0} = w_{0} + h_{0} - i_{0}$.
If the boundary condition of $v_{\rm gen}$ is correct 
we have $w \leq i \leq w + h$.
Let $x$, $y$ be the values 
$x = i - w$, $y = w + h - i$.

\begin{figure}[htbp]
\begin{center}
\includegraphics[scale=1.0]{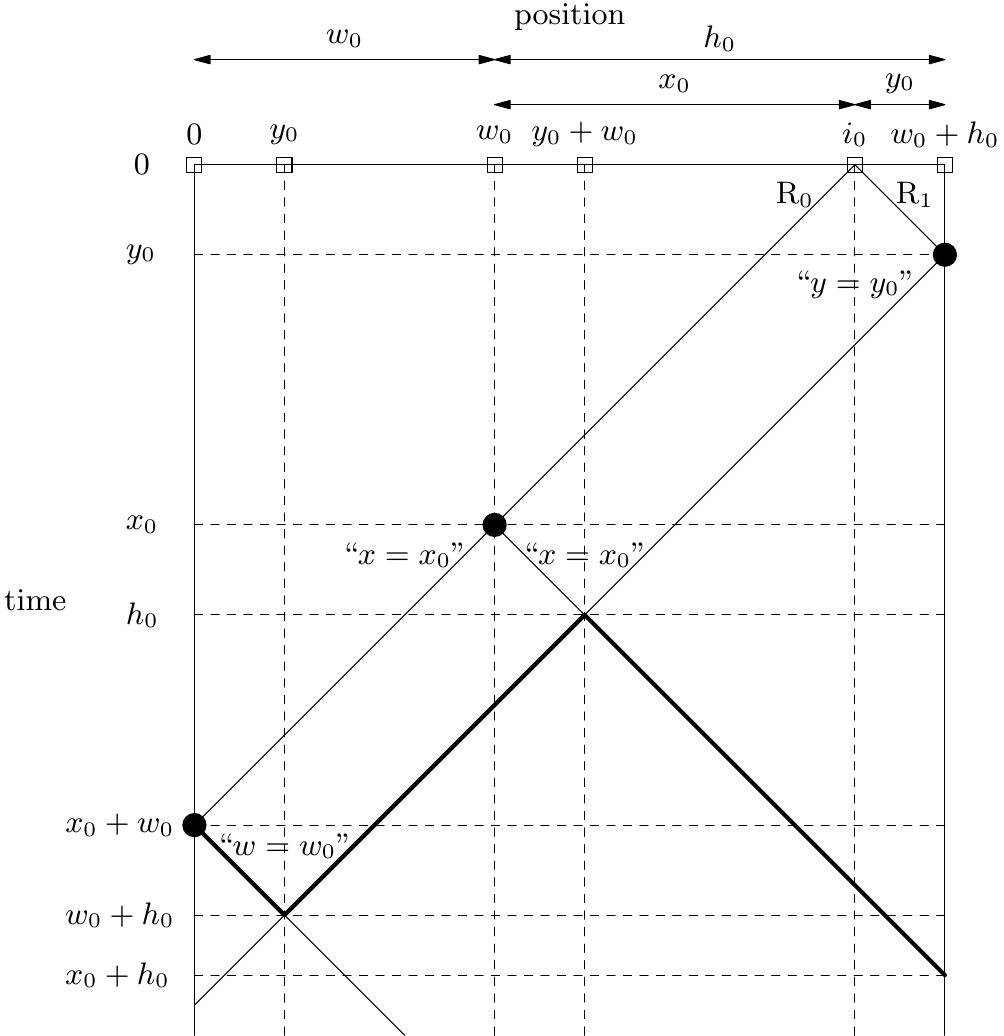}
\end{center}
\caption{The diagram used in the proof of 
Theorem \ref{theorem:mft_g_lsp_a_b} for the case 
$w_{0} \leq i_{0} \leq w_{0} + h_{0}$.}
\label{figure:fig011}
\end{figure}

In Figure \ref{figure:fig011} we show the diagram of 
signals and messages for this case. 
The signal ${\rm R}_{0}$ generates the message 
``$x = x_{0}$'' and also the message ``$w = w_{0}$''.
The signal ${\rm R}_{1}$ generates only the message 
``$y = y_{0}$''.
A node fires at time $t_{0}$ if it has received 
at least two of the three messages 
``$x = x_{0}$'', ``$y = y_{0}$'', ``$w = w_{0}$'' 
before or at time $t_{0}$.
Here $t_{0}$ is the smallest time when all nodes 
have received the necessary two messages.

By the diagram we know that 
there are two candidates for a node that receives 
the necessary two messages last.
One is $p_{y_{0}}$ and the receiving time is 
$w_{0} + h_{0}$.
The other is $p_{w_{0} + h_{0}}$ and 
the receiving time is $x_{0} + h_{0}$.
Hence, if $x_{0} < w_{0}$ (and hence 
$i_{0} < 2w_{0}$), $x_{0} + h_{0} < w_{0} + h_{0}$ 
and we determine the firing time $t_{0}$ to be 
$t_{0} = w_{0} + h_{0}$.
If $w_{0} \leq x_{0}$ (and hence 
$2w_{0} \leq i_{0}$), 
$w_{0} + h_{0} \leq x_{0} + h_{0}$ and we determine $t_{0}$ to be 
$t_{0} = x_{0} + h_{0} = i_{0} - w_{0} + h_{0}$.
This gives the desired upper bounds 
$w_{0} + h_{0}$ for $w_{0} \leq i_{0} < 2w_{0}$ 
and $i_{0} - w_{0} + h_{0}$ 
for $2w_{0} \leq i_{0} \leq w_{0} + h_{0}$.

The proof of the lower bounds using 
Lemma \ref{lemma:fact_4} is similar.

\mynewpage

\medskip

\noindent
(III) The case of the traditional model.

\medskip

Finally we consider the case of the traditional model 
and prove \\
${\rm mft}_{\Gamma, {\rm bs}}(C(w, h, i)) =  
{\rm mft}_{\Gamma, {\rm tr}}(C(w, h, i))$ for any 
$C(w, h, i)$.
The proofs for lower bounds in (I), (II) are true 
also for the traditional model because 
Lemma \ref{lemma:fact_4} is true also for the 
traditional model.
We consider upper bounds.

In the previous proof for the boundary sensitive model, 
for any $C(w, h, i)$ we constructed a partial 
solution $A$ of the boundary sensitive model 
that has the domain 
$\{C(w, h, i)\}$ and that fires $C(w, h, i)$ at time 
${\rm mft}_{\Gamma, {\rm bs}}(C(w, h, i))$.
Moreover we can show that 
except two cases that we explain later, we have 
${\rm mft}_{\Gamma, {\rm bs}}(C(w, h, i)) > 
\max\{i, w + h - i\} = {\rm rad}(C(w, h, i))$. 
Hence, by Lemma \ref{lemma:from_bs_to_tr}, 
except the two cases mentioned above we have 
${\rm mft}_{\Gamma, {\rm bs}}(C(w, h, i))$ $=$ 
${\rm mft}_{\Gamma, {\rm tr}}(C(w$, $h$, $i))$.

The first exception is configurations of 
the form $C(w, h, 0)$. 
The second exception is the case where 
$a = b$ and the configuration is 
of the form $C(w, w, 2w)$.
The proof for the second exception is similar 
to that for the first. 
Therefore we consider only the first exception.

\mynewpage

Let $C(w_{0}, h_{0}, 0)$ be some fixed configuration 
of $\Gamma$.
We construct a partial solution $A'$ of the traditional model 
that has the domain $\{ C(w_{0}, h_{0}, 0) \}$ and fires 
$C(w_{0}, h_{0}, 0)$ at time $w_{0} + h_{0}$.

$A'$ uses three signals ${\rm R}_{0}$, ${\rm R}_{1}$, ${\rm R}_{2}$.
Signal ${\rm R}_{0}$ checks the statement $i = 0$ 
by checking the equivalent statement 
${\rm bc}_{C}(v_{\rm gen}) = (1, 0, 0, 0)$.
If it is true then the signal broadcasts the message 
``yes-0'' at $p_{0}$ at time $1$.

Signal ${\rm R}_{1}$ checks the statement 
$w - i = w_{0}$ and broadcasts the message 
``yes-1'' if it is true.
Note that for ${\rm R}_{1}$ to check this statement 
it is not necessary to know ${\rm bc}_{C}(v_{\rm gen})$.
If the statement is true then 
the signal generates the message ``yes-1'' 
at $p_{w}$ at time $w_{0}$.

Signal ${\rm R}_{2}$ checks the statement 
``$w - i = w_{0}$ and $h = h_{0}$'' and broadcasts 
the message ``yes-2'' if 
the statement is true.
Similarly to the message ``yes-1'', 
the signal generates the message ``yes-2'' 
at $p_{w + h_{0}}$ at time 
$w_{0} + h_{0}$ if the statement is true.

A node fires at time $w_{0} + h_{0}$ if either it has 
received both of the two messages ``yes-0'' and ``yes-1'' 
or it has received the message ``yes-2'' before or 
at time $w_{0} + h_{0}$.

Suppose that $C(w, h, i) = C(w_{0}, h_{0}, 0)$.
Then all of $i = 0$, $w - i = w_{0}$, $h = h_{0}$ are true and hence 
``yes-0'' is generated at $p_{0}$ at time $1$, 
``yes-1'' is generated at $p_{w_{0}}$ at time $w_{0}$, and 
``yes-2'' is generated at $p_{w_{0}+h_{0}}$ at time $w_{0} + h_{0}$.
Therefore $p_{j}$ such that $0 \leq j < w_{0} + h_{0}$ 
fires at $w_{0} + h_{0}$ by receiving ``yes-0'' and ``yes-1'' 
and $p_{w_{0} + h_{0}}$ fires at $w_{0} + h_{0}$ by 
receiving ``yes-2''.

Conversely suppose that a node in $C(w, h, i)$ fires 
at some time.
If it fires by receiving ``yes-0'' and ``yes-1'' 
then $i = 0$ and $w - i = w_{0}$ are true and hence 
$w = w_{0}$, $h = h_{0}$ are also true because 
$w / h = w_{0} / h_{0} = a / b$.
Similarly if the node fires by receiving ``yes-2'' then 
both of $w - i = w_{0}$, $h = h_{0}$ are true and hence 
$i = 0$, $w = w_{0}$ are also true.
In each case we have $C(w, h, i) = C(w_{0}, h_{0}, 0)$.

Therefore, $A'$ is a partial solution of $\Gamma$ of the 
traditional model with the domain 
$\{C(w_{0}, h_{0}, 0)\}$ that fires $C(w_{0}, h_{0}, 0)$ at 
time $w_{0} + h_{0}$.
This proves the upper bound 
${\rm mft}_{\Gamma, {\rm tr}}(C(w_{0}, h_{0}, 0)) 
\leq w_{0} + h_{0}$.
\end{proof}

\mynewpage

To prove nonexistence of minimal-time solutions of 
$\text{\rm g-LSP}[a,b]$ we use one very simple 
proof technique.

Let $\Gamma$, $\Gamma'$ be two variations.
We say that $\Gamma'$ is a {\it supervariation} 
of $\Gamma$ 
if any configuration $C$ of $\Gamma$ is 
also a configuration of $\Gamma'$.
In this case any solution of $\Gamma'$ is also 
a solution of $\Gamma$ and hence 
we have ${\rm mft}_{\Gamma}(C) \leq 
{\rm mft}_{\Gamma'}(C)$ for any configuration $C$ of 
$\Gamma$.

Moreover, we say that 
$\Gamma'$ is a {\it conservative supervariation} 
of $\Gamma$ 
if any configuration $C$ of $\Gamma$ is also 
a configuration of $\Gamma'$ and 
${\rm mft}_{\Gamma}(C) = {\rm mft}_{\Gamma'}(C)$.
In this case any minimal-time solution of $\Gamma'$ is also 
a minimal-time solution of $\Gamma$.
Hence if $\Gamma$ has no minimal-time solutions 
then $\Gamma'$ also has no minimal-time solutions.

\begin{thm}
\label{theorem:non_existence_g_lsp_a_b}
For any $a, b$, 
$\text{\rm g-LSP}[a,b]$ has no minimal-time solutions.
\end{thm}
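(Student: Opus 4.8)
The plan is to realize $\text{\rm g-LSP}[a,b]$ as a \emph{conservative supervariation} of $\text{\rm LSP}[a,b]$ and then quote Theorem \ref{theorem:yamashita_result_2}. Since, for each $l$, the two variations $\text{\rm g-LSP}[a,b]$ and $\text{\rm g-LSP}[b,a]$ have exactly the same L-shaped paths (the roles of the two arms being interchanged) and in both the general may sit at an arbitrary node, relabelling the nodes by a reflection of $\mathbf{Z}^2$ — together with the induced permutation of the four neighbours of a node — identifies $\text{\rm g-LSP}[a,b]$ with $\text{\rm g-LSP}[b,a]$ and carries a solution of one to a solution of the other with the same firing time on corresponding configurations. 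Hence it suffices to treat the case $a \le b$.

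So assume $a \le b$. Every configuration $C_{\rm L}(w,h)$ of $\text{\rm LSP}[a,b]$ (with $w = al$, $h = bl$, $l \ge 1$) is the configuration $C_{\rm L}(w,h,0)$ of $\text{\rm g-LSP}[a,b]$, so the latter is a supervariation of the former. It is conservative, because by Theorem \ref{theorem:mft_lsp_a_b} (case $a \le b$) we have ${\rm mft}_{\text{\rm LSP}[a,b]}(C_{\rm L}(w,h)) = w + h$, while by Theorem \ref{theorem:mft_g_lsp_a_b} (the subcase $0 \le i < 2w$, which applies since $i = 0 < 2w$) we have ${\rm mft}_{\text{\rm g-LSP}[a,b]}(C_{\rm L}(w,h,0)) = w + h$. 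These values coincide in the boundary sensitive and in the traditional model alike: for $\text{\rm LSP}[a,b]$ the boundary condition of $v_{\rm gen}$ is $(1,0,0,0)$ for every configuration, so Lemma \ref{lemma:relations_between_bs_and_tr}(4) applies; and for $\text{\rm g-LSP}[a,b]$ the agreement of the two models at $C_{\rm L}(w,h,0)$ is precisely the first exceptional case disposed of in part (III) of the proof of Theorem \ref{theorem:mft_g_lsp_a_b}.

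Consequently $\text{\rm g-LSP}[a,b]$ is a conservative supervariation of $\text{\rm LSP}[a,b]$, and since Theorem \ref{theorem:yamashita_result_2} states that $\text{\rm LSP}[a,b]$ has no minimal-time solution, the remark on conservative supervariations made just before this theorem yields the same conclusion for $\text{\rm g-LSP}[a,b]$. No fresh computation is involved: the relevant minimum firing times have already been determined. The one step that needs a word of justification is the reduction to $a \le b$, namely the claim that reflecting the grid carries solutions to solutions without changing firing times — this is the routine fact that the two-dimensional grid FSSP is invariant under the symmetries of the square lattice, a fixed such symmetry merely permuting a node's four input/output terminals. (Alternatively one could bypass the reflection by running the pigeonhole argument of Theorem \ref{theorem:yamashita_result_2} directly on the sub-family $\{\,C_{\rm L}(al,bl,al+bl)\mid l\ge 1\,\}$ of $\text{\rm g-LSP}[a,b]$, whose minimum firing times are known from the same theorems; but the reduction to Theorem \ref{theorem:yamashita_result_2} keeps the bookkeeping minimal.)
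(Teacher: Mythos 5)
Your proposal is correct and follows essentially the same route as the paper: for $a \le b$ it exhibits $\text{\rm g-LSP}[a,b]$ as a conservative supervariation of $\text{\rm LSP}[a,b]$ via the equality ${\rm mft}(C_{\rm L}(w,h,0)) = w+h$ from Theorems \ref{theorem:mft_lsp_a_b} and \ref{theorem:mft_g_lsp_a_b}, then invokes Theorem \ref{theorem:yamashita_result_2}, and handles $a > b$ by the symmetry identifying $\text{\rm g-LSP}[a,b]$ with $\text{\rm g-LSP}[b,a]$. Your extra remarks on the boundary-sensitive versus traditional model and on justifying the reflection are harmless elaborations of steps the paper treats as immediate.
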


\begin{proof}
Consider the case $a \leq b$.
Let $C = C(w, h, 0)$ be a configuration of both of 
$\Gamma = \text{\rm LSP}[a,b]$ and 
$\Gamma' = \text{\rm g-LSP}[a,b]$.
Then by Theorem \ref{theorem:mft_lsp_a_b} 
we have ${\rm mft}_{\Gamma}(C) = w + h$ 
and by Theorem \ref{theorem:mft_g_lsp_a_b} 
we have ${\rm mft}_{\Gamma'}(C) = w + h$.
Therefore $\Gamma'$ is a conservative supervariation 
of $\Gamma$.
However, by Theorem \ref{theorem:yamashita_result_2} 
$\Gamma$ has no minimal-time solutions.
Therefore, $\Gamma'$ has also no minimal-time solutions.

Next consider the case $a > b$.
By the above proof we know that $\text{\rm g-LSP}[b,a]$ 
has no minimal-time solutions.
However the two variations 
$\text{\rm g-LSP}[b,a]$ and $\text{\rm g-LSP}[a,b]$ 
are essentially the same problem.
Hence $\text{\rm g-LSP}[a,b]$ has no minimal-time 
solutions.
\end{proof}

\mynewpage

\section{Variations of FSSP for rectangular walls}
\label{section:rec_walls}

In this section we consider the four variations of FSSP 
for rectangular walls, 
that is, $\text{\rm RECT-WALL}$, $\text{\rm RECT-WALL}[a, b]$ 
(see Table \ref{table:fig027}) and their 
generalized variations 
$\text{\rm g-RECT-WALL}$, $\text{\rm g-RECT-WALL}[a, b]$.

As for $\text{\rm RECT-WALL}$ and $\text{\rm g-RECT-WALL}$, 
at present we do not know whether they have 
minimal-time solutions or not.
The FSSP's for one-way and two-way rings have been 
extensively studied and we know their minimal-time solutions 
(\cite{
Berthiaume_2004, 
Culik_1987, 
Kobayashi_Res_Rep_1976,
LaTorre_1996}).
$\text{\rm RECT-WALL}$ and $\text{\rm g-RECT-WALL}$ 
are natural modifications of the FSSP of 
two-way rings.
It is a very interesting open problem to 
determine whether they have minimal-time solutions or not.

All configurations of the four variations 
are rectangular walls 
$C_{\rm RW}(w, h)$ (see Figure \ref{figure:fig001} (b)).
We assume that 
$p_{-w-h}, p_{-w-h+1}, \ldots, p_{-h}$, $\ldots$, 
$p_{-1}$, $p_{0}$, $p_{1}$, $\ldots$, $p_{w}$, $\ldots$, 
$p_{w+h-1}, p_{w+h}$ is the sequence 
of nodes in $C_{\rm RW}(w, h)$ obtained by tracing it 
from the northeast corner ($p_{-w-h}$) to 
the northwest corner ($p_{-h}$), to the southwest corner ($p_{0}$), 
to the southeast corner ($p_{w}$), 
and to the northeast corner ($p_{w+h}$) again.
Moreover we understand the index $i$ of $p_{i}$ ``modulo $2w + 2h$'' 
so that, for example, 
the first node $p_{-w-h}$ and 
the last node $p_{w+h}$ in the above sequence 
are the same node (the northeast corner).
When we consider $C_{\rm RW}(w, h)$ as a configuration of 
generalized variations such that $p_{i}$ is the position of the general 
we denote it by $C_{\rm RW}(w, h, i)$.

\mynewpage

First we derive the value of the minimum firing time of 
$\text{\rm RECT-WALL}$.

\begin{thm}
\label{theorem:mft_rect_wall}
\[
{\rm mft}_\text{\rm RECT-WALL}(C_{\rm RW}(w, h)) = 
w + h + \max\{w, h\}.
\]
\end{thm}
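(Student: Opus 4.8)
The plan is to establish the two inequalities
\[
{\rm mft}_\text{\rm RECT-WALL}(C_{\rm RW}(w, h)) \leq w + h + \max\{w, h\}
\qquad\text{and}\qquad
w + h + \max\{w, h\} \leq {\rm mft}_\text{\rm RECT-WALL}(C_{\rm RW}(w, h))
\]
separately, exactly as in the earlier theorems of the paper. For the upper bound I would build a \emph{check and broadcast partial solution} $A$ whose domain is $\{C_{\rm RW}(w_0,h_0)\}$ for an arbitrary fixed configuration. The general $v_{\rm gen}=p_0$ launches two signals along the wall, one in each direction. The clockwise signal travels to the southeast corner $p_{w}$ (a distance $w_0$), verifying along the way that its travel length equals $w_0$ and that the boundary conditions seen are those of a side of length $w_0$; if so it broadcasts a message ``$w=w_0$''. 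The counterclockwise signal travels to the northwest corner $p_{-h}$ (distance $h_0$) and, if the length checks out, broadcasts ``$h=h_0$''. Since $C_{\rm RW}(w_0,h_0)$ is a ring of $2w_0+2h_0$ nodes, once either message is generated it needs at most $w_0+h_0$ further steps to reach every node; the message ``$w=w_0$'' is generated at time $w_0$ and ``$h=h_0$'' at time $h_0$. A node fires at time $w_0+h_0+\max\{w_0,h_0\}$ if it has received \emph{both} messages by then — and on $C_{\rm RW}(w_0,h_0)$ every node does, because the worst-case node relative to the source of the later-generated message is at ring-distance $w_0+h_0$ from it (half the perimeter), giving receipt time $\max\{w_0,h_0\}+(w_0+h_0)$. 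Conversely, if any node ever fires, both messages were generated, so both length checks passed; together with the aspect-ratio freedom being absent here (this is $\text{\rm RECT-WALL}$, not $[a,b]$) this forces $w=w_0$ and $h=h_0$. Hence $A$ is a partial solution with the stated firing time, and the upper bound follows from the ${\rm mft}_\Gamma(C)\le{\rm ft}(C,A)$ principle.

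For the lower bound I would use Lemma \ref{lemma:fact_4}. Without loss of generality take $w\ge h$, so the target value is $2w+h$. I want to exhibit a node $v$ in $C=C_{\rm RW}(w,h)$ and a strictly larger configuration $C'=C_{\rm RW}(w',h')$ together with a node $v'\in C'$ such that ${\rm ai}(v, 2w+h-1, C) = {\rm ai}(v', 2w+h-1, C')$ while ${\rm rad}(C')>2w+h-1$. The natural choice is the node antipodal-ish to the general along the long direction: take $v=p_{w}$ (the southeast corner). By time $2w+h-1$, the node $p_w$ can have learned the structure of the wall out to ring-distance $2w+h-1$ from $v_{\rm gen}$ going \emph{through} $p_w$ in the counterclockwise direction (distance $w$ to reach $p_w$, then $w+h-1$ more), which reaches only as far as $p_{w+h-1}$, i.e.\ it sees the whole wall minus the last corner — but crucially it cannot determine how long the \emph{northern} side is, since from $p_w$'s viewpoint the configuration could continue past where the northeast corner ``should'' be. Taking $C'$ with the same $w$ and $h'$ much larger than $h$ (and, if necessary matching aspect-type constraints — but $\text{\rm RECT-WALL}$ has none), the two available informations coincide at time $2w+h-1$ while $C'$ has a node in ${\rm Q}$ at that time. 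As in Theorem \ref{theorem:lsp} I would exhibit this concretely for a small case (say $w=3$, $h=1$, so $2w+h=7$, against $C'=C_{\rm RW}(3,5)$) with the accompanying local-map figure, then remark that the general case is identical.

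The main obstacle I anticipate is getting the ring geometry of the upper-bound argument exactly right: one must be careful that \emph{both} messages — not just one — are required for firing, since a single length check (``$w=w_0$'' alone) does not pin down $h$, and conversely; and one must verify that on the intended configuration both messages genuinely arrive within the budget $w_0+h_0+\max\{w_0,h_0\}$, which hinges on the observation that the ring has perimeter $2(w_0+h_0)$ so any two nodes are within ring-distance $w_0+h_0$. A secondary subtlety is the boundary-condition bookkeeping for the traveling signals on a \emph{wall} (as opposed to a path): each corner of the wall has a distinctive boundary condition in ${\mathbf Z}^2$, so the signals can in fact detect corners and measure side lengths, but this needs to be stated carefully. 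I do not expect the two-model distinction of Lemma \ref{lemma:relations_between_bs_and_tr} to intervene here, since every configuration of $\text{\rm RECT-WALL}$ has the general at the southwest corner and hence a fixed boundary condition, so by part (4) the two models agree.
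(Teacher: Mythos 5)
Your upper bound is correct and is essentially the paper's construction (the paper has each of the two signals traverse \emph{two} sides and check both dimensions, so that each message has two generation points, but your single-source version also meets the budget $w_0+h_0+\max\{w_0,h_0\}$ by the half-perimeter argument you give).

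The lower bound, however, has a genuine gap: you chose the wrong witness node. Recall that the available information of $v$ at time $t$ contains the boundary condition of $u$ whenever ${\rm d}_C(v_{\rm gen},u)+{\rm d}_C(u,v)\le t$, where both distances are shortest paths and hence may go either way around the ring. For $v=p_w$ (the southeast corner) and \emph{any} node $u$ of $C_{\rm RW}(w,h)$ one computes ${\rm d}_C(v_{\rm gen},u)+{\rm d}_C(u,p_w)\le w+2h$: the whole south and east walls give sums $\le w+2h$, and every north-wall node $p_{-h-k}$ is activated at time $h+k$ via the west wall and is at distance $w+h-k$ from $p_w$ via the east wall, giving the constant sum $w+2h$. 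Since $w+2h\le 2w+h-1$ whenever $h<w$, the node $p_w$ already knows the boundary condition of \emph{every} node of $C$ at time $2w+h-1$; its available information determines $C$ completely, so no configuration $C'$ with ${\rm rad}(C')>2w+h-1$ can share it, and Lemma \ref{lemma:fact_4} cannot be applied at $p_w$. (Your intermediate claim that the information ``reaches only as far as $p_{w+h-1}$'' also miscounts: after reaching $p_w$ the remaining budget is $w+h-1$, not $h-1$, and information reaches $p_w$ from both of its neighbours.) The correct witness for $w\ge h$ is the \emph{northwest} corner $p_{-h}$, i.e.\ the far end of the \emph{shorter} first leg: every east-wall node $u=p_{w+k}$ satisfies ${\rm d}_C(v_{\rm gen},u)+{\rm d}_C(u,p_{-h})=(w+k)+(w+h-k)=2w+h>2w+h-1$, so at time $2w+h-1$ the node $p_{-h}$ knows neither where the north wall nor where the south wall ends, and its available information coincides with that of the northwest corner of $C'=C_{\rm RW}(w',h)$ for $w'\ge 2w$, which has radius $w'+h>2w+h-1$. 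Your concrete example compounds the error by perturbing $h$ (taking $C'=C_{\rm RW}(3,5)$) when the dimension that the correct witness cannot determine is $w$.
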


\begin{proof}
Let $C(w_{0}, h_{0})$ be some fixed configuration of 
$\text{\rm RECT-WALL}$.
We prove 
${\rm mft}(C(w_{0}, h_{0})) = t_{0}$, 
where $t_{0} = w_{0} + h_{0} + \max \{w_{0}, h_{0}\}$.

First we prove the upper bound 
${\rm mft}(C(w_{0}, h_{0})) \leq t_{0}$ by 
constructing a check and broadcast partial solution $A$.
Suppose that copies of $A$ are placed in a configuration 
$C(w, h)$ of $\text{\rm RECT-WALL}$.
$A$ uses two signals ${\rm R}_{0}$, ${\rm R}_{1}$.

The signal ${\rm R}_{0}$ proceeds from the general 
to the north, checks the statement $h = h_{0}$, 
proceeds to the east, and checks the statement 
$w = w_{0}$.
The signal ${\rm R}_{1}$ proceeds from the general 
to the east, checks the statement $w = w_{0}$, 
proceeds to the north, and checks the statement 
$h = h_{0}$.
In any case, if a signal finds that a statement is 
true then it broadcasts a message such as 
``$w = w_{0}$'', ``$h = h_{0}$''.
A node fires at time $t_{0}$ if it has received 
both of the two messages ``$w = w_{0}$'', ``$h = h_{0}$'' 
before or at time $t_{0}$.

Suppose that $C(w, h) = C(w_{0}, h_{0})$.
In Figure \ref{figure:fig013} we show a diagram 
that shows the travel of signals and 
the generation and the propagation of messages.
In this figure we represent $C(w_{0}, h_{0})$ by 
a line of positions 
$p_{-w_{0}-h_{0}}, \ldots$, $p_{-h_{0}}, \ldots$, $p_{0}, 
\ldots, p_{w_{0}}, \ldots$, $p_{w_{0}+h_{0}}$. 
The leftmost position $p_{-w_{0}-h_{0}}$ 
and the rightmost position $p_{w_{0}+h_{0}}$ 
are the same position. 
As is in Figure \ref{figure:fig010}, 
the generation of a message is marked with a black 
circle.

\begin{figure}[htb]
\begin{center}
\includegraphics[scale=1.0]{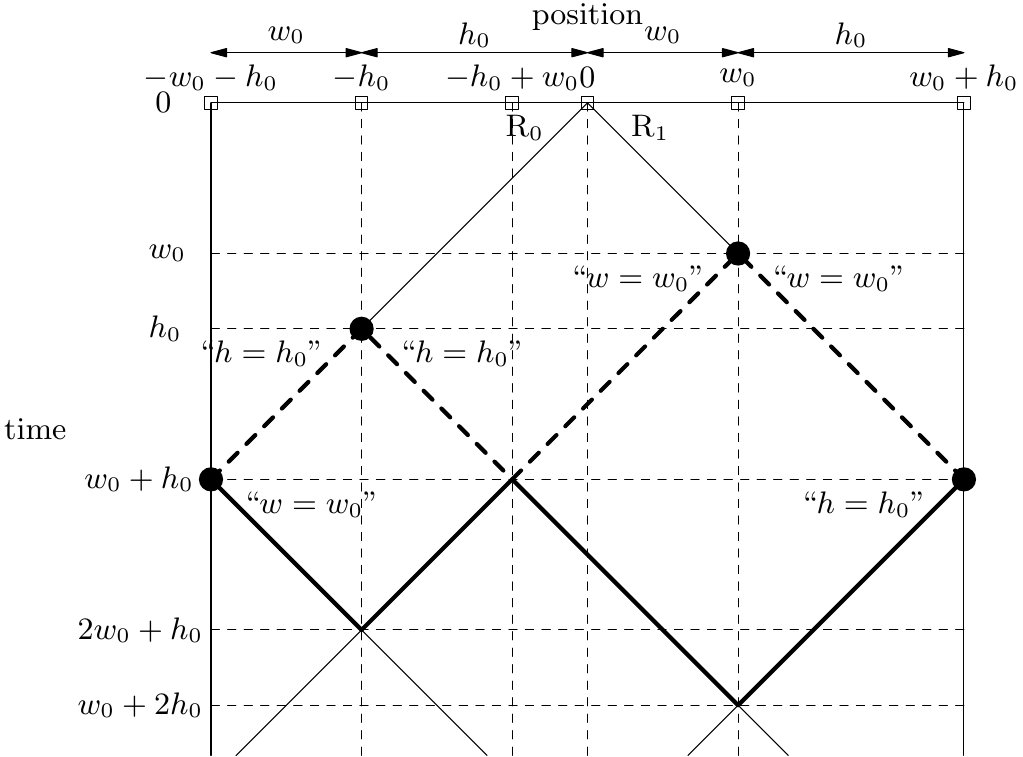}
\end{center}
\caption{The diagram used in the proofs of 
Theorem \ref{theorem:mft_rect_wall} and 
Theorem \ref{theorem:mft_rect_wall_a_b}.}
\label{figure:fig013}
\end{figure}
The thick line represents the first time 
a node receives both of the two messages.
This line shows that all nodes 
receive both of the two message before or at 
time $t_{0} = \max \{2w_{0} + h_{0}, w_{0} + 2h_{0}\}$.
Therefore, all nodes in $C(w, h)$ fire at time $t_{0}$.

Conversely suppose that a node fires 
at some time.
This means that both of the two statements 
$w = w_{0}$, $h = h_{0}$ are true and hence 
$C(w, h) = C(w_{0}, h_{0})$.

Therefore, $A$ is a partial solution 
that has the domain $\{C(w_{0}, h_{0})\}$ 
and that fires $C(w_{0}, h_{0})$ at time $t_{0}$.
This shows the desired upper bound 
${\rm mft}(C(w_{0}, h_{0}))$ $\leq$ $t_{0}$.

We can show the lower bound 
$t_{0} \leq {\rm mft}(C(w_{0}, h_{0}))$ 
by Lemma \ref{lemma:fact_4} using the fact that 
at time $t_{0} - 1$ 
either the node $p_{-h_{0}}$ cannot know 
the value of $w_{0}$ (for the case $h_{0} \leq w_{0}$) 
or the node $p_{w_{0}}$ cannot know 
the value of $h_{0}$ (for the case $w_{0} \leq h_{0}$).
For more details, see the proof of the lower bound 
in the proof of Theorem \ref{theorem:mft_g_lsp_a_b} 
and Figure \ref{figure:fig051}.
\end{proof}

\mynewpage

By slightly modifying the proof of 
Theorem \ref{theorem:mft_rect_wall} we can derive the value 
${\rm mft}_{\text{\rm RECT-WALL}[a,b]}(C(w, h))$.

\begin{thm}
\label{theorem:mft_rect_wall_a_b}
\[
{\rm mft}_{\text{\rm RECT-WALL}[a,b]}(C_{\rm RW}(w, h)) = 
w + h.
\]
\end{thm}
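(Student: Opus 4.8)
The plan is to establish the two inequalities $w+h\le{\rm mft}_{\text{\rm RECT-WALL}[a,b]}(C_{\rm RW}(w,h))$ and ${\rm mft}_{\text{\rm RECT-WALL}[a,b]}(C_{\rm RW}(w,h))\le w+h$ separately. The lower bound is immediate: with the general at the southwest corner $p_{0}$, the northeast corner is at distance $w+h$ along each of the two ways around the wall and no node is farther, so ${\rm rad}(C_{\rm RW}(w,h))=w+h$; since ${\rm rad}(C)\le{\rm mft}(C)$ always (as used in the proof of Theorem \ref{theorem:mft_lsp_a_b}), this gives the lower bound. Since every configuration of $\text{\rm RECT-WALL}[a,b]$ has $v_{\rm gen}$ at the southwest corner, ${\rm bc}_{C}(v_{\rm gen})$ is the same for all of them, so by Lemma \ref{lemma:relations_between_bs_and_tr}(4) the boundary sensitive and traditional models agree and it suffices to argue in one of them.

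For the upper bound I would build a check-and-broadcast partial solution $A$ with domain $\{C_{\rm RW}(w_{0},h_{0})\}$ that fires $C_{\rm RW}(w_{0},h_{0})$ at time $w_{0}+h_{0}$, by modifying the partial solution of Theorem \ref{theorem:mft_rect_wall} in one essential way. As there, $A$ runs two signals: ${\rm R}_{1}$ leaves the general toward the east along the south wall and, on arriving at the southeast corner at time exactly $w_{0}$ (having counted $w_{0}$ steps and seen the southeast-corner boundary condition), broadcasts the message ``$w=w_{0}$''; symmetrically ${\rm R}_{0}$ goes north along the west wall and, on arriving at the northwest corner at time exactly $h_{0}$, broadcasts ``$h=h_{0}$''. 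Each signal simply vanishes if the corresponding corner is reached too early or not reached in time (i.e.\ if the statement is false). The only change from Theorem \ref{theorem:mft_rect_wall} is the firing rule: here a node fires at time $w_{0}+h_{0}$ as soon as it has received \emph{at least one} of the two messages by that time. This is sound because, the ratio $w/h=a/b$ being fixed, either ``$w=w_{0}$'' or ``$h=h_{0}$'' alone forces $C_{\rm RW}(w,h)=C_{\rm RW}(w_{0},h_{0})$.

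The heart of the argument is a covering claim on the cycle. On $C_{\rm RW}(w_{0},h_{0})$, viewed as a cycle of circumference $2w_{0}+2h_{0}$, the southeast corner $p_{w_{0}}$ (source of ``$w=w_{0}$'', emitted at time $w_{0}$) and the northwest corner $p_{-h_{0}}$ (source of ``$h=h_{0}$'', emitted at time $h_{0}$) are exactly antipodal. A node $v$ receives ``$w=w_{0}$'' by time $w_{0}+h_{0}$ iff ${\rm d}(p_{w_{0}},v)\le h_{0}$, and receives ``$h=h_{0}$'' by time $w_{0}+h_{0}$ iff ${\rm d}(p_{-h_{0}},v)\le w_{0}$; the two corresponding arcs have lengths $2h_{0}$ and $2w_{0}$, meet only in their two common endpoints $p_{w_{0}-h_{0}}$ and $p_{w_{0}+h_{0}}$ (the northeast corner), and since $2h_{0}+2w_{0}$ is exactly the circumference, their union is the whole cycle. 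Hence when $C_{\rm RW}(w,h)=C_{\rm RW}(w_{0},h_{0})$, every node receives a message by time $w_{0}+h_{0}$ and all fire then; conversely a node fires only if some message was generated, which by the ratio constraint forces $C_{\rm RW}(w,h)=C_{\rm RW}(w_{0},h_{0})$. Thus $A$ is the desired partial solution and ${\rm mft}(C_{\rm RW}(w_{0},h_{0}))\le w_{0}+h_{0}$.

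I expect the main obstacle to be purely bookkeeping: verifying that $p_{w_{0}}$ and $p_{-h_{0}}$ are genuinely antipodal and that $w_{0}+h_{0}$ is half the circumference, so that the two arcs meet exactly at the two ``quarter'' points and leave no gap; and, in defining the signals, ensuring that the step counters together with the corner boundary conditions correctly distinguish the cases $w<w_{0}$, $w=w_{0}$, $w>w_{0}$ (and likewise for $h$). Everything else follows the template of Theorem \ref{theorem:mft_rect_wall} and Figure \ref{figure:fig013} almost verbatim, with ``both messages'' replaced by ``either message'' and the firing time lowered from $w_{0}+h_{0}+\max\{w_{0},h_{0}\}$ to $w_{0}+h_{0}$.
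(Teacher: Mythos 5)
Your proposal is correct and matches the paper's proof in its essential content: the upper bound is obtained exactly as in the paper, by taking the check-and-broadcast partial solution of Theorem \ref{theorem:mft_rect_wall} and weakening the firing rule to ``at least one of the two messages by time $w_{0}+h_{0}$,'' with the fixed ratio $w/h = a/b$ guaranteeing that a single message already pins down the configuration, and with the same antipodal two-arc covering of the cycle that the paper encodes in the dotted thick line of Figure \ref{figure:fig013}. The only (harmless) divergence is the lower bound, where you use ${\rm rad}(C_{\rm RW}(w,h)) = w + h \leq {\rm mft}(C_{\rm RW}(w,h))$ directly rather than the paper's invocation of Lemma \ref{lemma:fact_4}; both arguments are valid, and yours is the one the paper itself uses for the analogous bounds in Theorems \ref{theorem:mft_g_rect_wall_a_b} and \ref{theorem:mft_rect_a_b}.
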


\begin{proof}
In the proof of Theorem \ref{theorem:mft_rect_wall} 
we designed $A$ so that a node fires at time 
$t_{0} = w_{0} + h_{0} + \max\{ w_{0}, h_{0}\}$
if it has received both of the two 
messages ``$w = w_{0}$'', ``$h = h_{0}$'' before or at 
time $t_{0}$.
We change this rule so that 
a node fires at time $w_{0} + h_{0}$ if 
it has received at least one of the two messages before or at 
time $w_{0} + h_{0}$.

In Figure \ref{figure:fig013}, by a dotted thick line we show 
the first time a node receives at least one of the two messages.
From this line we know that if $C(w, h) = C(w_{0}, h_{0})$ 
all nodes receive at least one of the two messages before or at 
time $w_{0} + h_{0}$ and hence fire at that time.
Conversely, if a node in $C(w, h)$ fires at a time it means 
that at least one of $w = w_{0}$, $h = h_{0}$ is true and hence 
$C(w, h) = C(w_{0}, h_{0})$ (because $w / h = w_{0} / h_{0} = a / b$).
Therefore $A$ is a partial solution that has the domain 
$\{C(w_{0}, h_{0})\}$ and fires $C(w_{0}, h_{0})$ at time 
$w_{0} + h_{0}$.

The lower bound can be proved by 
Lemma \ref{lemma:fact_4} 
using the fact that both of $p_{-w_{0} - h_{0}}$ and $p_{-h_{0} + w_{0}}$ 
can know neither $w_{0}$ nor $h_{0}$ at time $w_{0} + h_{0} - 1$.
\end{proof}

\mynewpage

In determining the value 
${\rm mft}(C(w, h, i))$ for $\text{\rm g-RECT-WALL}$ 
we may assume that $0 \leq w \leq h$ and $0 \leq i \leq w + h$ 
because $\text{\rm g-RECT-WALL}$ has many symmetries in it.

\begin{thm}
\label{theorem:mft_g_rect_wall}
Suppose that $0 \leq w \leq h$.
\begin{enumerate}
\item[{\rm (1)}] If $0 \leq i \leq w$ then 
${\rm mft}_\text{\rm g-RECT-WALL}(C_{\rm RW}(w, h, i)) = w + 2h$.
\item[{\rm (2)}] If $w < i \leq w + h$ and $h \leq 2w$ then
\begin{eqnarray*}
\lefteqn{{\rm mft}_\text{\rm g-RECT-WALL}(C_{\rm RW}(w, h, i))} \\
& = & 
\begin{cases}
-i + 2w + 2h & \text{if \/\/ $w < i \leq h$},\\
2w + h & \text{if \/\/ $h < i \leq 2w$},\\
i + h & \text{if \/\/ $2w < i \leq w + h$}.
\end{cases}
\end{eqnarray*}
\item[{\rm (3)}] If $w < i \leq w + h$ and $2w < h$ then
\begin{eqnarray*}
\lefteqn{{\rm mft}_\text{\rm g-RECT-WALL}(C_{\rm RW}(w, h, i))} \\
& = &
\begin{cases}
-i + 2w + 2h & \text{if \/\/ $w < i \leq w + h/2$}, \\
i + h & \text{if \/\/ $w + h/2 < i \leq w + h$}.
\end{cases}
\end{eqnarray*}
\end{enumerate}
\end{thm}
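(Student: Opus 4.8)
The plan is to follow the same pattern already established in the proof of Theorem~\ref{theorem:mft_g_lsp_a_b}: for each fixed configuration $C(w_0,h_0,i_0)$ of $\text{\rm g-RECT-WALL}$ we construct a check and broadcast partial solution whose domain is $\{C(w_0,h_0,i_0)\}$ to get the upper bound on ${\rm mft}$, and we use Lemma~\ref{lemma:fact_4} to get the matching lower bound. Because $\text{\rm g-RECT-WALL}$ has the full dihedral symmetry of the rectangle (we may reflect and rotate freely, and swap the roles of $w$ and $h$ by a reflection), it suffices to treat the standing assumption $0\le w\le h$ and $0\le i\le w+h$; every other general position is obtained by a symmetry that carries minimum firing times to minimum firing times. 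The parameter $i$ here is read modulo $2w+2h$, so a general at position $i$ sees two arcs of the wall emanating from it, of combined length $2w+2h$; the radius of $C(w,h,i)$ is half the length of the longer of the two ways around, which explains why the answer is piecewise-linear in $i$ with breakpoints exactly where one of the structurally relevant distances overtakes another.

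For the upper bound I would equip the partial solution with signals that leave $v_{\rm gen}$ at speed $1$ in both directions around the wall, together with a signal that first confirms the boundary condition ${\rm bc}_{C}(v_{\rm gen})$ of the general (this pins down whether $i_0$ lies on an edge or at a corner, and on which edge). Each signal, as it travels, checks a structural equality such as ``this edge has length $w_0$'', ``that edge has length $h_0$'', ``the general is $x_0$ steps from such-and-such corner'', and broadcasts a confirming message when the equality holds; a node fires at the prescribed time $t_0$ once it has received a set of confirmations that collectively force $C(w,h,i)=C(w_0,h_0,i_0)$ (using $w/h=w_0/h_0=a/b$ is \emph{not} available here since this is the unrestricted variation, so the confirmations must pin down $w$, $h$, and $i$ independently --- e.g. one length on each edge plus the two distances from the general to the two nearer corners). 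The firing time $t_0$ is then the last moment at which the slowest node receives its required bundle of messages; drawing the space--time diagram (the analogue of Figures~\ref{figure:fig010}, \ref{figure:fig011}, \ref{figure:fig013}, but now on a cycle of length $2w_0+2h_0$ with the general at a chosen point) and reading off the latest reception time yields exactly $w+2h$ in case (1), and the three-way and two-way splits in cases (2) and (3). The breakpoints $i=h$, $i=2w$, $i=w+h/2$ arise precisely where the node that receives its last message latest switches from one corner-or-edge-node to another as $i$ varies, and the condition $h\le 2w$ versus $2w<h$ decides whether the ``$2w+h$'' plateau (in case (2)) exists or is skipped over (in case (3)).

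For the lower bound I would, for each fixed $C=C(w_0,h_0,i_0)$ and the claimed value $t_0$, exhibit a larger configuration $C'=C(w,h,i)$ of $\text{\rm g-RECT-WALL}$ and a node $v'$ in $C'$ such that ${\rm ai}(v',t_0-1,C')={\rm ai}(v,t_0-1,C)$ for the appropriate $v\in C$, while ${\rm rad}(C')>t_0-1$; Lemma~\ref{lemma:fact_4} then gives ${\rm mft}(C)>t_0-1$, i.e.\ ${\rm mft}(C)\ge t_0$. Concretely, $v$ is the node identified above as receiving its last message latest (in case~(1), the far corner at arc-distance $w_0+2h_0$ balanced against the other way round; in cases~(2),(3) the corresponding corner node), and $v$ at time $t_0-1$ has learned enough of the wall to confirm one or two of the structural equalities but not all of them; we enlarge the unconfirmed dimension (keeping the arc that $v$ has already explored intact, so that ${\rm ai}$ is unchanged up to time $t_0-1$, and keeping $w\le h$ or adjusting by symmetry) so that the enlarged wall is strictly too big to have been synchronised by time $t_0-1$. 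One has to check that the enlargement can be done \emph{within} $\text{\rm g-RECT-WALL}$ (there is no $a,b$ divisibility constraint here, so this is easy) and that it genuinely leaves ${\rm ai}(v,\cdot)$ untouched up to $t_0-1$ --- this is the same bookkeeping as in Figure~\ref{figure:fig051} but performed on a cycle rather than an L-path.

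The main obstacle I anticipate is not any single inequality but the \emph{case analysis itself}: getting the breakpoints exactly right and verifying, in each of the (up to) six subcases, both (i) that the check-and-broadcast schedule really does deliver the required confirmations to every node by the claimed $t_0$ (and to the critical node not before a time that would spoil the lower bound matching), and (ii) that the enlarged configuration $C'$ used for Lemma~\ref{lemma:fact_4} is admissible and ${\rm ai}$-indistinguishable from $C$ at time $t_0-1$. The cyclic nature of the wall means a node can be reached from the general along two different arcs, so the ``latest reception'' node is determined by a $\min$ of two arc-lengths, and tracking which arc wins as $i$ crosses each breakpoint --- together with the secondary split on whether $h\le 2w$ --- is where the care is needed. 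Once the diagram is drawn correctly for a representative value of $i$ in each regime, the rest is routine linear arithmetic of the kind already carried out for Theorems~\ref{theorem:mft_g_lsp_a_b} and \ref{theorem:mft_rect_wall}.
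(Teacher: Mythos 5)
Your proposal follows essentially the same route as the paper: a check-and-broadcast partial solution with domain $\{C(w_0,h_0,i_0)\}$ whose signals verify the two distances from the general to the nearer corners plus one remaining side length (all three confirmations being required, since no ratio constraint is available here), giving the upper bound, and Lemma~\ref{lemma:fact_4} applied to a suitably enlarged wall for the matching lower bound, with the piecewise formula and its breakpoints read off the space--time diagram exactly as in the paper's Figures~\ref{figure:fig014} and \ref{figure:fig015}. The only detail you leave implicit that the paper states explicitly is the passage from the boundary-sensitive to the traditional model, which is immediate here via Lemma~\ref{lemma:from_bs_to_tr} because ${\rm mft}_{\rm bs}(C) > {\rm rad}(C) = w+h$ in every case of the theorem.
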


\begin{proof}
The proof of the lower bound using Lemma 
\ref{lemma:fact_4}
is similar to those of 
Theorems 
\ref{theorem:mft_rect_wall}.
\ref{theorem:mft_rect_wall_a_b}.
Therefore we consider only the upper bound.
We explain only how to modify the construction of 
the check and broadcast partial solution $A$ 
in the proof of 
Theorem \ref{theorem:mft_rect_wall}.
Let $C(w_{0}, h_{0}, i_{0})$ be some fixed configuration 
of $\text{\rm g-RECT-WALL}$.
We assume $w_{0} \leq h_{0}$.
Suppose that copies of $A$ are placed in a configuration 
$C(w, h, i)$ of $\text{\rm g-RECT-WALL}$.
$A$ uses two signals ${\rm R}_{0}$, ${\rm R}_{1}$.

\medskip

\noindent
(I) The case of the boundary sensitive model 
and $0 \leq i_{0} \leq w_{0}$.

\medskip

Let $x_{0} = i_{0}$, $y_{0} = w_{0} - i_{0}$.
The signal ${\rm R}_{0}$ checks the boundary condition of 
the general.
If it is correct then $0 \leq i \leq w$.
Let $x$, $y$ be values 
$x = i$, $y = w - i$.
The signal proceeds to the west, checks 
the statement $x = x_{0}$, proceeds to the north, 
and checks the condition $h = h_{0}$.
The signal ${\rm R}_{1}$ checks 
the boundary condition of the general, 
proceeds to the east, checks the condition $y = y_{0}$, 
proceeds to the north, and checks the condition 
$h = h_{0}$.
If a statement is true then the signal broadcasts 
a message with the name of the statement as its label.
A node fires at time $w_{0} + 2h_{0}$ if it 
has received all of the three messages 
``$x = x_{0}$'', ``$y = y_{0}$'', ``$h = h_{0}$'' 
before or at time $w_{0} + 2h_{0}$.

\begin{figure}[htb]
\begin{center}
\includegraphics[scale=1.0]{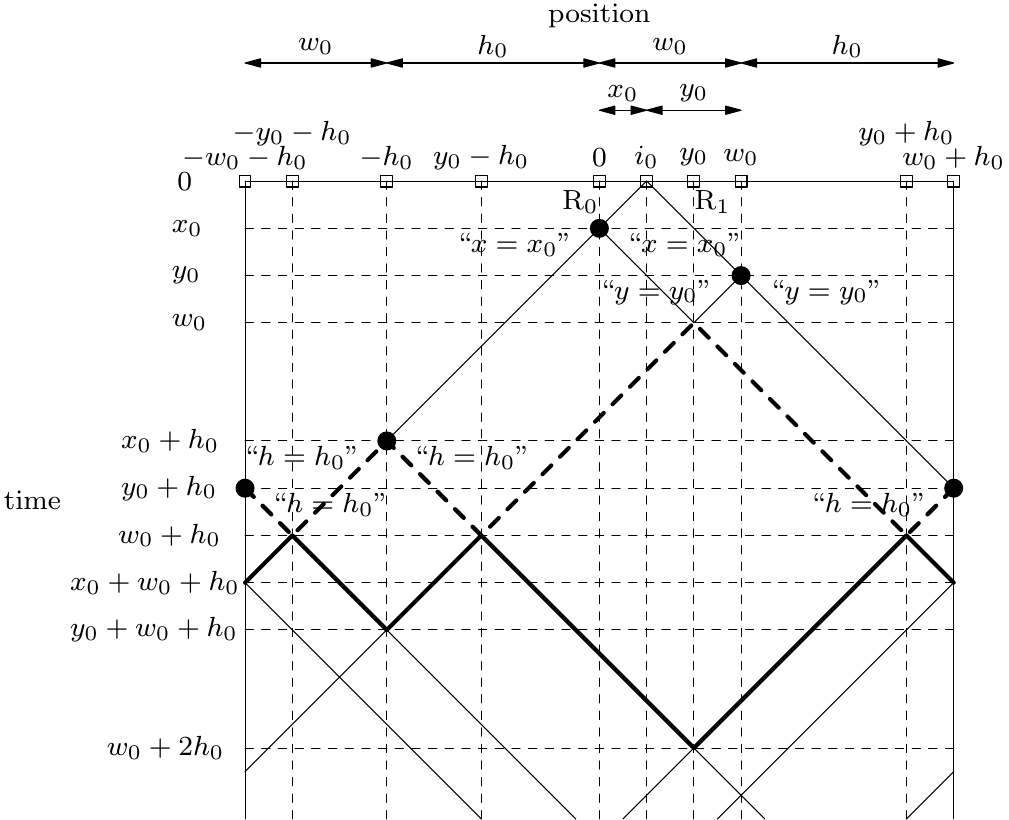}
\end{center}
\caption{The diagram used in the proofs of 
Theorems \ref{theorem:mft_g_rect_wall} and 
\ref{theorem:mft_g_rect_wall_a_b} 
for the case $0 \leq i_{0} \leq w_{0}$.}
\label{figure:fig014}
\end{figure}

Suppose that $C(w, h, i) = C(w_{0}, h_{0}, i_{0})$.
In Figure \ref{figure:fig014} we show 
the travel of signals and the generation and the propagation 
of messages.
The thick line represents the first time a node 
receives all of the three messages.
The line shows that all nodes receive 
all of the three messages before or at 
time $y_{0} + 2h_{0}$ and fire 
at the time 
(note that $x_{0} + w_{0} + h_{0} \leq w_{0} + 2h_{0}$ 
and $y_{0} + w_{0} + h_{0} \leq w_{0} + 2h_{0}$).

Conversely suppose that 
a node in $C(w, h, i)$ fires at some time.
This means that all of $x = x_{0}$, $y = y_{0}$, $h = h_{0}$ 
are true and hence $C(w, h, i) = C(w_{0}, h_{0}, i_{0})$.

Therefore, $A$ is a partial solution that has 
the domain $\{C(w_{0}, h_{0}, i_{0})\}$ and that 
fires $C(w_{0}, h_{0}, i_{0})$ at time 
$w_{0} + 2h_{0}$.
This shows the desired upper bound.

\medskip

\mynewpage

\noindent
(II) The case of the boundary sensitive model and 
$w_{0} < i_{0} \leq w_{0} + h_{0}$.

\medskip

In this case, let $x_{0}$, $y_{0}$, $x$, $y$ be 
the values 
$x_{0} = i_{0} - w_{0}$, $y_{0} = w_{0} + h_{0} - i_{0}$, 
$x = i - w$, $y = w + h - i$.
The signals ${\rm R}_{0}$, ${\rm R}_{1}$ check 
the statements $x = x_{0}$, $y = y_{0}$, $w = w_{0}$ 
and broadcast messages if statements are true.
A node fires at time 
$t_{0} = \max\{x_{0}, y_{0}, w_{0}\} + w_{0} + h_{0}$ 
if it has received all of the three messages 
``$x = x_{0}$'', ``$y = y_{0}$'', ``$w = w_{0}$'' before 
or at time $t_{0}$.

\begin{figure}[htb]
\begin{center}
\includegraphics[scale=1.0]{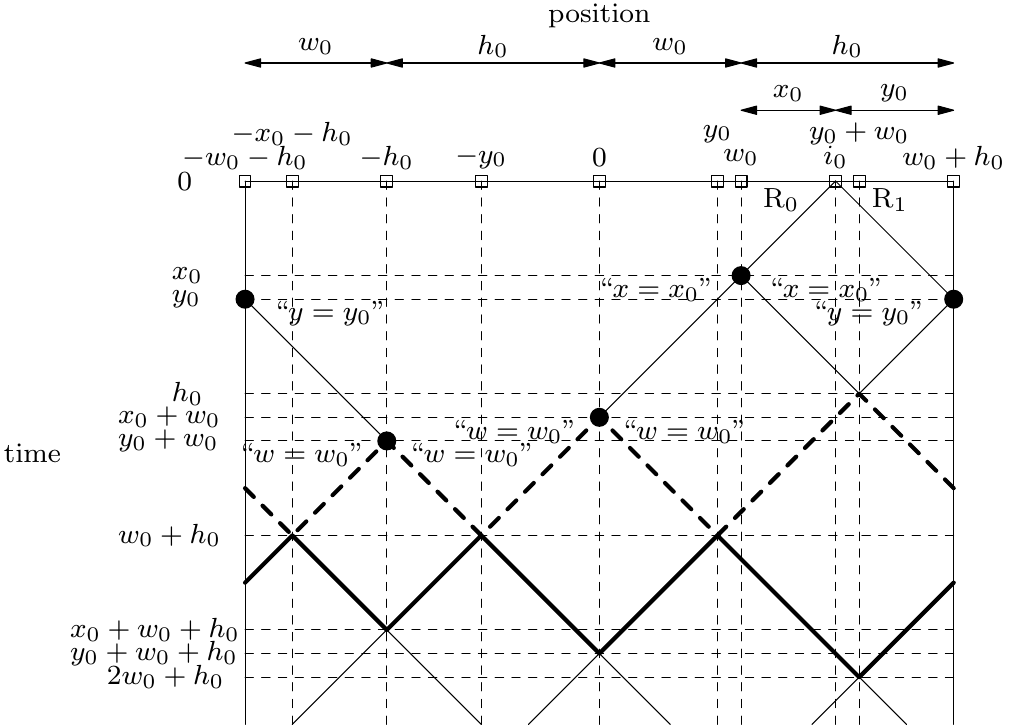}
\end{center}
\caption{The diagram used in the proofs of 
Theorems \ref{theorem:mft_g_rect_wall} and 
\ref{theorem:mft_g_rect_wall_a_b} 
for the case $w_{0} < i_{0} \leq w_{0} + h_{0}$.}
\label{figure:fig015}
\end{figure}

Suppose that $C(w, h, i) = C(w_{0}, h_{0}, i_{0})$.
In Figure \ref{figure:fig015} the thick line represents 
the first time a node receives all of the three messages.
It shows that all nodes in $C(w, h, i)$ 
receive all of the three 
messages before or at time 
$t_{0} = \max \{x_{0} + w_{0} + h_{0}, y_{0} + w_{0} + h_{0}, 
2w_{0} + h_{0}\}$ and hence fire at that time.

Conversely suppose that a node in $C(w, h, i)$ fires 
at some time.
This means that all of $x=x_{0}$, $y = y_{0}$, $w = w_{0}$ are 
true and $C(w, h, i) = C(w_{0}, h_{0}, i_{0})$.

Therefore, we have an upper bound 
${\rm mft}(C(w_{0}, h_{0}, i_{0})) \leq t_{0}$.
A detailed analysis shows that if $h_{0} \leq 2w_{0}$ then 
\[
t_{0} = 
\begin{cases}
-x_{0} + w_{0} + 2h_{0} & \text{for $0 < x_{0} \leq h_{0} - w_{0}$}, \\
2w_{0} + h_{0} & \text{for $h_{0} - w_{0} < x_{0} \leq w_{0}$}, \\
x_{0} + w_{0} + h_{0} & \text{for $w_{0} < x_{0} \leq h_{0}$},
\end{cases}
\]
and if $2w_{0} < h_{0}$ then 
\[
t_{0} = 
\begin{cases}
-x_{0} + w_{0} + 2h_{0} & \text{for $0 < x_{0} \leq h_{0}/2$}, \\
x_{0} + w_{0} + h_{0} & \text{for $h_{0}/2 < x_{0} \leq h_{0}$}.
\end{cases}
\]
Substituting $x_{0} = i_{0} - w_{0}$ we obtain 
the upper bound for the statement mentioned in the theorem.

\medskip

\mynewpage

\noindent
(III) The case of the traditional model.

\medskip

In (I), (II) we determined the value of 
${\rm mft}_{\rm bs}(C)$.
This value is $w + 2h$ in case (1), 
is at least $2w + h$ in case (2), and 
is at least $w + (3/2) h$ in case (3) of the statement 
of the theorem.
Therefore ${\rm rad}(C(w, h, i)) = w + h < 
{\rm mft}_{\rm bs}(C(w, h, i))$ and hence 
${\rm mft}_{\rm bs}(C(w, h, i)) = 
{\rm mft}_{\rm tr}(C(w, h, i))$ 
by Lemma \ref{lemma:from_bs_to_tr}.
\end{proof}

\mynewpage

For $\text{\rm g-RECT-WALL}[a, b]$ too, we can derive 
its minimum firing time by slightly modifying 
the derivation in Theorem \ref{theorem:mft_g_rect_wall}.

\medskip

\begin{thm}
\label{theorem:mft_g_rect_wall_a_b}
For any $a$, $b$, 
\[
{\rm mft}_{\text{\rm g-RECT-WALL}[a,b]}(C_{\rm RW}(w, h, i)) = w + h.
\]
\end{thm}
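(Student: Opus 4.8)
The plan is to mirror the three-part structure of the proof of Theorem~\ref{theorem:mft_g_rect_wall} --- boundary sensitive model with the general on one pair of edges, boundary sensitive model with the general on the other pair of edges, then the traditional model --- while exploiting the fact that fixing the ratio $w/h = a/b$ makes a single length (say $w$, equivalently $h$ or $l$) determine the entire wall. The lower bound needs no work at all: since $C_{\rm RW}(w,h,i)$ is, as a configuration, a cycle of length $2w+2h$, its radius from the general $p_i$ is exactly $w+h$, attained at the antipodal node $p_{i+w+h}$, and ${\rm mft}(C) \ge {\rm rad}(C)$ always holds. So the whole content of the theorem is the upper bound ${\rm mft}_{\text{\rm g-RECT-WALL}[a,b]}(C_{\rm RW}(w,h,i)) \le w+h$, in both models.

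For the boundary sensitive model I would build, for each fixed $C_{\rm RW}(w_0,h_0,i_0)$, a check and broadcast partial solution with domain $\{C_{\rm RW}(w_0,h_0,i_0)\}$ that fires at time $w_0+h_0$, reusing the signal/message diagrams of Figures~\ref{figure:fig013}, \ref{figure:fig014} and \ref{figure:fig015} with the firing time lowered to $w_0+h_0$ and the firing condition relaxed to the smallest set of confirmation messages that still pins down $C_{\rm RW}(w_0,h_0,i_0)$ (``at least one'' of ``$w=w_0$'', ``$h=h_0$'' when the general sits at a corner, exactly as in the proof of Theorem~\ref{theorem:mft_rect_wall_a_b}; a correspondingly small set in the edge cases). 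The relaxation is legitimate because, $w/h=a/b$ being fixed, any one of the checked length facts, together with the boundary condition of the general and the distances measured by the signals, already forces $C_{\rm RW}(w,h,i)=C_{\rm RW}(w_0,h_0,i_0)$. The drop of the firing time to $w_0+h_0$ rests on the elementary geometric fact that the confirmation messages can be arranged to be generated at two antipodal corners of the wall-cycle, and for antipodal vertices $u,v$ of an even cycle every vertex's distances to $u$ and to $v$ sum to exactly half the cycle length, that is, to $w_0+h_0$; since the generation times are precisely the distances from $p_{i_0}$ to $u$ and to $v$, a short case check shows each node receives one of the messages by time $w_0+h_0$. These verifications are routine variants of those already carried out in Theorems~\ref{theorem:mft_rect_wall_a_b} and \ref{theorem:mft_g_rect_wall}.

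The delicate point --- the one place where this proof genuinely departs from that of Theorem~\ref{theorem:mft_g_rect_wall} --- is the traditional model. In Theorem~\ref{theorem:mft_g_rect_wall} the two models were reconciled via Lemma~\ref{lemma:from_bs_to_tr}, which applies there because ${\rm mft}_{\rm bs}(C) > {\rm rad}(C)$; here ${\rm mft}_{\rm bs}(C) = w+h = {\rm rad}(C)$, so Lemma~\ref{lemma:from_bs_to_tr} is unavailable and a separate traditional-model construction is needed, exactly as for the exceptional configurations $C_{\rm L}(w_0,h_0,0)$ and $C_{\rm L}(w_0,w_0,2w_0)$ in the proof of Theorem~\ref{theorem:mft_g_lsp_a_b}. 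The idea is to avoid paying the traditional model's one-step penalty --- the general not knowing ${\rm bc}_C(v_{\rm gen})$ at time $0$ --- by launching the checking signals in both directions from the general at time $0$ and having them verify the geometry against the four corners of the wall, which are locally recognizable landmarks, rather than against the general's boundary condition; the boundary condition of the general is used only to emit, at $v_{\rm gen}$ at time $1$, a message confirming the general's own position, and this message reaches every node except the antipodal one $p_{i_0+w_0+h_0}$ by time $w_0+h_0$, while $p_{i_0+w_0+h_0}$ is fired instead by a confirmation message generated right at it at time $w_0+h_0$ by a signal that has traversed the half-perimeter from the general, checking the two side lengths it spans along the way. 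I expect the bookkeeping here --- deciding from which corner each message is launched, and verifying that for every position $i_0$ of the general every node collects a configuration-determining set of messages by time exactly $w_0+h_0$ --- to be the main obstacle, although it is of the same character as, and no harder than, the exceptional-case argument already completed for $\text{\rm g-LSP}[a,b]$.
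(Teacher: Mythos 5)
Your proposal is correct and follows essentially the same route as the paper's proof: the radius lower bound, a check-and-broadcast partial solution in the boundary sensitive model with the firing condition relaxed (using $w/h=a/b$) so that a configuration-determining subset of messages reaches every node by $w_0+h_0$, and for the traditional model a ``boundary condition is correct'' message broadcast at time $1$ that fires all nodes except the antipode $p_{i_0+w_0+h_0}$, with a separate perimeter-traversing check to fire the antipode. The only difference is cosmetic: you fire the antipode with one half-perimeter signal whose checks pin down the configuration via the ratio, whereas the paper has two signals (clockwise and counterclockwise, checking $i=i_0$, $w-i=w_0-i_0$, $h=h_0$) meet simultaneously there; both work.
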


\begin{proof}
By symmetries in the problem 
$\text{\rm g-RECT-WALL}[a,b]$ we may assume that 
$a \leq b$.
The lower bound is obvious because 
$w + h = {\rm rad}(C(w, h, i)) \leq {\rm mft}(C(w$, $h$, $i))$.
We prove the upper bound 
${\rm mft}(C(w_{0}, h_{0}, i_{0})) \leq w_{0} + h_{0}$ 
for a fixed given configuration 
$C(w_{0}, h_{0}, i_{0})$ of $\text{\rm g-RECT-WALL}[a,b]$.
We may assume that $0 \leq i_{0} \leq w_{0} + h_{0}$.
We modify the partial solution $A$ constructed 
in the proof of Theorem \ref{theorem:mft_g_rect_wall} 
as follows.

\medskip

\noindent
(IV) The upper bound for the case of the boundary sensitive model.

\medskip

For the part (I) of the proof of Theorem 
\ref{theorem:mft_g_rect_wall}, 
we change the rule of firing as follows: 
if a node has received at least two of the three messages 
``$x=x_{0}$'', ``$y=y_{0}$'', ``$h=h_{0}$'' before or at time 
$w_{0} + h_{0}$ then it fires at time $w_{0} + h_{0}$.
For the part (II), the rule of firing is: 
if a node has received at least two of the three messages 
``$x=x_{0}$'', ``$y=y_{0}$'', ``$w=w_{0}$'' before or at time 
$w_{0} + h_{0}$ then it fires at time $w_{0} + h_{0}$.

In Figure \ref{figure:fig014} and Figure \ref{figure:fig015} 
the thick dotted lines represent the first time a node 
receives at least two of the three messages.
The lines show that if $C(w, h, i) = C(w_{0}, h_{0}, i_{0})$ 
then all nodes receive two messages before or 
at time $w_{0} + h_{0}$ and hence fire at that time.

Conversely if a node in $C(w, h, i)$ fires at some time 
then at least two of the three statements 
$x = x_{0}$, $y = y_{0}$, $h = h_{0}$ are true for (I) and 
at least two of the three statements 
$x = x_{0}$, $y = y_{0}$, $w = w_{0}$ are true for (II), 
and hence $C(w, h, i) = C(w_{0}, h_{0}, i_{0})$.

Thus we have the upper bound $w_{0} + h_{0}$ for 
the boundary sensitive model.

\medskip

\mynewpage

\noindent
(V) The upper bound for the case of the traditional model.

\medskip

Let $A$ be the partial solution of 
the boundary sensitive model constructed in (IV).
We modify $A$ to construct 
a partial solution $A'$ of the traditional model 
having the same domain and the same firing time.
Note that we cannot use Lemma \ref{lemma:from_bs_to_tr} 
because ${\rm mft}_{\rm bs}(C(w_{0}$, $h_{0}$, $i_{0})) = w_{0} + h_{0} = 
{\rm rad}(C(w_{0}, h_{0}, i_{0}))$. 
First we consider the case $0 < i_{0} < w_{0}$.
There are two modifications.

In the first modification, $A'$ simulates $A$.  
As the general state, 
$A'$ uses the general state of $A$ that corresponds to the 
boundary condition of $v_{\rm gen}$ in $C(w_{0}, h_{0}, i_{0})$.
At the same time, a signal in $A'$ checks the boundary condition 
of $v_{\rm gen}$ from time $0$ to $1$ and 
broadcasts a message ``bc is correct'' at time $1$ 
if the boundary condition is correct.
A node fires at a time $t$ if $A$ fires at $t$ and moreover 
it has received the message ``bc is correct'' before or at $t$.
Then, if $C(w, h, i) = C(w_{0}, h_{0}, i_{0})$ then 
all nodes in $C(w, h, i)$ except the node 
$p_{i_{0} + w_{0} + h_{0}}$ (the opposite node of $p_{i_{0}}$) 
fire at time $w_{0} + h_{0}$ by this firing rule.

The purpose of the second modification is to fire 
$p_{i_{0} + w_{0} + h_{0}}$ at time $w_{0} + h_{0}$.
In the modification, 
two signals ${\rm R}_{2}$, ${\rm R}_{3}$ start at time $0$ 
at $v_{\rm gen}$ and travel through the path, 
${\rm R}_{2}$ travelling clockwise and ${\rm R}_{3}$ 
travelling counterclockwise.
While travelling, ${\rm R}_{2}$ checks the 
statement ``$i = i_{0}$ and $h = h_{0}$'' 
and ${\rm R}_{3}$ checks the statement 
``$w - i = w_{0} - i_{0}$ and $h = h_{0}$.''
If the statement to be checked by the signal $R_{2}$ or $R_{3}$ 
is false the signal vanishes.
A node fires at time $w_{0} + h_{0}$ if 
both of ${\rm R}_{2}$, 
${\rm R}_{3}$ arrive at the node simultaneously 
at the time $w_{0} + h_{0}$.
If $C(w, h, i) = C(w_{0}, h_{0}, i_{0})$ then 
$p_{i_{0}+w_{0}+h_{0}}$ (and only 
$p_{i_{0}+w_{0}+h_{0}}$) fires at time $w_{0} + h_{0}$ 
by this second firing rule.

Conversely suppose that a node in $C(w, h, i)$ fires at some time.
If the node fires by the first firing rule then 
the original $A$ fires the node at that time and 
hence $C(w, h, i) = C(w_{0}, h_{0}, i_{0})$.
If it fires by the second firing rule then 
all of the three statements 
$i = i_{0}$, $w - i = w_{0} - i_{0}$, 
$h = h_{0}$ are true and hence 
$C(w, h, i) = C(w_{0}, h_{0}, i_{0})$.

Thus $A'$ is a partial solution of the traditional model 
that has the domain $\{C(w_{0}, h_{0}, i_{0})\}$ and that fires 
$C(w_{0}, h_{0}, i_{0})$ at time $w_{0} + h_{0}$.

\mynewpage

Next we consider the case $i_{0} = 0$.
In this case ${\rm R}_{2}$ needs one extra step to 
check the statement $i = i_{0}$ (that is, $i = 0$).
However, in this case too we can design the behavior 
of $R_{2}$, $R_{3}$ and specify the second firing rule so that 
\begin{enumerate}
\item[$\bullet$] if $C(w, h, i) = C(w_{0}, h_{0}, 0)$ then 
$p_{w_{0} + h_{0}}$ (and only $p_{w_{0} + h_{0}}$) 
fires at time $w_{0} + h_{0}$ by the second firing rule, and 
\item[$\bullet$] if a node in $C(w, h, i)$ fires at some time 
by the second firing rule then $C(w, h, i) = C(w_{0}, h_{0}, 0)$.
\end{enumerate}

Modifications for other cases 
($- w_{0} - h_{0} < i_{0} < - w_{0}$, $i_{0} = - w_{0}$, 
\ldots, $w_{0} < i_{0} < w_{0} + h_{0}$, $i_{0} = w_{0} + h_{0}$) 
are similar.
\end{proof}

\mynewpage

Although we do not know whether 
$\text{\rm RECT-WALL}$, $\text{\rm g-RECT-WALL}$ have 
minimal-time solutions, at least we have the following 
result.

\begin{thm}
\label{theorem:from_g_rect_wall_to_rect_wall}
If $\text{\rm g-RECT-WALL}$ has a minimal-time solution 
then $\text{\rm RECT-}$ $\text{\rm WALL}$ also has one.
\end{thm}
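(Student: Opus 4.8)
The plan is to recognize $\text{\rm g-RECT-WALL}$ as a \emph{conservative supervariation} of $\text{\rm RECT-WALL}$ in the precise sense introduced just before Theorem~\ref{theorem:non_existence_g_lsp_a_b}, and then to invoke the principle stated there: any minimal-time solution of a conservative supervariation of $\Gamma$ is itself a minimal-time solution of $\Gamma$. Every configuration $C_{\rm RW}(w,h)$ of $\text{\rm RECT-WALL}$ is exactly the configuration $C_{\rm RW}(w,h,0)$ of $\text{\rm g-RECT-WALL}$ — in both of them the general sits at the southwest corner $p_0$ — so $\text{\rm g-RECT-WALL}$ is certainly a supervariation of $\text{\rm RECT-WALL}$, and the only thing to verify is that the two minimum firing times agree on these common configurations. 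Once that is done the theorem is immediate: a minimal-time solution of $\text{\rm g-RECT-WALL}$ is then a minimal-time solution of $\text{\rm RECT-WALL}$.

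To check the equality I would simply compare the two values already computed in these two sections. By Theorem~\ref{theorem:mft_rect_wall}, ${\rm mft}_{\text{\rm RECT-WALL}}(C_{\rm RW}(w,h)) = w + h + \max\{w,h\}$, which is symmetric in $w$ and $h$. On the other side, part (1) of Theorem~\ref{theorem:mft_g_rect_wall}, applied with the smaller of $w,h$ in the role of the width and with general index $i = 0$, gives ${\rm mft}_{\text{\rm g-RECT-WALL}}(C_{\rm RW}(w,h,0)) = \min\{w,h\} + 2\max\{w,h\} = w + h + \max\{w,h\}$. Hence the two values coincide for every $C_{\rm RW}(w,h)$, which is precisely the defining condition for $\text{\rm g-RECT-WALL}$ to be a conservative supervariation of $\text{\rm RECT-WALL}$, and the conclusion follows.

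There are two small points needing care, neither of them a genuine obstacle. First, Theorem~\ref{theorem:mft_g_rect_wall} is stated under the normalization $0 \le w \le h$, whereas a configuration $C_{\rm RW}(w,h)$ of $\text{\rm RECT-WALL}$ may have $w > h$; this is handled exactly by the same symmetry reduction already used in Theorem~\ref{theorem:mft_g_rect_wall}, namely the diagonal reflection of $\text{\rm g-RECT-WALL}$, which carries $C_{\rm RW}(w,h,0)$ to $C_{\rm RW}(h,w,0)$ while fixing the southwest corner and, after the obvious relabelling of the four compass directions, preserves the minimum firing time. Second, the choice of model: since $\text{\rm RECT-WALL}$ has the same boundary condition $(1,1,0,0)$ at $v_{\rm gen}$ for all of its configurations, Lemma~\ref{lemma:relations_between_bs_and_tr}(4) makes ${\rm mft}_{\text{\rm RECT-WALL}}$ model-independent, and part (III) of the proof of Theorem~\ref{theorem:mft_g_rect_wall} already shows that the two models agree for $\text{\rm g-RECT-WALL}$, so the argument is valid in either model. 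With these remarks in place the result is just the conservative-supervariation device of the proof of Theorem~\ref{theorem:non_existence_g_lsp_a_b}, read in the opposite direction: deducing a property of the subvariation from one of the supervariation.
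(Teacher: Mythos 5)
Your proof is correct and is essentially the paper's own argument: both establish that $\text{\rm g-RECT-WALL}$ is a conservative supervariation of $\text{\rm RECT-WALL}$ by matching $w+h+\max\{w,h\}$ from Theorem~\ref{theorem:mft_rect_wall} against Theorem~\ref{theorem:mft_g_rect_wall} with $i=0$, handling $w>h$ by the diagonal symmetry (the paper transports $C_{\rm RW}(w,h,0)$ to $C_{\rm RW}(h,w,w+h)$ and uses case (2)/(3), while you transport it to $C_{\rm RW}(h,w,0)$ and use case (1) --- an immaterial difference). Your added remarks on the boundary-sensitive versus traditional models are a harmless extra precaution not spelled out in the paper's proof.
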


\begin{proof}
Let $\Gamma$, $\Gamma'$ denote $\text{\rm RECT-WALL}$ and 
$\text{\rm g-RECT-WALL}$ respectively.
If $w \leq h$ then 
${\rm mft}_{\Gamma}(C_{\rm RW}(w, h, 0)) = w + 2h$ 
by Theorem \ref{theorem:mft_rect_wall} 
and 
${\rm mft}_{\Gamma'}(C_{\rm RW}(w,$ \\
$ h, 0)) = w + 2h$ 
by Theorem \ref{theorem:mft_g_rect_wall}.
If $w > h$, then 
${\rm mft}_{\Gamma}(C_{\rm RW}(w, h, 0)) = 2w + h$ 
by Theorem \ref{theorem:mft_rect_wall} 
and 
${\rm mft}_{\Gamma'}(C_{\rm RW}(h, w, w + h)) = 2w + h$ 
by Theorem \ref{theorem:mft_g_rect_wall}, and moreover 
we can easily prove that 
${\rm mft}_{\Gamma'}(C_{\rm RW}(w, h, 0)) 
= {\rm mft}_{\Gamma'}(C_{\rm RW}(h, w, w+h))$.
Therefore $\Gamma'$ is a conservative supervariation of 
$\Gamma$.
\end{proof}

\mynewpage

For $\text{\rm RECT-WALL}[a,b]$ and 
$\text{\rm g-RECT-WALL}[a,b]$ we have the following results 
by Theorems 
\ref{theorem:mft_rect_wall_a_b}, 
\ref{theorem:mft_g_rect_wall_a_b} and 
Yamashita et al's idea.

\begin{thm}
\label{theorem:nonexistence_rect_wall_a_b}
Both of $\text{\rm RECT-WALL}[a,b]$ and 
$\text{\rm g-RECT-WALL}[a,b]$ have no 
minimal-time solutions.
\end{thm}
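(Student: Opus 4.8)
The plan is to adapt Yamashita et al's argument (as used in the proof of Theorem~\ref{theorem:yamashita_result_2}) to the ring-like configurations $C_{\rm RW}(w,h)$, and then to deduce the statement for $\text{\rm g-RECT-WALL}[a,b]$ from the one for $\text{\rm RECT-WALL}[a,b]$ by the conservative-supervariation device.

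For $\text{\rm RECT-WALL}[a,b]$ I would argue by contradiction: assume $\tilde A$ is a minimal-time solution, so by Theorem~\ref{theorem:mft_rect_wall_a_b} it fires $C_{\rm RW}(al,bl)$ at exactly time $w+h$ (where $w=al$, $h=bl$). The key structural observation is that $w+h={\rm rad}(C_{\rm RW}(w,h))$, that the unique node at distance $w+h$ from $v_{\rm gen}=p_0$ is the northeast corner $p_{w+h}=p_{-w-h}$, and that this corner is reached along the right wall $p_w,p_{w+1},\dots,p_{w+h}$ — a straight segment of length $h$ traversed clockwise, all of whose interior nodes share the boundary condition $(0,1,0,1)$. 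Moreover, because the counterclockwise front reaches the corner only at time $w+h$, the nodes $p_{t+1}$ (for $t\le w+h-1$) and $p_{t+2}$ (for $t\le w+h-2$) are still quiescent at time $t$. This is exactly the configuration of hypotheses used in Theorem~\ref{theorem:yamashita_result_2}, with the right wall now playing the role of the vertical part of the L-shaped path.

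Concretely, writing $s_i^t={\rm st}(p_i,t,C_{\rm RW}(w,h),\tilde A)$ and choosing $l$ large enough that $bl-3$ exceeds the square of the number of states of $\tilde A$, a pigeonhole argument on the pairs $(s_{t-1}^{t},s_t^{t})$ for $t\in\{w+2,\dots,w+h-1\}$ yields $t_0<t_1$ with $s_{t_0-1}^{t_0}s_{t_0}^{t_0}=s_{t_1-1}^{t_1}s_{t_1}^{t_1}$. Shifting this equality forward step by step, verbatim as in Theorem~\ref{theorem:yamashita_result_2} (each step using only the quiescence of the two cells just ahead of the wavefront and the constancy of the boundary condition along the wall), I would reach $s_{t_2-1}^{t_2}s_{t_2}^{t_2}=s_{w+h-2}^{w+h-1}s_{w+h-1}^{w+h-1}$ with $t_2=w+h-1-(t_1-t_0)$; then, since $s_{t_2+1}^{t_2}={\rm Q}$ and $s_{w+h}^{w+h-1}={\rm Q}$, one further application of the transition function gives $s_{t_2}^{t_2+1}=s_{w+h-1}^{w+h}={\rm F}$, so $p_{t_2}$ fires at time $t_2+1<w+h$ — contradicting minimality. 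The main obstacle, and the only genuine departure from the L-shaped case, is that the far node $p_{w+h}$ is here a corner incident to two wall segments rather than a one-neighbour terminal: I would need to check carefully that its second neighbour $p_{w+h+1}$ never enters the argument and that all cells invoked above indeed remain quiescent at the times used, which holds precisely because the counterclockwise front meets the corner only at time $w+h$.

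For $\text{\rm g-RECT-WALL}[a,b]$ the argument is then immediate: every configuration $C_{\rm RW}(w,h)=C_{\rm RW}(w,h,0)$ of $\text{\rm RECT-WALL}[a,b]$ is also a configuration of $\text{\rm g-RECT-WALL}[a,b]$, and by Theorems~\ref{theorem:mft_rect_wall_a_b} and \ref{theorem:mft_g_rect_wall_a_b} both variations assign it minimum firing time $w+h$; hence $\text{\rm g-RECT-WALL}[a,b]$ is a conservative supervariation of $\text{\rm RECT-WALL}[a,b]$ and inherits the nonexistence of minimal-time solutions. Since the boundary condition of $v_{\rm gen}$ is the same for all configurations of $\text{\rm RECT-WALL}[a,b]$ and the minimum firing times used above hold in both the boundary sensitive and the traditional models, these conclusions are model-independent and no extra case analysis is required.
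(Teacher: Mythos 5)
Your proposal is correct and follows essentially the same route as the paper: reduce $\text{\rm g-RECT-WALL}[a,b]$ to $\text{\rm RECT-WALL}[a,b]$ via the conservative-supervariation device using Theorems \ref{theorem:mft_rect_wall_a_b} and \ref{theorem:mft_g_rect_wall_a_b}, then run the pigeonhole-and-shift argument of Theorem \ref{theorem:yamashita_result_2} along the east wall, using that $s_i^t = {\rm Q}$ for $t < i \leq w+h$. Your extra check that the northeast corner's second neighbour never enters the argument (it only ever contributes a quiescent input to $p_{w+h-1}$) is exactly the point the paper glosses with ``without any change,'' so nothing is missing.
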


\begin{proof}
By Theorems \ref{theorem:mft_rect_wall_a_b}, 
\ref{theorem:mft_g_rect_wall_a_b}, 
we have 
${\rm mft}_{\text{\rm RECT-WALL}[a,b]}(C_{\rm RW}(w, h, 0)) 
=$ \\ ${\rm mft}_{\text{\rm g-RECT-WALL}[a,b]}(C_{\rm RW}(w, h, 0)) 
= w + h$.
Therefore $\text{\rm g-RECT-WALL}[a,b]$ is 
a conservative supervariation of $\text{\rm RECT-WALL}[a,b]$.
Hence it suffices to prove that 
$\text{\rm RECT-WALL}[a,b]$ has no minimal-time solutions.

We assume that $\text{\rm RECT-WALL}[a,b]$ has 
a minimal-time solution $\tilde{A}$ and derive a contradiction.
Let $C = C_{\rm RW}(w, h)$ be a configuration of \\
$\text{\rm RECT-WALL}[a,b]$.
$\tilde{A}$ fires $C$ at time ${\rm mft}_{\text{\rm RECT-WALL}[a,b]}(C) 
= w + h$.
As was in the proof of Theorem \ref{theorem:yamashita_result_2}, 
for any $i$ ($-w-h \leq i \leq w+h$) and any $t$ 
($0 \leq t \leq w + h$), by $s_{i}^{t}$ we denote the state 
${\rm st}(p_{i}, t, C, \tilde{A})$.

For any $t$ such that $0 \leq t \leq w+h$, 
$s_{i}^{t} = {\rm Q}$ for $-w-h \leq i < -t$ and 
for $t < i \leq w + h$.  
Using this and assuming that $w$ is sufficiently large, 
we can derive a contradiction by the proof of 
Theorem \ref{theorem:yamashita_result_2} (for $C_{\rm L}(w, h)$) 
without any change.
\end{proof}

\mynewpage

\section{Variations of FSSP for rectangles and squares}
\label{section:rectangles_squares}

In this section we consider the six variations of 
FSSP for rectangles and squares.
They are 
$\text{\rm RECT}$, $\text{\rm RECT}[a,b]$, $\text{\rm SQ}$ 
($= \text{\rm RECT}[1,1]$) 
(see Table \ref{table:fig027}) 
and their generalized 
variations 
$\text{\rm g-RECT}$, $\text{\rm g-RECT}[a,b]$, 
$\text{\rm g-SQ}$ ($= \text{\rm g-RECT}[1,1]$).

A configuration of $\text{\rm RECT}$ is a rectangle 
$C_{\rm R}(w, h)$ (see Figure \ref{figure:fig001} (c)).
We assume that positions in $C_{\rm R}(w, h)$ are 
$(x, y)$ such that $0 \leq x \leq w$, $0 \leq y \leq h$.
When $(r, s)$ is the position of the general in 
a configuration of the generalized variations, 
we denote the configuration by $C_{\rm R}(w, h, (r, s))$.

Minimal-time solutions of the three variations 
$\text{\rm SQ}$, $\text{\rm RECT}$, $\text{\rm g-RECT}$ 
were obtained in the very early days of the research on FSSP.
The minimum firing time of a square $C_{\rm R}(w, w)$ of $\text{\rm SQ}$ 
is $2w$ and a minimal-time solution was obtained in \cite{Shinahr_1974}.
The minimum firing time of a rectangle 
$C_{\rm R}(w, h)$ of $\text{\rm RECT}$ is 
$w + h + \max\{w, h\}$ and minimal-time solutions are shown in 
\cite{Shinahr_1974,Umeo_2007_rectangle_2,Umeo_2007_rectangle_1}.
Finally the minimum firing time of a rectangle 
$C_{\rm R}(w, h, (r, s))$ of $\text{\rm g-RECT}$ is 
$w + h + \max\{w, h\} - r - s$ 
for $0 \leq r \leq w/2$, $0 \leq s \leq h/2$.
Its minimal-time solution was obtained in \cite{Szwerinski_1982}.

$\text{\rm g-SQ}$ is a very natural generalization of 
$\text{\rm SQ}$.
However $\text{\rm g-SQ}$ has not been 
studied widely.
Umeo and Kubo pointed out 
that at present we do not know whether 
$\text{\rm g-SQ}$ has minimal-time solutions or not 
(\cite{Umeo_2012_ACRI}).
The present author derived a formula for the minimum firing time 
of a configuration $C_{\rm R}(w, w, (r, s))$ of 
$\text{\rm g-SQ}$ (\cite{Kobayashi_TCS_2014}).
If $1 \leq w$, $0 \leq r \leq s \leq w/2$, 
the minimum firing time is 
$2w - \min\{r, s - r\}$.
Note that $\min\{r, s - r\}$ is the minimum distance 
from the position $(r, s)$ to the boundary or 
the diagonals of the square.
The derivation of this formula in \cite{Kobayashi_TCS_2014} 
was very tedious using case analysis of $25$ cases.
To simplify the derivation is itself an interesting problem.

At present, for both of 
$\text{\rm RECT}[a,b]$ ($a \not= b$) and 
$\text{\rm g-RECT}[a,b]$ ($a \not= b$), 
we do not know whether they have minimal-time solutions or not.
We derive a formula for the minimum firing time of 
$\text{\rm RECT}[a,b]$ ($a \not= b$).
However, for $\text{\rm g-RECT}[a,b]$ ($a \not= b$) 
we do not know a formula for it.

\mynewpage

\begin{thm}
\label{theorem:mft_rect_a_b}
For any $a$, $b$, 
\[
{\rm mft}_{\text{\rm RECT}[a,b]}(C_{\rm R}(w, h)) = w + h.
\]
\end{thm}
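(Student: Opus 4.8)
The plan is to prove the two inequalities $w+h \le {\rm mft}_{\text{\rm RECT}[a,b]}(C_{\rm R}(w,h))$ and ${\rm mft}_{\text{\rm RECT}[a,b]}(C_{\rm R}(w,h)) \le w+h$ separately, mirroring the structure of the proof of Theorem~\ref{theorem:mft_rect_wall_a_b}. The lower bound is immediate: every configuration $C_{\rm R}(w,h)$ of $\text{\rm RECT}[a,b]$ has the general at a corner, say $(0,0)$, and the node diagonally opposite, $(w,h)$, is at distance $w+h$ from $v_{\rm gen}$, so ${\rm rad}(C_{\rm R}(w,h)) = w+h \le {\rm mft}(C_{\rm R}(w,h))$. (Here one checks that $w+h$ is indeed the radius: any node $(x,y)$ with $0\le x\le w$, $0\le y\le h$ has $x+y \le w+h$, and the grid distance inside the solid rectangle from $(0,0)$ to $(x,y)$ is exactly $x+y$.) This takes care of one direction with no real work.

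For the upper bound the plan is to build a check-and-broadcast partial solution $A$ whose domain is a single fixed configuration $C_{\rm R}(w_0,h_0)$ and which fires it at time $w_0+h_0$; then ${\rm mft}_{\text{\rm RECT}[a,b]}(C_{\rm R}(w_0,h_0)) \le w_0+h_0$ follows from the remark (after Lemma~\ref{lemma:fact_4}'s statement, i.e.\ the ``domain'' discussion) that ${\rm mft}_\Gamma(C) \le {\rm ft}(C,A)$ for $C$ in the domain of a partial solution $A$. The signal design is the key step. I would send a signal ${\rm R}_0$ from $v_{\rm gen}=(0,0)$ east along the bottom edge; it checks the statement $w=w_0$ by counting $w_0$ steps and verifying that the node it reaches has the boundary condition of the southeast corner (and that no earlier node did). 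Symmetrically a signal ${\rm R}_1$ goes north along the left edge checking $h=h_0$. If a signal confirms its statement it broadcasts the corresponding message (``$w=w_0$'' or ``$h=h_0$'') through the rectangle. Because the configuration is constrained by $w/h = w_0/h_0 = a/b$, \emph{either} of the two statements $w=w_0$, $h=h_0$ already forces $C_{\rm R}(w,h)=C_{\rm R}(w_0,h_0)$; so I would have a node fire at time $w_0+h_0$ as soon as it has received \emph{at least one} of the two messages, exactly as in the proof of Theorem~\ref{theorem:mft_rect_wall_a_b}.

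The step I expect to need the most care is the timing estimate: I must verify that in the solid rectangle $C_{\rm R}(w_0,h_0)$ every node receives at least one of the two messages by time $w_0+h_0$. The message ``$w=w_0$'' is generated at the southeast corner $(w_0,0)$ at time $w_0$; from there, propagating at speed~$1$ inside the rectangle, it reaches a node $(x,y)$ at time $w_0 + (w_0-x) + y$, which is at most $w_0+h_0$ precisely when $x \ge w_0 - (h_0 - y)$, i.e.\ in a lower-right triangular region. Symmetrically the message ``$h=h_0$'', generated at $(0,h_0)$ at time $h_0$, reaches $(x,y)$ by time $w_0+h_0$ in an upper-left triangular region. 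I would check that these two regions cover the whole rectangle $\{0\le x\le w_0,\,0\le y\le h_0\}$: the uncovered set would require simultaneously $x < w_0-(h_0-y)$ and $y < h_0-(w_0-x)$, which rearrange to $x+h_0 < w_0+y$ and $x+h_0 > w_0+y$, a contradiction, so the covering is complete (with the diagonal $x+h_0=w_0+y$ being the boundary between them). Hence all nodes fire at $w_0+h_0$. The converse — if some node of an arbitrary $C_{\rm R}(w,h)$ fires then $C_{\rm R}(w,h)=C_{\rm R}(w_0,h_0)$ — is immediate since firing requires one of the two statements, and either forces equality given the fixed ratio $a/b$. I do not expect to need to treat the boundary-sensitive versus traditional model distinction here, since $\text{\rm RECT}[a,b]$ has the general at a corner in every configuration and hence the boundary condition of $v_{\rm gen}$ is constant, so Lemma~\ref{lemma:relations_between_bs_and_tr}(4) applies; but if one wants to be careful, the radius is $w+h = {\rm mft}$, which equals the firing time, so the partial solution above already works in the traditional model after a one-step boundary-condition check as in Lemma~\ref{lemma:from_bs_to_tr}-style arguments, and in any case (4) of Lemma~\ref{lemma:relations_between_bs_and_tr} settles it.
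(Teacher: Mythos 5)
Your proposal is correct and follows essentially the same route as the paper: the lower bound via ${\rm rad}(C_{\rm R}(w,h)) = w+h$, and the upper bound via a check-and-broadcast partial solution with the two messages ``$w=w_0$'' (generated at $(w_0,0)$ at time $w_0$) and ``$h=h_0$'' (generated at $(0,h_0)$ at time $h_0$), firing on receipt of at least one message, with the same covering/timing computation (the paper writes it as $\min\{2w_0-\delta,\,2h_0+\delta\}\leq w_0+h_0$ for $\delta=i-j$). Your closing remark that Lemma~\ref{lemma:relations_between_bs_and_tr}(4) disposes of the two-model issue is the right justification (note that Lemma~\ref{lemma:from_bs_to_tr} itself would not apply here since ${\rm ft}(C,A)={\rm rad}(C)$); the paper simply leaves this implicit.
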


\begin{proof}
The lower bound is obvious because 
$w + h = {\rm rad}(C(w, h)) \leq $ \\
${\rm mft}(C(w, h))$.
For the upper bound we construct 
a check and broadcast partial solution $A$ 
with domain $\{ C(w_{0}, h_{0}) \}$ that fires 
$C(w_{0}, h_{0})$ at time $w_{0} + h_{0}$.
We explain only the main idea.
Suppose that copies of $A$ are placed in 
a configuration $C(w, h)$ of $\text{\rm RECT}[a,b]$.

$A$ checks the statements $w = w_{0}$, $h = h_{0}$.
If they are correct then 
messages ``$w = w_{0}$'', ``$h = h_{0}$'' 
are generated.
A node fires at time $w_{0} + h_{0}$ if it has received at least one 
of the two messages before or at time $w_{0} + h_{0}$.

If $w = w_{0}$ is true the corresponding message is 
generated at $(w_{0}, 0)$ at time $w_{0}$.
If $h = h_{0}$ is true the corresponding message is 
generated at $(0, h_{0})$ at time $h_{0}$.
Therefore, if $C(w, h) = C(w_{0}, h_{0})$ then 
a node $(i, j)$ in $C(w, h)$ receives at least one of 
the two messages at time 
$\min \{ w_{0} + {\rm d}((w_{0}, 0), (i, j)), 
h_{0} + {\rm d}((0, h_{0}), (i, j)) \}$ $=$ 
$\min \{ 2w_{0} - i + j, 2h_{0} + i - j \}$ $=$ 
$\min \{ 2w_{0} - \delta, 2h_{0} + \delta \}$ 
$\leq w_{0} + h_{0}$.
Here $\delta$ denotes $i - j$ and it ranges between 
$-h_{0}$ and $w_{0}$.
Therefore any node in $C(w, h)$ receives at least one of the two 
messages before or at time $w_{0} + h_{0}$.
\end{proof}

\mynewpage

\begin{thm}
\label{theorem:from_g_rect_a_b_to_rect_a_b}
For any $a, b$, 
if $\text{\rm g-RECT}[a,b]$ has a minimal-time solution 
then $\text{\rm RECT}[a,b]$ also has one.
\end{thm}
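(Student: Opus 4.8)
The plan is to show that $\Gamma' := \text{\rm g-RECT}[a,b]$ is a \emph{conservative supervariation} of $\Gamma := \text{\rm RECT}[a,b]$, in the sense introduced before Theorem~\ref{theorem:non_existence_g_lsp_a_b}; the theorem then follows immediately, since any minimal-time solution of a conservative supervariation is a minimal-time solution of the smaller variation. Every configuration $C_{\rm R}(w,h)$ of $\Gamma$ is the configuration $C_{\rm R}(w,h,(0,0))$ of $\Gamma'$, so $\Gamma'$ is a supervariation of $\Gamma$ and hence ${\rm mft}_{\Gamma}(C_{\rm R}(w,h)) \leq {\rm mft}_{\Gamma'}(C_{\rm R}(w,h,(0,0)))$; also ${\rm rad}(C_{\rm R}(w,h,(0,0))) = w+h$. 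Since ${\rm mft}_{\Gamma}(C_{\rm R}(w,h)) = w+h$ by Theorem~\ref{theorem:mft_rect_a_b}, it remains only to prove the matching upper bound ${\rm mft}_{\Gamma'}(C_{\rm R}(w_0,h_0,(0,0))) \leq w_0 + h_0$ for each fixed $w_0, h_0$, and this has to be done for both the boundary sensitive and the traditional model, because here ${\rm rad}(C) = {\rm mft}_{\Gamma',{\rm bs}}(C)$ and so Lemma~\ref{lemma:from_bs_to_tr} cannot be used.

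For the boundary sensitive model I would take the check and broadcast partial solution constructed in the proof of Theorem~\ref{theorem:mft_rect_a_b} and prepend a single test: at time $0$ the general inspects ${\rm bc}_{C}(v_{\rm gen})$ and launches the two shape-checking signals only when this equals $(1,1,0,0)$, which in a rectangle forces $v_{\rm gen} = (0,0)$. Then, exactly as there, the message ``$w = w_0$'' is emitted at $(w_0,0)$ at time $w_0$ and ``$h = h_0$'' at $(0,h_0)$ at time $h_0$; a node $(i,j)$ receives one of these by time $\min\{2w_0 - (i-j),\, 2h_0 + (i-j)\} \leq w_0 + h_0$ (the opposite corner $(w_0,h_0)$ exactly at $w_0+h_0$), so the partial solution fires $C_{\rm R}(w_0,h_0,(0,0))$ at $w_0+h_0$; soundness follows from the boundary-condition test, the length tests, and the fixed ratio $w/h = a/b$.

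For the traditional model I would imitate part (V) of the proof of Theorem~\ref{theorem:mft_g_rect_wall_a_b} and combine two firing rules. The first rule hard-codes the southwest-corner general state, runs the plain (non-testing) check and broadcast solution of Theorem~\ref{theorem:mft_rect_a_b}, and in addition lets the general verify its boundary condition and broadcast ``bc correct'' at time $1$; a node fires under this rule at time $t$ only if that solution fires it at $t$ and ``bc correct'' has reached it by $t$. This fires every node other than $(w_0,h_0)$ at time $w_0+h_0$, while ``bc correct'' reaches $(w_0,h_0)$ only at time $w_0+h_0+1$. The second rule sends one counting signal from $v_{\rm gen}$ that travels $w_0$ steps east, checking for the southeast-corner boundary condition, and then $h_0$ steps north, checking for the northeast-corner boundary condition; this signal can complete its route only when $v_{\rm gen} = (0,0)$, $w = w_0$ and $h = h_0$, and it then reaches $(w_0,h_0)$ precisely at time $w_0+h_0$, where that node is made to fire. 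A node that fires at the earlier of its two rule-times then yields a partial solution of $\Gamma'$ of the traditional model with domain $\{C_{\rm R}(w_0,h_0,(0,0))\}$ and firing time $w_0+h_0$.

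Hence ${\rm mft}_{\Gamma'}(C_{\rm R}(w,h,(0,0))) = w+h = {\rm mft}_{\Gamma}(C_{\rm R}(w,h))$ in both models, $\Gamma'$ is a conservative supervariation of $\Gamma$, and the theorem follows. I expect the main obstacle to be the traditional-model upper bound for the single node $(w_0,h_0)$ at maximal distance $w_0+h_0$ from the general: the plain broadcast is one step too slow there, and the second rule's forward-moving signal must certify that the general really sits at the southwest corner using only the information it sees on its way --- in particular it must reject configurations of $\Gamma'$ whose general lies on the bottom edge with $w = w_0 + r$, $r \geq 1$, which a straight eastbound test by itself would not do. Verifying that the two-rule construction remains sound on all such ``wrong'' configurations is the other point that needs care, though it is routine.
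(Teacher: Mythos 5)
Your proposal is correct and follows essentially the same route as the paper: establish that $\text{\rm g-RECT}[a,b]$ is a conservative supervariation of $\text{\rm RECT}[a,b]$ by proving ${\rm mft}(C_{\rm R}(w,h,(0,0))) = w+h$, handle the boundary sensitive model by prepending the $(1,1,0,0)$ boundary-condition test to the partial solution of Theorem~\ref{theorem:mft_rect_a_b}, and adapt the two-rule idea of part (V) of Theorem~\ref{theorem:mft_g_rect_wall_a_b} for the traditional model. The paper omits the traditional-model details; your single east-then-north counting signal for the far corner $(w_0,h_0)$ is a sound instantiation, since the fixed ratio $w/h=a/b$ makes the combined southeast- and northeast-corner checks reject every wrong configuration, including the bottom-edge case you flag.
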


\begin{proof}
At present we are unable to determine the value 
${\rm mft}_{\Gamma'}(C_{\rm R}(w, h, (i, j)))$ 
for configurations of 
$\Gamma' = \text{\rm g-RECT}[a,b]$.
However at least we can show 
${\rm mft}_{\Gamma'}(C_{\rm R}(w$, $h, (0, 0))) = w + h$ 
as follows.
This implies that $\Gamma'$ is a conservative 
supervariation of $\Gamma = \text{\rm RECT}[a,b]$ 
and we obtain the statement of the theorem.
It is sufficient to prove the upper bound 
${\rm mft}_{\Gamma'}(C(w, h, (0, 0))) \leq w + h$.

Let $A$ be the partial solution (of the boundary sensitive 
mode) of $\Gamma$ constructed in the proof of 
Theorem \ref{theorem:mft_rect_a_b}.
It has the domain $\{ C(w_{0}, h_{0}) \}$ and fires 
$C(w_{0}, h_{0})$ at time $w_{0} + h_{0}$.
We modify $A$ so that at time $0$ it checks that 
the boundary condition of $v_{\rm gen}$ is 
$(1, 1, 0, 0)$ (the boundary condition of the southwest corner) 
and immediately stops if the check fails.
Then we obtain a partial solution $A'$ of 
the boundary sensitive model of $\Gamma'$ that has 
the domain $\{ C(w_{0}, h_{0}, (0, 0)) \}$ 
and that fires $C(w_{0}, h_{0}, (0, 0))$ at time 
$w_{0} + h_{0}$.
Thus we have the upper bound ${\rm mft}(C(w_{0}, h_{0}, (0, 0))) 
\leq w_{0} + h_{0}$ for the boundary sensitive model.

We can change this $A'$ to a partial solution $A''$ of 
the traditional model of $\Gamma'$ with the same domain 
and the same firing time.
The idea for this is similar to that used 
in the proof (V) of Theorem \ref{theorem:mft_g_rect_wall_a_b} 
and we omit it.
\end{proof}

\mynewpage

We can reduce the problem to construct a minimal-time solution 
of $\text{\rm RECT}[a,b]$ to the same problem 
for a variation of L-shaped paths.

For integers $a$ ($\geq 1$), $b$ ($\geq 1$), 
$c$ ($\geq 0$), $d$ ($\geq 0$), 
we define $\text{\rm LSP-C}[a, b; c, d]$ to be the variation 
such that configurations are 
generalized L-shaped paths 
$C_{\rm L}(w, h, w)$ such that 
$w = al + c$, $h = bl + d$ for some 
$l$ ($\geq 0$).
Note that the general $p_{w}$ is 
at the southeast corner node.
``${\rm C}$'' in ``$\text{\rm LSP-C}$'' means that 
``the general is at the corner.''
In Figure \ref{figure:fig032} we show a configuration 
$C_{\rm L}(13, 10, 13)$ of 
$\text{\rm LSP-C}[3, 2; 1, 2]$.
In this case, $l = 4$, $w = 3 \cdot 4 + 1 = 13$, 
$h = 2 \cdot 4 + 2 = 10$.

\begin{figure}[htb]
\begin{center}
\includegraphics[scale=1.0]{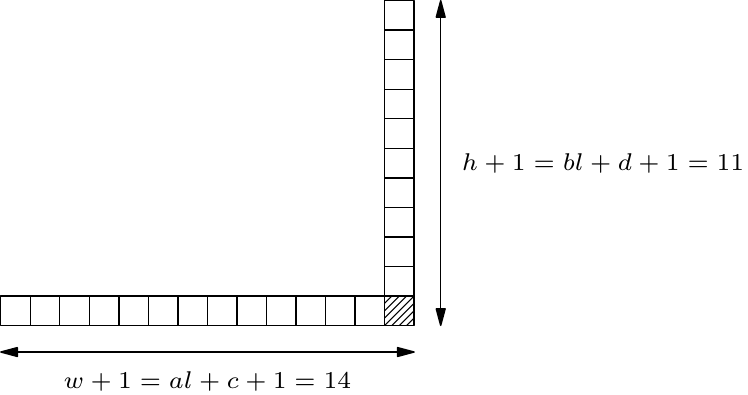}
\end{center}
\caption{A configuration $C_{\rm L}(13, 10, 13)$ of 
$\text{\rm LSP-C}[3, 2; 1, 2]$.}
\label{figure:fig032}
\end{figure}

\mynewpage

\begin{thm}
\label{theorem:mft_lsp_c_a_b_c_d}
For any $a, b, c, d$, 
\[
{\rm mft}_{\text{\rm LSP-C}[a,b;c,d]}(C_{\rm L}(w, h, w)) 
= w + h.
\]
\end{thm}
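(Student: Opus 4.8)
The plan is to establish the two inequalities
${\rm mft}_{\text{\rm LSP-C}[a,b;c,d]}(C_{\rm L}(w,h,w)) \le w+h$ and
${\rm mft}_{\text{\rm LSP-C}[a,b;c,d]}(C_{\rm L}(w,h,w)) \ge w+h$ separately.
A preliminary remark: in every configuration $C_{\rm L}(w,h,w)$ of this variation with $w,h\ge 1$ the general $p_w$ sits at the corner and has the fixed boundary condition $(0,1,1,0)$, so by part~(4) of Lemma~\ref{lemma:relations_between_bs_and_tr} the boundary sensitive and the traditional models give the same minimum firing time, and it suffices to work in the boundary sensitive model. The one arithmetic fact used repeatedly is that, placing the corner at the origin, ${\rm d}_C(p_0,p_k)=k$ and ${\rm d}_C(p_{w+h},p_k)=w+h-k$ for all $k$ with $0\le k\le w+h$; equivalently, the general needs time $w+k$ to make $p_k$ aware of the west terminal $p_0$ and time $w+2h-k$ to make it aware of the north terminal $p_{w+h}$.

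For the upper bound I would build a check and broadcast partial solution $A$ with domain $\{C_{\rm L}(w_0,h_0,w_0)\}$, in the spirit of the construction in the proof of Theorem~\ref{theorem:mft_lsp_a_b}. Two signals leave the general at time $0$: one runs west along the horizontal arm and, if after $w_0$ steps it reaches a node with boundary condition $(1,0,0,0)$, it has verified $w=w_0$ and broadcasts ``$w=w_0$'' from $p_0$ at time $w_0$; the other runs north along the vertical arm and, if after $h_0$ steps it meets a node with boundary condition $(0,0,0,1)$, it has verified $h=h_0$ and broadcasts ``$h=h_0$'' from $p_{w_0+h_0}$ at time $h_0$. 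A node fires at time $w_0+h_0$ provided it has received at least one of the two messages by then. If the configuration is $C_{\rm L}(w_0,h_0,w_0)$, both messages are generated, node $p_k$ receives ``$w=w_0$'' at time $w_0+k$ and ``$h=h_0$'' at time $w_0+2h_0-k$, hence receives one of them by time $\min\{w_0+k,\,w_0+2h_0-k\}\le w_0+h_0$, with equality at $k=h_0$. Conversely, if some node fires then one of $w=w_0$, $h=h_0$ holds; since $a\ge 1$ and $b\ge 1$, the value of $w$ (or of $h$) determines $l$ uniquely and hence the whole configuration, so it must be $C_{\rm L}(w_0,h_0,w_0)$. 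Thus $A$ is a partial solution with ${\rm ft}(C_{\rm L}(w_0,h_0,w_0),A)=w_0+h_0$, and ${\rm mft}_\Gamma(C)\le{\rm ft}(C,A)$ gives the upper bound.

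For the lower bound I would apply Lemma~\ref{lemma:fact_4} at time $t=w_0+h_0-1$. The key observation is that, among all nodes of $C=C_{\rm L}(w_0,h_0,w_0)$, exactly one is unaware of both terminals at time $t$: a node $p_k$ is aware of $p_0$ once $t\ge w_0+k$ (i.e. $k\le h_0-1$) and of $p_{w_0+h_0}$ once $t\ge w_0+2h_0-k$ (i.e. $k\ge h_0+1$), so $p_{h_0}$ is the only exception. Choose $l_1$ large enough that $\max\{al_1+c,\,bl_1+d\}>t$, set $C'=C_{\rm L}(al_1+c,\,bl_1+d,\,al_1+c)$ (a configuration of the variation), and let $v'$ be the node of $C'$ occupying the same position relative to its corner as $p_{h_0}$ occupies relative to the corner of $C$, namely the node on the same arm at distance $|h_0-w_0|$ from the corner. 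I claim ${\rm ai}(p_{h_0},t,C)={\rm ai}(v',t,C')$: the two nodes have the same relative position, and by the distance computations each sees at time $t$ only the corner (boundary condition $(0,1,1,0)$) together with a bounded initial segment of each of the two arms, and since both arms of $C'$ are strictly longer than those of $C$, these visible segments, their boundary conditions, and the adjacent ``boxes'' are literally the same in both configurations. As ${\rm rad}(C')=\max\{al_1+c,\,bl_1+d\}>t$, Lemma~\ref{lemma:fact_4} gives ${\rm mft}(C)>t$, i.e. ${\rm mft}(C)\ge w_0+h_0$.

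The routine parts are the signal bookkeeping in the upper bound and the distance arithmetic. The main obstacle, and the step I would write out most carefully, is the identity ${\rm ai}(p_{h_0},t,C)={\rm ai}(v',t,C')$: one must check that at time $w_0+h_0-1$ the node $p_{h_0}$ sees precisely the corner and the interior arm nodes in the stated range but neither terminal, and that passing to the larger $C'$ leaves this visible part unchanged — the point being exactly that $p_{h_0}$ sits on the ``horizon'' of both terminals, which is what forces the value $w_0+h_0$. A few degenerate configurations ($w_0=0$ or $h_0=0$, possible only when $c=0$ or $d=0$ and $l=0$) should be handled separately; there $w_0+h_0={\rm rad}(C)$ and both bounds are immediate (for the lower bound one uses Lemma~\ref{lemma:fact_4} with a quiescent available information).
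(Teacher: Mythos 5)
Your proposal is correct and takes essentially the same route as the paper's proof: the same check-and-broadcast partial solution with the messages ``$w=w_0$'' and ``$h=h_0$'' generated at the two terminals at times $w_0$ and $h_0$ and a node firing at $w_0+h_0$ upon receipt of at least one of them, and the same lower bound via Lemma~\ref{lemma:fact_4} applied to the node $p_{h_0}$, which at time $w_0+h_0-1$ has learned neither terminal. The paper only sketches these steps; your write-up supplies the details (the explicit ${\rm ai}$ computation, the handling of the two models, and the degenerate $l=0$ cases) without introducing a different idea.
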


\begin{proof}
We explain only the main idea for constructing 
a partial solution $A$ of 
the traditional model of 
$\Gamma = \text{\rm LSP-C}[a,b;c,d]$ 
for one fixed configuration 
$C(w_{0}, h_{0}, w_{0})$.
Suppose that copies of $A$ 
are placed on a configuration $C(w, h, w)$ of $\Gamma$.

$A$ checks the two conditions $w = w_{0}$, $h = h_{0}$ 
and broadcasts messages ``$w = w_{0}$'', ``$h = h_{0}$'' 
if conditions are correct.
A node fires at time $w_{0} + h_{0}$ if it has received 
at least one of the two messages before or at 
time $w_{0} + h_{0}$.

If $C(w, h, w) = C(w_{0}, h_{0}, w_{0})$, 
the message ``$w = w_{0}$'' is generated at $p_{0}$ at time 
$w_{0}$ and the message ``$h = h_{0}$'' is generated at 
$p_{w_{0} + h_{0}}$ at time $h_{0}$.
Therefore a node $p_{i}$ receives at least one of 
the two messages before or at time 
$\min \{ w_{0} + i, h_{0} + (w_{0} + h_{0} - i) \} 
\leq w_{0} + h_{0}$.
By this we have the upper bound 
${\rm mft}(C(w_{0}, h_{0}, w_{0})) \leq w_{0} + h_{0}$.
Moreover, at time $w_{0} + h_{0} - 1$ the node $p_{h_{0}}$ 
receives none of the two messages.
Using this and Lemma \ref{lemma:fact_4} 
we can prove the lower bound $w_{0} + h_{0} \leq 
{\rm mft}(C(w_{0}, h_{0}, w_{0}))$.
\end{proof}

\mynewpage

\begin{thm}
\label{theorem:from_lsp_c_a_b_c_d_to_rect_a_b}
Let $a$, $b$ be any numbers.
If $\text{\rm LSP-C}[a,b;c,d]$ has a minimal-time solution 
for any $c, d$ then 
$\text{\rm RECT}[a,b]$ has a minimal-time solution.
\end{thm}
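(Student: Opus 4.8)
The plan is to imitate, in two dimensions, the reductions $\text{\rm g-RECT-WALL}[a,b]\to\text{\rm RECT-WALL}[a,b]$ and $\text{\rm g-RECT}[a,b]\to\text{\rm RECT}[a,b]$: given for every $c,d$ a minimal-time solution $A^{c,d}$ of $\text{\rm LSP-C}[a,b;c,d]$, I cut each configuration of $\text{\rm RECT}[a,b]$ into L-shaped paths, run a copy of the appropriate $A^{c,d}$ on each piece, and let a cell fire once every piece through it has fired. Because the general of $\text{\rm RECT}[a,b]$ always occupies the southwest corner, its boundary condition is the same for all configurations, so by part~{\rm (4)} of Lemma~\ref{lemma:relations_between_bs_and_tr} it suffices to build a minimal-time solution in the boundary sensitive model; by Theorem~\ref{theorem:mft_rect_a_b} the target firing time for $C_{\rm R}(al,bl)$ is $(a+b)l$.

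Put the origin at $v_{\rm gen}$. For integers $k\ge 0$ and $0\le p<a$, $0\le q<b$ let $P_{k,p,q}$ be the L-shaped path with corner $(ka+p,\,kb+q)$ whose arms run east to $(al,\,kb+q)$ and north to $(ka+p,\,bl)$, and keep those $P_{k,p,q}$ whose corner lies in $C_{\rm R}(al,bl)$. The arms of $P_{k,p,q}$ have lengths $a(l-k)-p$ and $b(l-k)-q$, so as an abstract L-shaped path with the general at the corner $P_{k,p,q}$ is a configuration of $\text{\rm LSP-C}[a,b;\,a-p,\,b-q]$ with size parameter $l-k-1$; only the $ab$ variations with $(c,d)\in\{1,\dots,a\}\times\{1,\dots,b\}$ occur. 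Its corner is at distance ${\rm d}_{C_{\rm R}}(v_{\rm gen},\cdot)=k(a+b)+p+q$ from $v_{\rm gen}$ while $|P_{k,p,q}|=(a+b)(l-k)-p-q$, so if the copy of $A^{a-p,b-q}$ on $P_{k,p,q}$ is started the instant the activation wave of $v_{\rm gen}$ reaches that corner, $P_{k,p,q}$ fires at time $k(a+b)+p+q+(a+b)(l-k)-p-q=(a+b)l$, the same for all $k,p,q$. One checks that every cell $(i,j)$ lies on at least one $P_{k,p,q}$ — on the north arm of $P_{\lfloor i/a\rfloor,\,i\bmod a,\,j\bmod b}$ when $\lfloor j/b\rfloor\ge\lfloor i/a\rfloor$ and on the east arm of $P_{\lfloor j/b\rfloor,\,i\bmod a,\,j\bmod b}$ when $\lfloor i/a\rfloor\ge\lfloor j/b\rfloor$ — and on at most $a+b$ of them, the only cell lying on no nondegenerate path being the far corner $(al,bl)$.

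The solution $A$ of $\text{\rm RECT}[a,b]$ is the product automaton that carries, on at most $a+b$ tracks plus a bounded number of signal tracks, a copy of the relevant $A^{c,d}$ for each path through the cell, the cell acting as the corresponding cell of that L-shaped path and reading, on a given track, only the one or two neighbours lying on the same path; a cell fires when all its tracks have fired, and the far corner $(al,bl)$ (recognized from its boundary condition) fires as soon as it is activated, which is at $(a+b)l$. To run this, a cell must recognize locally, for each path $P$ through it, the type $(c,d)$ of $P$, which neighbours are its $P$-neighbours, whether it is the corner or an arm end of $P$, and when the activation wave reaches the corner of $P$; but the residues $i\bmod a$, $j\bmod b$ (sent up from the two edges), the corner markers and the per-path ``start'' signals (carried by the staircase signal that visits the cells $(ka,kb)$ and its local floods) all reach a cell exactly at its own activation time, and arm ends are seen from the boundary condition. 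On each track the cell enters the general state of $A^{c,d}$ on activation if it is that path's corner, and otherwise becomes nonquiescent when that track's simulation reaches it. On a genuine $C_{\rm R}(al,bl)$ every $P_{k,p,q}$ is a genuine configuration of its $\text{\rm LSP-C}$ variation (each $A^{c,d}$ being a solution of it, Theorem~\ref{theorem:mft_lsp_c_a_b_c_d}), so every track fires at $(a+b)l$ and every cell fires at $(a+b)l$; since this holds for all configurations, $A$ is a minimal-time solution, and the general's position being fixed, the same automaton works in the traditional model.

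The hard part I expect is the detailed cellular-automaton bookkeeping of the third paragraph: making each cell decode, purely from the auxiliary signals, its role in each of the up to $a+b$ paths through it — in particular telling a north-arm cell from an east-arm cell and locating the relevant corner without knowing the unbounded indices $\lfloor i/a\rfloor,\lfloor j/b\rfloor$ themselves — and verifying that the copy of $A^{c,d}$ simulated on a piece genuinely reproduces the behaviour of $A^{c,d}$ on that abstract L-shaped path. The geometric facts about $\{P_{k,p,q}\}$ (bounded-multiplicity covering, finitely many $\text{\rm LSP-C}$ types, and ${\rm d}_{C_{\rm R}}(v_{\rm gen},\text{corner})+|P_{k,p,q}|=(a+b)l$) and the model conversion are routine once the family is written down, being of the same flavour as — though heavier than — the arguments in the proofs of Theorems~\ref{theorem:mft_g_rect_wall_a_b} and \ref{theorem:from_g_rect_a_b_to_rect_a_b}.
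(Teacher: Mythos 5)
Your proposal is correct and follows essentially the same route as the paper: Shinahr's covering of $C_{\rm R}(al,bl)$ by L-shaped pieces with corners on an $(a,b)$-staircase, each piece a configuration of $\text{\rm LSP-C}[a,b;c,d]$ for one of finitely many $(c,d)$, activated at its corner's distance from $v_{\rm gen}$ so that activation time plus path length is $(a+b)l$, and a product automaton that fires when all pieces through a cell fire. Your version is in fact more explicit than the paper's, which only illustrates the covering on the example $C_{\rm R}(10,6)$ of $\text{\rm RECT}[5,3]$ and likewise leaves the cellular-automaton bookkeeping implicit.
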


\begin{proof}
We use the idea used in \cite{Shinahr_1974} 
to construct a minimal-time solution of $\text{\rm RECT}$.
The idea is to cover a configuration $C_{\rm R}(w, h)$ 
of $\Gamma = \text{\rm RECT}[a,b]$ 
with 
configurations of $\Gamma' = \text{\rm LSP-C}[a,b;c,d]$ 
(with various values of $c, d$) 
that are modified so that the two directions the west and 
the east are exchanged (and hence a configuration 
has a southwest corner instead of a southeast corner).

We explain the idea using 
the configuration $C_{\rm R}(10, 6)$ of 
$\text{\rm RECT}[5, 3]$.
We cover this configuration with $11$ 
configurations of $\text{\rm LSP-C}[5, 3; c, d]$ 
as shown in Figure \ref{figure:fig054}.

\begin{figure}[htb]
\begin{center}
\includegraphics[scale=1.0]{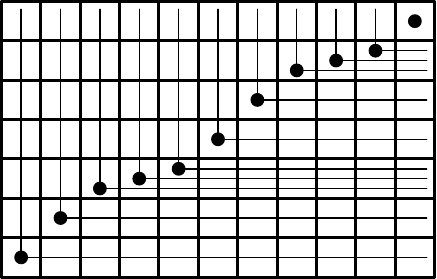}
\end{center}
\caption{Eleven configurations of 
$\text{\rm LSP-C}[5, 3; c, d]$ that cover 
the configuration $C_{\rm R}(10, 6)$ 
of $\text{\rm RECT}[5, 3]$.}
\label{figure:fig054}
\end{figure}

In this figure an L-shaped line represents 
a configuration of $\Gamma'$ 
and the black circle on it represents the position 
of the general.
We use the supposed minimal-time solution $A$ of 
$\text{\rm LSP-C}[a,b;c,d]$ to fire the nodes 
in the configuration.
For $A$ too, we modify it so that the two directions 
the west and the east are exchanged.

In Table \ref{table:fig055} we show basic data of 
the $11$ configurations of $\Gamma'$.
The column $v_{\rm gen}$ represents the position of the general, 
the column ``activation time'' represents the time when 
the firing of the configuration is triggered, and 
the column ``firing time'' represents the actual time 
when nodes in the configuration fire.
If $v_{\rm gen}$ is at $(i, j)$ then the activation time 
is $i + j$, the configuration is 
$C_{\rm L}(w - i, h - j, w - i)$, 
and hence all nodes in it fire at 
$(i + j) + \{ (w - i) + (h - j) \} = w + h$, the minimum 
firing time of $C_{\rm R}(w, h)$ of $\text{\rm RECT}[a,b]$.

\begin{table}[htb]
\begin{center}
\includegraphics[scale=1.0]{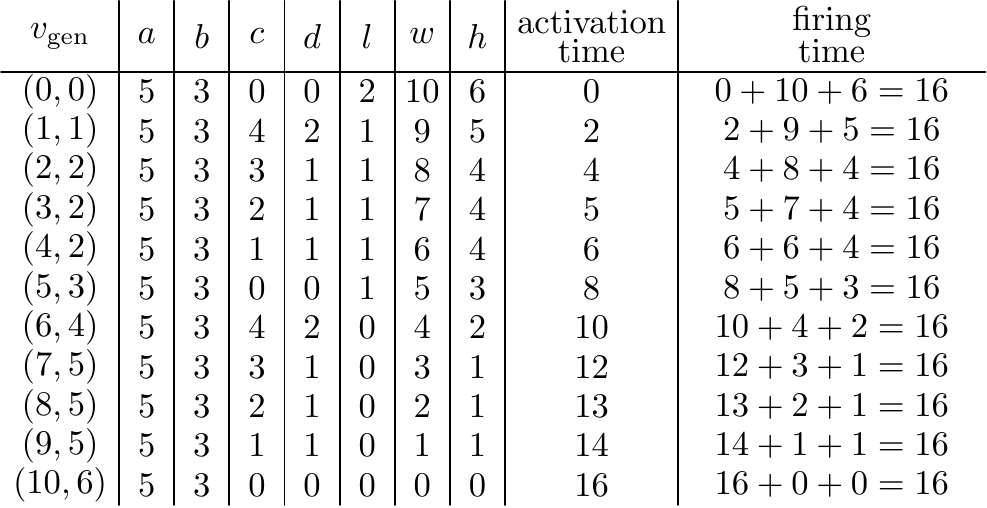}
\end{center}
\caption{The basic data on the eleven configurations 
shown in Figure \ref{figure:fig054}.}
\label{table:fig055}
\end{table}

\end{proof}

\mynewpage

\section{Some applications of Yamashita et al's idea}
\label{section:applications}

We have proved nonexistence of minimal-time solutions 
for some variations using Yamashita et al's idea.
For these results the application of the idea was 
straightforward and needed no modifications.
In this section we show two examples of application of the idea 
where slight modifications of the idea are necessary.

\begin{ex}
\label{example:ex001}
\end{ex}

Each of the variations for which we proved nonexistence 
of minimal-time solutions using Yamashita et al's idea 
contained an infinitely many configurations $C$ 
such that ${\rm rad}(C) = {\rm mft}(C)$.
Hence, these results might give an impression 
that existence of such configurations 
is necessary for Yamashita et al's idea to work.
In this example, we disprove this impression 
by proving nonexistence for a variation 
$\Gamma_{\rm ex1}$ 
in which ${\rm rad}(C) < {\rm mft}(C)$ for any 
configuration $C$.
This variation is very artificial and 
we include it only for the above mentioned purpose.

A configuration of 
$\Gamma_{\rm ex1}$ 
is an L-shaped path $C_{\rm L}(w, h)$ for which 
there is an integer $r$ ($\geq 1$) such that 
$w = 2^{r^{2}}$, 
$h = 2^{r^{2}} - r = w - \sqrt{\log w}$.
For this variation we have 
${\rm mft}(C(w, h)) = 2w > w + h = {\rm rad}(C(w, h))$.

First we prove ${\rm mft}(C(w, h)) = 2w$.
The lower bound $2w \leq {\rm mft}(C(w, h))$ can be proved 
by Lemma \ref{lemma:fact_4} using the intuitive idea 
that at time $2w - 1$ the node $p_{0}$ cannot know 
the values of $w, h$.

The upper bound ${\rm mft}(C(w, h)) \leq 2w$ can be proved 
by constructing a partial solution $A$ with the domain 
$\{ C(w_{0}, h_{0}) \}$ that fires 
$C(w_{0}, h_{0})$ at time $2w_{0}$ for each fixed 
configuration $C(w_{0}, h_{0})$ of $\Gamma_{\rm ex1}$.
$A$ uses the message ``$w = w_{0}$'' that is generated at 
$p_{w_{0}}$ at time $w_{0}$. 
We omit the detail of the construction.
(Note that in this variation too, one of the two values $w, h$ 
in $C(w, h)$ completely determines the value of the other.)

Nonexistence of minimal-time solutions of $\Gamma_{\rm ex1}$ 
can be proved by modifying the proof of 
Theorem \ref{theorem:yamashita_result_2} as follows.

If $w$ is sufficiently large, 
there exist times $t_{0}$, $t_{1}$ such that 
$w + 2 + 2r \leq t_{0} < t_{1} \leq w + h - 1$ and 
$s_{t_{0}-2r-1}^{t_{0}} \ldots s_{t_{0}}^{t_{0}} 
= s_{t_{1}-2r-1}^{t_{1}} \ldots s_{t_{1}}^{t_{1}}$.
This follows from the following facts.
The number of time $t$ such that 
$w + 2 + 2r \leq t \leq w + h - 1$ is 
$2^{r^{2}} - 3r - 2$.
The number of sequences $s_{-2r-1} \ldots s_{0}$ of length $2r + 2$ of 
states of the finite automaton $A$ is 
$u^{2r + 2}$.
Here $A$ is the minimal-time solution of 
$\Gamma_{\rm ex1}$ which we assume to 
exist and $u$ is the number of states of $A$.
However we have 
\[
2^{r^{2}} - 3r - 2 > u^{2r + 2}
\]
for all sufficiently large $r$.

By the same inference as that used in the proof of 
Theorem \ref{theorem:yamashita_result_2} 
(Figure \ref{figure:fig005} (a)), 
we have 
$s_{t_{2}-2r-1}^{t_{2}} \ldots s_{t_{2}}^{t_{2}} 
= s_{w+h-2r-2}^{w+h-1} \ldots s_{w+h-1}^{w+h-1}$, 
where 
$t_{2} = w + h - 1 + t_{0} - t_{1}$.
From this we have 
$s_{t_{2}-2r}^{t_{2}+1} \ldots s_{t_{2}}^{t_{2}+1} 
= s_{w+h-2r-1}^{w+h} \ldots s_{w+h-1}^{w+h}$. 
This is because both of 
$s_{t_{2} + 1}^{t_{2}}$, $s_{w + h}^{w + h - 1}$ 
are ${\rm Q}$ (Figure \ref{figure:fig005} (b)).
Starting with this, we have 
$s_{t_{2}-2r+1}^{t_{2}+2} \ldots s_{t_{2}-1}^{t_{2}+2} 
= s_{w+h-2r}^{w+h+1} \ldots s_{w+h-2}^{w+h+1}$, 
$s_{t_{2}-2r+2}^{t_{2}+3} \ldots s_{t_{2}-2}^{t_{2}+3} 
= s_{w+h-2r+1}^{w+h+2} \ldots$ $s_{w+h-3}^{w+h+2}$, 
\ldots, and finally 
$s_{t_{2}-r}^{t_{2}+r+1} = s_{2h-1}^{2w}$.
But $s_{2h-1}^{2w}$ is the firing state ${\rm F}$ 
because $A$ fires $C(w, h)$ at time $2w$.
This means that  $s_{t_{2}-r}^{t_{2}+r+1}$ is also 
${\rm F}$ and this is a contradiction 
because $t_{2} + r + 1 = 2w - (t_{1} - t_{0}) < 2w$.

\hfill (End of Example \ref{example:ex001})

\mynewpage

\begin{ex}
\label{example:ex003}
\end{ex}

We define four variations 
$\Gamma_{\rm ex2a}$, 
$\Gamma_{\rm ex2b}$, 
$\Gamma_{\rm ex2c}$, 
$\Gamma_{\rm ex2d}$
of FSSP.
A configuration of these variations is 
obtained from a square $C_{\text{\rm R}}(w, w)$ 
by deleting some nodes. 
Figure \ref{figure:fig020} (a), (b), (c), (d) respectively 
show forms of configurations of 
$\Gamma_{\rm ex2a}$, 
$\Gamma_{\rm ex2b}$, 
$\Gamma_{\rm ex2c}$, 
$\Gamma_{\rm ex2d}$
respectively.
A configuration is a square of width and height $w$ 
and has $w/2$ L-shaped slits or $w^{2} / 4$ holes.  
$w$ is an even number and $w = 10$ in Figure \ref{figure:fig020}.
Let $C(w)$ be such a configuration of one of the four variations.
It contains the rectangular wall $C_{\rm RW}(w, w)$.
We continue to use $p_{-2w}, \ldots, p_{0}, \ldots, p_{2w}$ 
to denote nodes in the part $C_{\rm RW}(w, w)$ of $C(w)$.

\begin{figure}[htb]
\begin{center}
\includegraphics[scale=1.0]{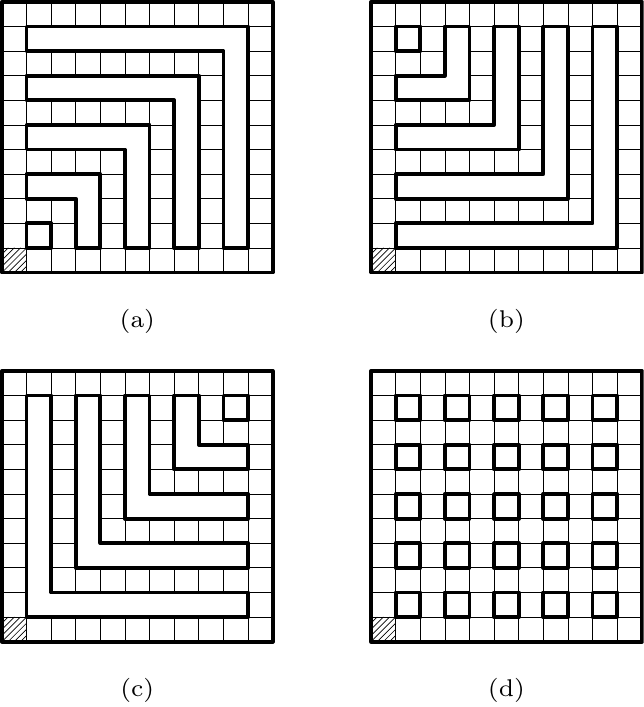}
\end{center}
\caption{(a), (b), (c), (d) respectively represent the forms of 
configurations of 
$\Gamma_{\rm ex2a}$, 
$\Gamma_{\rm ex2b}$, 
$\Gamma_{\rm ex2c}$, 
$\Gamma_{\rm ex2d}$ 
respectively.}
\label{figure:fig020}
\end{figure}

We can show that ${\rm mft}(C(w)) = 2w = {\rm rad}(C(w))$ 
for each of the four variations.
For 
$\Gamma_{\rm ex2a}$, 
$\Gamma_{\rm ex2c}$ and 
$\Gamma_{\rm ex2d}$ 
we use two messages ``$w = w_{0}$'' generated at 
$(0, w_{0})$ and  $(w_{0}, 0)$ at time $w_{0}$.
For 
$\Gamma_{\rm ex2b}$ 
we use messages ``$w = w_{0}$'' 
that are generated at corners 
$(w_{0}, 0)$, $(w_{0} - 2, 2)$, $(w_{0} - 4, 4)$, \ldots, 
$(2, w_{0} - 2)$, $(0, w_{0})$ at time $w_{0}$.

We prove that 
$\Gamma_{\rm ex2a}$, 
$\Gamma_{\rm ex2b}$, 
$\Gamma_{\rm ex2c}$ 
have no minimal-time solutions and 
$\Gamma_{\rm ex2d}$ 
has a minimal-time solution.

We can prove nonexistence of minimal-time solutions 
of $\Gamma_{\rm ex2a}$ and 
$\Gamma_{\rm ex2b}$ 
using the proof of Theorem \ref{theorem:nonexistence_rect_wall_a_b} 
for $\text{\rm RECT-WALL}[1,1]$ 
without any modifications.
For 
$\Gamma_{\rm ex2c}$ 
the following modification is necessary. 
This is because in this variation there are entrances to 
side corridors in both of the north and the east walls.

For an integer $i$, let $q_{i}$ denote $p_{-i}$.
Then $q_{0}, q_{1}, \ldots, q_{2w}$ is 
the sequence of nodes obtained by going from 
the southwest corner (the position of the general) 
to the northwest corner, and then to the northeast corner.
For any $i$ ($0 \leq i \leq 2w$) and any $t$, 
by $s_{i}^{t}$ we denote the state 
${\rm st}(q_{i}, t, C(w), \tilde{A})$.
Here $\tilde{A}$ is the minimal-time solution of 
$\Gamma_{\rm ex3c}$ which we assume to exist.

If $w$ is sufficiently large there are 
two even times $t_{0}$, $t_{1}$ such that 
$w + 2 \leq t_{0} < t_{1} \leq 2w - 2$ 
and $s_{t_{0}-1}^{t_{0}} s_{t_{0}}^{t_{0}} 
= s_{t_{1}-1}^{t_{1}} s_{t_{1}}^{t_{1}}$.
We can show that if $t_{1} \leq 2w - 4$ then 
$s_{t_{0}+1}^{t_{0}+2} s_{t_{0}+2}^{t_{0}+2} 
= s_{t_{1}+1}^{t_{1}+2} s_{t_{1}+2}^{t_{1}+2}$.
In Figure \ref{figure:fig021} (a) we explain this inference 
in the same way as we explained similar inferences 
in Figure \ref{figure:fig005} for the proof of 
Theorem \ref{theorem:yamashita_result_2}.

\begin{figure}[p]
\begin{center}
\includegraphics[scale=1.0]{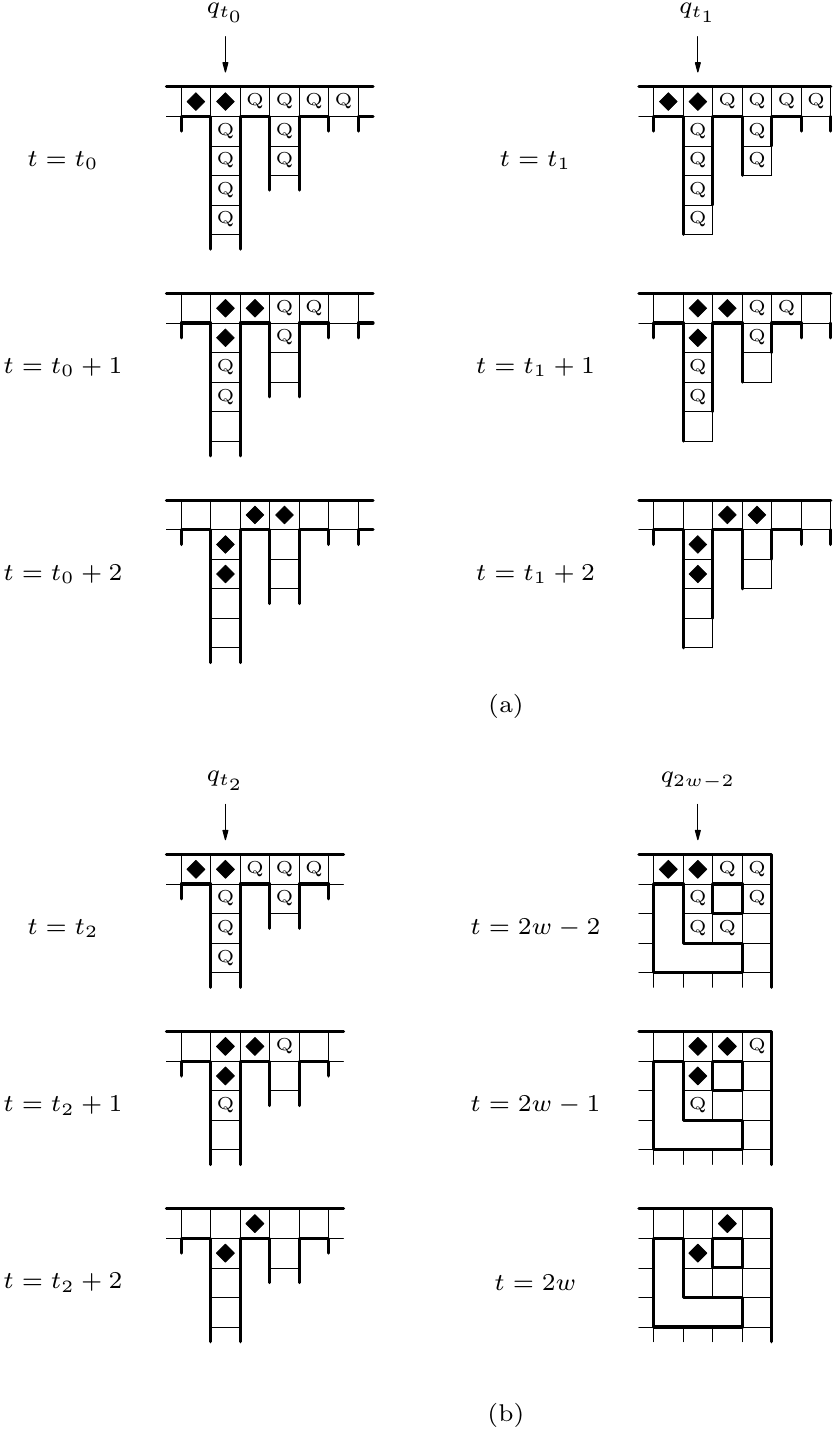}
\end{center}
\caption{(a) A diagram that shows why 
$s_{t_{0}-1}^{t_{0}} s_{t_{0}}^{t_{0}}
 = s_{t_{1}-1}^{t_{1}} s_{t_{1}}^{t_{1}}$ 
implies 
$s_{t_{0}+1}^{t_{0}+2} s_{t_{0}+2}^{t_{0}+2}
 = s_{t_{1}+1}^{t_{1}+2} s_{t_{1}+2}^{t_{1}+2}$.
(b) Comparison of changes of states of nodes near $q_{t_{2}}$ and 
that of nodes near $q_{2w - 2}$.}
\label{figure:fig021}
\end{figure}

Repeating this inference we can show that 
$s_{t_{2}-1}^{t_{2}} s_{t_{2}}^{t_{2}} = 
s_{2w-3}^{2w-2} s_{2w-2}^{2w-2}$, 
where $t_{2}$ is the even time 
$t_{2} = 2w - 2 - t_{1} + t_{0}$.
In Figure \ref{figure:fig021} (b) 
we show the change of states of nodes 
near $q_{t_{2}}$ from time $t_{2}$ to $t_{2}+2$ 
and that of nodes near $q_{2w-2}$ from time 
$2w - 2$ to $2w$.

The lower part of this figure contains two pairs of 
diamonds and one of the pairs means  
$s_{t_{2} + 1}^{t_{2} + 2} = s_{2w - 1}^{2w}$.
However $s_{2w - 1}^{2w} = {\rm F}$ because 
$\tilde{A}$ fires all nodes in $C(w)$ at time $2w$.
Therefore $s_{t_{2} + 1}^{t_{2} + 2} = {\rm F}$ 
and we have a contradiction because 
$t_{2} + 2 < 2w$.
This completes the proof of nonexistence for 
$\Gamma_{\rm ex2c}$.

\mynewpage

The idea for constructing a solution of 
$\Gamma_{\rm ex2d}$ that fires nodes in $C(w)$ at time 
${\rm mft}_{\rm ex2d}(C(w)) = 2w$ is as follows 
(see Figure \ref{figure:fig057}).

\begin{figure}[htb]
\begin{center}
\includegraphics[scale=1.0]{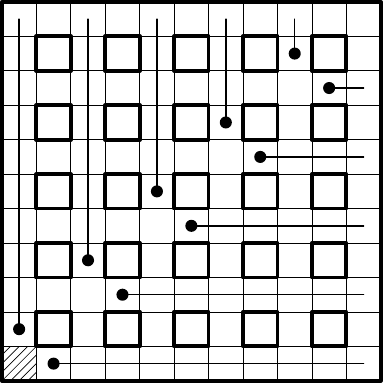}
\end{center}
\caption{The idea for constructing a minimal-time solution 
of $\Gamma_{\rm ex2d}$.}
\label{figure:fig057}
\end{figure}

For each $i$ ($0 \leq i \leq w/2$) a signal 
can locate the position $(2i, 2i)$ on the 
diagonal at time $4i$.
Therefore, for each $i$ ($0 \leq i \leq w/2 - 1$) 
signals can activate the FSSP of 
the following two lines with $w - 2i$ nodes 
at time $4i + 1$.
One is the horizontal line of 
positions $(2i+1, 2i)$, $(2i+2, 2i)$, \ldots, $(w, 2i)$.
The other is the vertical line of 
positions $(2i, 2i+1)$, $(2i, 2i+2)$, \ldots, $(2i, w)$.
All nodes in these two lines fire at time 
$(4i + 1) + \{2 (w - 2i) - 2 \} = 2w - 1$.
Moreover any node in $C(w)$ except $(w, w)$ that is not 
in these lines is adjacent to a node in the lines.
Using this idea we can construct a finite automaton 
that fires any node except $(w, w)$ at time $2w$.
To modify it so that it also fires $(w, w)$ at time $2w$ 
is easy.
\hfill (End of Example \ref{example:ex003})

\mynewpage

\section{Conclusion}
\label{section:conclusion}

We consider variations of FSSP only with respect to 
shapes of configurations (that is, variations 
that satisfy the four conditions mentioned in 
Section \ref{section:introduction}) and try to prove 
nonexistence of minimal-time solutions for them.

Concerning this problem we have some results 
that require assumptions from complexity theory 
(\cite{Goldstein_Kobayashi_SIAM_2005,
Goldstein_Kobayashi_SIAM_2012,
Kobayashi_TCS_2001}).
The recent result by Yamashita et al (\cite{Yamashita_et_al_2014}) 
restated as a result on $\text{\rm LSP}[a,b]$ 
is the first nonexistence proof without any assumption 
(as far as the author knows).
In the present paper, we proved nonexistence of 
minimal-time solutions using their idea 
for the three variations 
$\text{\rm g-LSP}[a,b]$, 
$\text{\rm RECT-WALL}[a,b]$, 
$\text{\rm g-RECT-WALL}[a,b]$.

Their idea can be used only for very special types of variations.
However, at present it is the only one tool we have 
for proving nonexistence without any assumption.
It is very challenging to extend their idea and 
to find other such ideas.

\mynewpage

We list open problems.

\medskip

\noindent
{\bf Problem 1}. For each of the following five 
variations, determine whether it has minimal-time 
solutions or not:
$\text{\rm RECT-WALL}$, 
$\text{\rm g-RECT-WALL}$, 
$\text{\rm RECT}[a, b]$ 
(excluding $\text{\rm SQ} = \text{\rm RECT}[1,1]$), 
$\text{\rm g-RECT}[a, b]$ 
(including $\text{\rm g-SQ} = \text{\rm g-RECT}[1,1]$), 
$\text{\rm LSP-C}[a,b;c,d]$.

\begin{figure}[htb]
\begin{center}
\includegraphics[scale=1.0]{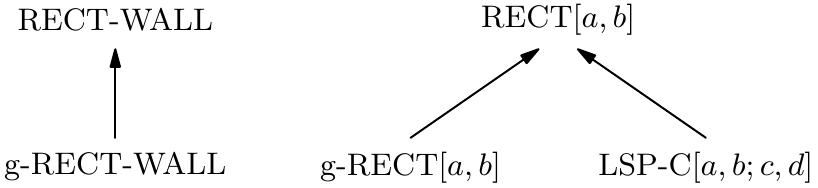}
\end{center}
\caption{Relations between the five variations of FSSP.}
\label{figure:fig056}
\end{figure}

In Figure \ref{figure:fig056} we show some relations between 
these five variations.
An arrow from $\Gamma$ to $\Gamma'$ means that 
if $\Gamma$ has a minimal-time solution then $\Gamma'$ 
also has one.

It seems that we cannot prove nonexistence for these 
five variations with Yamashita et al's idea.
Configurations of $\text{\rm RECT}[a,b]$, 
$\text{\rm g-RECT}[a,b]$ have no corridors 
such as in configurations $C_{\rm L}(w, h)$, 
$C_{\rm RW}(w, h)$ and 
$C(w)$ of $\Gamma_{\rm ex2a}$, 
$\Gamma_{\rm ex2b}$, 
$\Gamma_{\rm ex2c}$. 
For the three variations $\text{\rm RECT-WALL}$, 
$\text{\rm g-RECT-WALL}$, 
$\text{\rm LSP-C}[a,b;c,d]$ we have 
${\rm rad}(C) < {\rm mft}(C)$ and 
moreover ${\rm mft}(C) - {\rm rad}(C)$ is considerably large 
(see our comment in Example 1).

Therefore, proving nonexistence for one of these five variations 
implies finding another new idea for proving nonexistence 
without any assumption, and is really challenging.

\medskip

\noindent
{\bf Problem 2}. To determine the minimum firing time of $\text{\rm g-RECT}[a,b]$.
For the special case $\text{\rm g-SQ} = 
\text{\rm g-RECT}[1,1]$ we determined the value 
(\cite{Kobayashi_TCS_2014}).  
Its derivation was very complicated and tedious. 
However the derived value itself is very simple.
(See the survey at the beginning of 
Section \ref{section:rectangles_squares}.)
It is interesting to derive the value for the 
case $a = 1, b = 2$ and moreover for general cases 
of $a, b$.

\medskip

\noindent
{\bf Problem 3}. 
To each of the four variations of FSSP of rectangular walls 
(that is, $\text{\rm RECT-WALL}$, $\text{\rm RECT-WALL}[a,b]$, 
$\text{\rm g-RECT-WALL}$, $\text{\rm g-RECT-WALL}[a,b]$)
there naturally corresponds its ``one-way'' variation.
In it we require the information flow to be clockwise.
(Formally, we use eight types of finite automata, 
each corresponding to how information flows, 
from the south to the north, 
from the south to the east, 
from the west to the east, 
from west to the south, ..., 
from the east to the west, 
and from the east to the north.
We construct rectangular walls from copies of these 
finite automata so that information flows clockwise.)
We know that such variations have solutions 
because we already have (minimal-time) solutions for 
FSSP of one-way rings 
(see the survey at the beginning of 
Section \ref{section:rec_walls}).
The open problems are to determine their 
minimum firing times and to know whether they have 
minimal-time solutions or not.

\mynewpage

\bibliographystyle{plain}
\bibliography{ms}

\end{document}